\newtheorem{theorem}{Theorem}
\newtheorem{lemma}[theorem]{Lemma}
\newtheorem{claim}[theorem]{Claim}
\renewcommand{\setminus}{-}
\renewcommand{\P}{\mathrm{P}}
\newcommand{\NP}{\mathrm{NP}}
\newcommand{\DTIME}{\mathrm{DTIME}}
\newcommand{\ZPP}{\mathrm{ZPP}}
\newcommand{\polylog}{\mathrm{polylog}}
\newcommand{\poly}{\mathrm{poly}}
\newcommand{\dem}{\mathcal{D}}
\renewcommand{\deg}{\mathsf{deg}}
\newcommand{\mindeg}{\delta}
\newcommand{\maxdeg}{\Delta}
\newcommand{\maxleft}{\maxdeg_{\mathrm{left}}}
\newcommand{\maxright}{\maxdeg_{\mathrm{right}}}
\newcommand{\avgdeg}{\maxdeg^{\mathrm{avg}}}
\newcommand{\avgleft}{\avgdeg_{\mathrm{left}}}
\newcommand{\avgright}{\avgdeg_{\mathrm{right}}}
\newcommand{\minleft}{\mindeg_{\mathrm{left}}}
\newcommand{\minright}{\mindeg_{\mathrm{right}}}
\newcommand{\dist}{\mathsf{dist}}
\newcommand{\req}{\mathsf{req}}
\newcommand{\fixsmall}{\sigma}
\renewcommand{\Pr}{{\bf Pr}}
\newcommand{\E}{{\bf E}}
\newcommand{\hatG}{\widehat{G}}
\newcommand{\hatV}{\widehat{V}}
\newcommand{\hatW}{\widehat{W}}
\newcommand{\hatE}{\widehat{E}}
\newcommand{\squishlist}{
 \begin{list}{$\bullet$}
  { \setlength{\itemsep}{0pt}
     \setlength{\parsep}{3pt}
     \setlength{\topsep}{3pt}
     \setlength{\partopsep}{0pt}
     \setlength{\leftmargin}{1.5em}
     \setlength{\labelwidth}{1em}
     \setlength{\labelsep}{0.5em} } }
\newcommand{\squishlisttwo}{
 \begin{list}{$\bullet$}
  { \setlength{\itemsep}{0pt}
     \setlength{\parsep}{0pt}
    \setlength{\topsep}{0pt}
    \setlength{\partopsep}{0pt}
    \setlength{\leftmargin}{2em}
    \setlength{\labelwidth}{1.5em}
    \setlength{\labelsep}{0.5em} } }
\newcommand{\squishend}{
  \end{list}  }
\newcommand{\squishnum}{
 \begin{enumerate}
  { \setlength{\itemsep}{0pt}
     \setlength{\parsep}{3pt}
     \setlength{\topsep}{3pt}
     \setlength{\partopsep}{0pt}
     \setlength{\leftmargin}{1.5em}
     \setlength{\labelwidth}{1em}
     \setlength{\labelsep}{0.5em} } }
\newcommand{\squishnumend}{\end{enumerate}}
\title{Parameters of Two-Prover-One-Round Game and The Hardness of Connectivity Problems} 
\author{
Bundit Laekhanukit\thanks{
	School of Computer Science,
	McGill University, Montreal, QC, Canada. 
	\hbox{E-mail}:~{\tt blaekh@cs.mcgill.ca}
        This work was supported by the Natural Sciences and
        Engineering Research Council of Canada (NSERC) grant
        no.~288334 and 429598, by European Research Council (ERC)
        Starting Grant no.~279352 and by Harold H Helm fellowship. 
}
}
\date{\today}
\begin{document}

\maketitle

\begin{abstract}

Optimizing parameters of Two-Prover-One-Round Game (2P1R) is an
important task in PCPs literature as it would imply a smaller PCP with
the same or stronger soundness.
While this is a basic question in PCPs community, the connection
between the parameters of PCPs and hardness of approximations is
sometime obscure to approximation algorithm community.  
In this paper, we investigate the connection between the parameters of 
2P1R and the hardness of approximating the class of so-called
connectivity problems, which includes as subclasses the survivable
network design and (multi)cut problems.
Based on recent development on 2P1R by Chan (ECCC 2011) and several 
techniques in PCPs literature, we improve hardness results of some
connectivity problems that are in the form $k^\fixsmall$, for some 
(very) small constant $\fixsmall>0$, to hardness results of the form  
$k^c$ for some explicit constant $c$, where $k$ is a connectivity
parameter.  
In addition, we show how to convert these hardness into hardness
results of the form $\dem^{c'}$, where $\dem$ is the number of demand 
pairs (or the number of terminals). 
Our results are as follows.


\squishnum
\item For the rooted $k$-connectivity problem, we have hardness of 
   \[
     \left\{ \begin{array}{ll}
     k^{1/2-\epsilon} &
        \mbox{on directed graphs.} \\
     k^{1/10-\epsilon} &
        \mbox{on undirected graphs.} \\
     \dem^{1/4-\epsilon} &
        \mbox{on both directed and undirected graphs.} \\
     \end{array} \right.
  \]
  This improves upon the best known hardness of $k^{\fixsmall}$
  by Cheriyan et~al. (SODA~2012).
\item For the vertex-connectivity survivable network design problem,
  we have hardness of 
   \[
     \left\{ \begin{array}{ll}
     k^{1/6-\epsilon} &
        \mbox{on undirected graphs} \\
     \dem^{1/4-\epsilon} &
        \mbox{on both directed and undirected graphs.} \\
     \end{array} \right.
  \]
  This improves upon the best known hardness of $\Omega(k^{\fixsmall})$
  by Chakraborty et~al. (STOC~2008).
\item For the vertex-connectivity $k$-route cut problem on undirected
  graphs, we have hardness of 
   \[
     \left\{ \begin{array}{l}
     k^{1/6-\epsilon} \\
     \dem^{1/4-\epsilon} \\
     \end{array} \right.
  \]
  This improves upon the best known hardness of $k^{\fixsmall}$
  by Chuzhoy et~al. (SODA~2012).
\squishnumend
\end{abstract}

\pagebreak

\setcounter{page}{1}

\section{Introduction
	\label{sec:intro}}

Optimizing parameters of Two-Prover-One-Round Game (2P1R) is an
important task in PCPs literature as it would imply a smaller PCP with
the same or stronger soundness, which will in turn tighten hardness
results for many optimization problems. 
While this is a basic question in PCPs community, the connection
between the parameters of PCPs and hardness of approximations is
sometime obscure to approximation algorithm community.  
In this paper, we investigate the connection between the parameters of 
2P1R and the hardness of approximating the class of so-called
connectivity problems, which includes as subclasses the survivable
network design and (multi)cut problems.

Similar to 2P1R, a connectivity problem comes with several parameters,
e.g., the number of vertices $n$, a connectivity parameter $k$, and
the number of demand pairs $\dem$.
As these parameters are independent of each other, approximation
algorithms for connectivity problems are usually designed by
exploiting properties of the parameters, which means that the
approximation ratios of the algorithms depend on these terms.
By way of illustration, let consider a concrete example of 
the {\em rooted $k$-connectivity} problem on undirected graphs. 
In this problem, we are given an undirected graph $G=(V,E)$, a root
vertex $r$ and a set of terminals $T$; the goal is to find a
minimum-cost subgraph that has $k$ openly (vertex) disjoint paths from
the root vertex $r$ to each terminal $t\in T$. 
For arbitrary $k$, the best known approximation ratio of this problem
is $O(k\log{k})$ by Nutov~\cite{Nutov09b}, and it was shown by 
Cheriyan, Laekhanukit, Naves and Vetta~\cite{CLNV12} that the
dependence on $k$ cannot be taken out because the problem does not
admit $o(k^\fixsmall)$-approximation, for some (very) small constant 
$\fixsmall>0$, unless $\P=\NP$.
%
However, when $k$ is larger than the number of demands (or terminals)
$\dem$, a trivial $\dem$-approximation algorithm does exist and yields
a better approximation ratio than the $O(k\log{k})$-approximation
algorithm. Moreover, the hardness result of Cheriyan et~al. only
holds when $k$ is much smaller than $\dem$.
Thus, the approximability of the rooted $k$-connectivity problem on
undirected graphs depends on two parameters: the connectivity $k$ and
the number of demands (terminals) $\dem$, e.g., $k$ is a constant
independent of $\dem$.
Thus, to prove tighter approximation hardness of connectivity
problems, we have to consider all the parameters involved. 

Here two parameters of connectivity problems that we are interested in
are the connectivity  parameter $k$, which is the main focus in this
paper, and the number of demand pairs $\dem$. 
We consider 2P1R in its combinatorial form -- 
the {\em label-cover} problem. 
In this problem, we are given a bipartite directed graph $G=(U,W,E)$,
the set of labels (a.k.a., {\em alphabets}) $L$, and constraints which
are functions on edges $\{\pi_e:e\in{E}\}$; the goal is to 
find an assignment of labels to each vertex that satisfies all the 
constraints. 
It is known that the hardness of the label~cover problem depends on
two parameters the {\em maximum degree} $\maxdeg(G)$ of $G$ and the 
{\em alphabet-size} $|L|$. 
Thus, our goal is to investigate relationships between $k$, $\dem$ and
$\maxdeg(G)$, $|L|$. 

First, we consider the connectivity parameter $k$.
The problems whose hardness depending on the parameter $k$ are 
the rooted $k$-connectivity problem in both directed and undirected
graphs, the {\em vertex-connectivity survivable network design}
problem and the {\em vertex-connectivity $k$-route cut} problem. 
These problems have hardness of the form $k^\fixsmall$, where
$\fixsmall$ is a small constant that has not been calculated. 
(See \cite{Nutov09b,CLNV12,CCK08,CMVZ12}).
The common source of hardness of these problems is the label~cover
problem (a.k.a., 2P1R) with parallel repetition.  
The $\fixsmall$ here involved with the constant loss in the exponent when 
boosting the hardness gap using parallel repetition.
Estimating the value $\fixsmall$ is not an easy task, and even if
we can calculate this value, the constant is very small~\cite{KS11}.
By studying the constructions of all these problems, 
we observe that the connectivity parameter $k$ depends on the
maximum degree and alphabet-size of a label cover instance.  
Thus, the simplest way in proving this hardness is to find an instance
of the label~cover problem whose maximum degree and alphabet-size are
small comparing to the inverse of its soundness.
Based on the recent developments in
2P1R~\cite{Chan12,MR10,DKR12,AKS11}, we construct a label~cover
instance that satisfies the desired properties.
To be precise, we take a label~cover instance of Chan~\cite{Chan12}
that has alphabet-size close to the inverse of its soundness.
(Also, see the prior result by Khot and Safra~\cite{KS11}.)
Then we apply several reductions -- the {\em right degree} reduction
by Moshkovitz and Raz~\cite{MR10} and the {\em random sparsification} 
technique by Austrin, Khot and Safra~\cite{AKS11}\footnote{
Indeed, we are first inspired by the result of Moshkovitz and
Raz~\cite{MR10} and the result of Dinitz, Kortsarz and
Raz~\cite{DKR12}. However, due to a technical issue, we require a
technique in \cite{AKS11}, which was suggested by Siu On Chan.  
} to reduce the maximum degree of the instance.
Hence, we have an instance with small degree and small alphabet-size,
and we thus obtain the explicit exponent in the hardness of all the
problems mentioned above.

Second, we consider the parameter $\dem$, the number of demand pairs. 
The problems that we are interested in are the rooted $k$-connectivity
problem on both directed and undirected graphs, the
vertex-connectivity survivable network design problem and
the vertex-connectivity $k$-route cut problem.
By diving into the construction of these problems, we observe that
some of the demand pairs are independent and thus can be merged.
So, we have to partition the constraints (edges) of a label-cover
instance so that they have no conflict after reducing to a
connectivity problem.
We observe that such partitioning can be done using 
{\em strong edge coloring}.
To be precise, the strong edge coloring is a coloring of edges of $G$
such that, for any two edges $e,f$ with the same color, $e$ and $f$
share no endpoint, and $G$ has no edge joining an endpoint of
$e$ to an endpoint of $f$.
For example, edges $\{a,b\}$ and $\{c,d\}$ can have the same color if 
$a,b,c,d$ are all distinct vertices, and $G$ has none of the edges
$\{a,c\}$,$\{a,d\}$,$\{b,c\}$ and $\{b,d\}$.
It is known that a graph with maximum degree $\maxdeg$ has a 
strong edge coloring with $O(\maxdeg^2)$ colors. 
Thus, we can reduce the number of demands to be close to $\maxdeg(G)$,
which is thus close to the inverse of its soundness. 
%

Lastly, we would like to remark that we consider our results to be a
survey paper that connects the parameters of 2P1R to the hardness of
connectivity problems.
All the techniques used in this paper are not new and have been used
many times in literature.
The right~degree reduction was introduced by Moshkovitz and Raz in
\cite{MR10} and has been used in \cite{DH09}. 
The random sampling technique was used in PCPs literature by Goldreich
and Sudan in \cite{GS06} and was recently used by Dinitz, Kortsarz and
Raz in \cite{DKR12} to prove the hardness of the 
{\em basic $k$-spanner} problem.
Also, it has been used to reduced the degree of an instance of the
independent set problem by Austrin, Khot and Safra in \cite{AKS11}. 
Indeed, our work is inspired by the result of Moshkovitz and
Raz~\cite{MR10} and the result of Dinitz~et~al.~\cite{DKR12}. 
The graph coloring technique has been used to obtain approximation
algorithms for the rooted $k$-connectivity problem in undirected
graphs \cite{CK08,ChK09,Nutov09b}. 
Here we show that such technique can be used to show the converse,
i.e., the hardness of approximation.
(Indeed, to best of our knowledge, the strong edge coloring has not
been used in the previous literature.) 

The connectivity problems considered in this paper are as belows. 

\paragraph{The Rooted $k$-Connectivity Problem.}  
In the {\em rooted $k$-connectivity} problem, we are given a directed 
or undirected graph $G=(V,E)$ on $n$ vertices with cost $c_e$ on each
edge $e\in E$, a root vertex $r$, a set of terminals 
$T\subseteq V\setminus\{r\}$ and a connectivity requirement $k$. The
goal is to find a minimum-cost subgraph $G'=(V,E')$ of $G$ such that 
$G'$ has $k$ {\em openly (vertex) disjoint } paths from $r$ to each
terminal $t\in T$.  
This problem has been studied intensively in
\cite{CCK08,CK08,ChK08,CK09,Nutov09b,Nutov09c,ChK09,CLNV12}. 
The rooted $k$-connectivity problem is a fundamental network design
problem with vertex-connectivity requirements, and it lies at the
bottom of the complexity hierarchy of the vertex-connectivity
problems. 
In particular, the undirected rooted $k$-connectivity problem was
shown to be a special case of the {\em subset $k$-connectivity}
problem~\cite{Laekhanukit11b} and is clearly a special case of the 
{\em vertex-connectivity survivable network design} problem.
It can be seen that the same relationships also apply for the case of
directed graphs. 

The rooted $k$-connectivity problem on both directed and undirected
graphs admits a trivial $|T|$-approximation algorithm, which can be
done by applying a minimum-cost $k$-flow algorithm $|T|$ times, one for
each terminal. 
Non-trivial approximation algorithms for the rooted $k$-connectivity
problem are known only for the undirected case, and the best known
approximation ratio is $O(k\log{k})$ by Nutov~\cite{Nutov09b};
however, the approximation ratio surpasses that of the trivial
algorithm only when $k>|T|$.
On the negative side, Cheriyan, Laekhanukit, Naves and Vetta
\cite{CLNV12} recently showed that the rooted $k$-connectivity problem
on both directed and undirected graphs are hard to approximate to
within a factor of $k^\fixsmall$ for some fixed $\fixsmall>0$ 
(the constants $\fixsmall$ are different in directed and undirected
cases).
However, the constants $\fixsmall$ obtained are small and have not
been explicitly calculated. 

For the case of directed graphs, we give improved hardness of
$k^{1/2-\epsilon}$ and $\dem^{1/4-\epsilon}$ for the
rooted $k$-connectivity problem, for any constant $\epsilon>0$. 
(In fact, the $k^{1/2-\epsilon}$-hardness of this problem can be
derived from combining the result in \cite{CLNV12} and \cite{DKR12}.)
For the case of undirected graphs, the hardness are
$k^{1/10-\epsilon}$ and $\dem^{1/4-\epsilon}$, 
for any constant $\epsilon>0$, and
this also gives the same bound for the hardness of the subset
$k$-connectivity problem.
(Note that the number of demand pairs is $\dem=|T|$ for the rooted
$k$-connectivity problem and $\dem=|T|^2$ for the subset
$k$-connectivity problem.) 

\paragraph{The Vertex-Connectivity Survivable Network Design Problem.}
The {\em vertex-connectivity survivable network design} (VC-SNDP)
problem is a generalization of the rooted $k$-connectivity problem.
In this problem, we are given a directed or undirected graph
$G=(V,E)$ on $n$ vertices with a cost $c_e$ on each edge $e$ and a
connectivity requirement $\req(s,t)$ for each pair of vertices
$s,t\in{V}$.  
A vertex $s$ is called a {\em terminal} if there is a vertex $t$ such
that $\req(s,t)>0$, i.e., $s$ is a terminal if it has a positive
connectivity requirement; the set of terminals is denoted by $T$. 
The only known non-trivial approximation algorithm for this
problem due to the work of Chuzhoy and Khanna~\cite{CK09} has an
approximation ratio of $O(k^3\log{|T|})$, and the best known hardness
is $k^{\fixsmall}$, for some (very) small constant $\fixsmall>0$, due
to Chakrabarty, Chuzhoy and Khanna~\cite{CCK08}.
We give an improved hardness of $k^{1/6-\epsilon}$ and 
$\dem^{1/4-\epsilon}$ for VC-SNDP, for any constant $\epsilon>0$.

\paragraph{The Vertex-Connectivity $k$-Route Cut Problem.}
In the {\em vertex-connectivity $k$-route cut} (VC-$k$-RC)
problem, we are given an undirected graph $G=(V,E)$ on $n$ vertices
with a cost $c_e$ on each edge $e\in E$, a set of source-sink pairs
$\{(s_1,t_1),(s_2,t_2),\ldots,(s_{\dem},t_{\dem})\}\subseteq{V\times{V}}$  
and a connectivity parameter $k$.
The goal is to find a minimum-cost subset $E'\subseteq E$ of edges
such that $G\setminus E'$ has no $k$ openly disjoint $s_i,t_i$-paths
for every source-sink pairs $s_i,t_i$. 
The best known approximation guarantee for this problem is
$O(\dem\cdot{k})$ due to the work of Chuzhoy, Makarychev,
Vijayaraghavan and Zhou~\cite{CMVZ12}, and 
the best known hardness is $O(k^\fixsmall)$, for some (very) small
constant $\fixsmall>0$.
The approximation ratio is slightly better when we turn to a
bi-criteria approximation algorithm.
Chuzhoy~et~al. showed that there is an algorithm that guarantees to
find a solution $E'\subseteq{E}$ with cost at most
$O(\lambda{k}\log^{2.5}\dem\log\log\dem)$ times the optimal, 
where $\lambda$ is the maximum number of demand pairs in
which any terminal participates, and $E'$ 
{\em cuts at least $k/2$-routes}, i.e., $G\setminus{E'}$   
has no $k/2$ openly $s_i,t_i$-paths for all $i$.  
In this paper, we show that at least one of the two terms $k$ and
$\dem$ cannot be taken out. Precisely, we show that it is hard to
approximate VC-$k$-RC to within a factor of  
$k^{1/6-\epsilon}$ and $\dem^{1/4-\epsilon}$, for any constant
$\epsilon>0$. 

Our hardness results are summarized in Table~\ref{tab:results}.

\begin{table} 
\begin{center}
\begin{tabular}{|c|c|c|c|}
\hline
Problem & Graphs & In terms of $k$ ($k < \dem$) 
                 & In terms of $\dem$ ($k \geq \dem$) \\
\hline
Rooted $k$-Connectivity & Directed &
 $k^{1/2-\epsilon}$
   & $\dem^{1/4-\epsilon}$ \\
 & Undirected &
 $k^{1/10-\epsilon}$ & $\dem^{1/4-\epsilon}$ \\
\hline
Subset $k$-Connectivity & Undirected & 
 $k^{1/10-\epsilon}$ & $\dem^{1/4-\epsilon}$ \\
\hline
VC-SNDP & 
 Undirected &
 $k^{1/6-\epsilon}$ & $\dem^{1/4-\epsilon}$ \\ 
\hline
VC-$k$-Route Cut & 
 Undirected &
 $k^{1/6-\epsilon}$ & $\dem^{1/4-\epsilon}$ \\ 
\hline
\end{tabular}
\end{center}
\caption{
The table summarizes our hardness results, which hold for any
$\epsilon>0$.} 
\label{tab:results}
\end{table}



\section{Preliminaries}
\label{sec:prelim}

We use standard graph terminologies as in \cite{DiestelBook}.  
Let $G=(V,E)$ be any graph.
For any vertex $v\in V$, the degree of $v$ in $G$ is denoted by
$\deg_G(v)$.
The maximum (resp., minimum) degree of $G$, denoted by $\maxdeg(G)$ 
(resp., $\mindeg(G)$), is the maximum (resp., minimum) degree over all
vertices of $G$.
If we consider more than one graph, then we denote the set of vertices
and edges of $G$ by $V(G)$ and $E(G)$, respectively.

By a bipartite directed graph, we mean a directed graph $G=(U,W,E)$
such that every arc is directed from $U$ to $W$, i.e., an arc of $G$
is of the form $(u,w)$, where $u\in U$ and $w\in W$. 
We call vertices in $U$ {\em left vertices} and vertices in $W$ 
{\em right vertices}. 
Since each left (resp., right) vertex of $G$ has no incoming 
(resp., outgoing) arc, we abuse the term ``degree'' to mean indegree
(resp., outdegree) of left (resp., right) vertices of $G$.
By the {\em maximum (resp., minimum) left degree} of $G$, denoted by 
$\maxleft(G)$ (resp., $\minleft(G)$), we mean the maximum (resp.,
minimum) degree of left vertices of $G$.
Similar, notations are used for right vertices.
Thus, $\maxright(G)$ (resp., $\minright(G)$) denotes the maximum  
(resp., minimum) degree of right vertices of $G$.
We use a similar notations for the average degree of $G$.
The average degree of $G$ is denoted by $\avgdeg(G)$, and the average 
left and right degree of $G$ are denoted by $\avgleft(G)$ and
$\avgright(G)$, respectively.  
We say that $G$ is {\em left (resp., right) regular} if every left 
(resp., right) vertex of $G$ has the same degree. 
If $G$ is both left and right regular with degree $d_1$ and $d_2$,
then we say that $G$ is {\em $(d_1,d_2)$-regular}.
If $G$ is clear in the context, then we will omit $G$, e.g., we may
write $\deg_G(v)$ as $\deg(v)$ and write $\maxdeg(G)$ as $\maxdeg$.

By a {\em matching} $M$ of a (directed) graph $G$, we mean a set
of edges (resp., arcs) such that no two edges (resp., arcs) in $M$
share an endpoint, and by {\em induced matching} $I$ in $G$, we mean a
matching such that no edge (resp., arc) in $G$ joins endpoints of
edges (resp., arcs) in $I$. 
Thus, a subgraph of $G$ induced by such $I$ is also a matching. 
A {\em strong edge coloring} of $G$ is a partition
$E_1,E_2,\ldots,E_\ell$ of sets of edges (resp., arcs) of $G$ such
that each $E_i$ is an induced matching in $G$. 
The smallest number $\ell$ such that $G$ has an $\ell$-strong edge
coloring is called the {\em strong chromatic index} of $G$, denoted by
$\chi'_S(G)$.

All of our hardness results come from the same source, 
{\em the label~cover} problem (a.k.a, 2P1R).
Hence, we devote the next section to discuss the label~cover problem.   

\subsection{The Label~Cover Problem}
\label{sec:label-cover}

The {\em (maximum) label~cover} problem ({\em the projection game}) is
defined as follows. 
We are given a directed bipartite graph $G=(U, W, E)$ on $n$ vertices,  
two sets of labels (a.k.a, alphabets) $L_1$ (for vertices in $U$) and 
$L_2$ (for vertices in $W$), and a {\em constraint} $\pi_e$ on each
arc $e$, which is a {\em projection}\footnote{
The constraints of the label~cover problem can be relations instead of 
projections; however, here we define the label~cover problem as the
projection game.
}
$\pi_e:L_1\rightarrow L_2$.
A {\em labeling} $(f_1,f_2)$ is a pair of functions $f_1:U\rightarrow
L_1$ and $f_2:W\rightarrow L_2$ assigning a label to each vertex of
$U$ and $W$, respectively.
We say that $(f_1,f_2)$ {\em covers} an arc $(u,w)\in E$ if
$\pi_e((f_1(u))=f_1(w)$. 
The goal in the maximum label~cover problem is to find a labeling that
maximizes the number of arcs covered.
For notational convenience, we shall denote an instance of the
label~cover problem by $(G=(U,W,E),\{\pi_e:e\in{E}\},L_1,L_2)$.

The gap version of the maximum label~cover problem is the problem of
deciding whether a given instance of the maximum label~cover problem
is one of the following two cases:
\squishlist
\item {\sc Yes-Instance:} There is an labeling covering at least
  $(1-\epsilon)$ fraction of all the arcs.
\item {\sc No-Instance:} There is no labeling covering more than
  $\gamma$ fraction of all the arcs.
\squishend
We call $1-\epsilon$ and $\gamma$ the {\em completeness} and the 
{\em soundness} of the label~cover instance, respectively.
If $\epsilon=0$, then we say that a (gap) label~cover instance has
{\em perfect completeness}; otherwise, we say that an instance has
{\em imperfect completeness}. 
It can be seen that NP-hardness of the gap version of the label~cover
problem implies the hardness of the maximum one. 
Thus, we shall abuse the term ``maximum label~cover'' to also mean
the gap label~cover problem. 

For our purpose, we need a minimization version of the label~cover
problem, which can be defined by allowing each vertex to have more
than one label, and the goal is to minimize the total cost of
labels used over all vertices.  
To be precise, we define the {\em minimum-cost label~cover}
problem to be the weighted counter part of the maximum label~cover 
problem. 
The minimum label~cover problem was defined in \cite{ABSS97}, and it  
has an equivalent form known as the {\em Min-Rep} problem as defined
in \cite{Kortsarz01}. 
The input of this problem is the same as that of the maximum
label~cover problem except that we also have a cost $c_1$ on each
label $a\in L_1$ and a cost $c_2$ on each label $b\in L_2$. 
The labeling is relaxed as a pair of functions
$(f_1,f_2)$, where $f_1:U\rightarrow 2^{L_1}$ and 
$f_2:W\rightarrow 2^{L_2}$, i.e., we are allowed to assign more than
one labels to each vertex.
A labeling $(f_1,f_2)$ {\em covers} an arc $e=(u,w)$ if there are
labels $a\in f_1(u)$ and $b\in f_2(w)$ such that $\pi_e(a)=b$. 
The goal in the minimum-cost label~cover problem is to find a
labeling $(f_1,f_2)$ that covers all the arcs and minimizes the cost 
$c(f_1,f_2)=\sum_{u\in{U}}c_1\cdot|f_1(u)|+\sum_{w\in{W}}c_2\cdot|f_2(w)|$. 

Note that there is a standard technique that transforms the hardness
of the maximization version of the label~cover problem to the 
minimum-cost version. 
(See Appendix~\ref{sec:max-to-min} for more detail.)
Thus, it suffices to consider the maximum label~cover problem.

\section{Relationships between Label~Cover and with Connectivity Problems}
\label{sec:apps}

Here we show the relationships between the parameters of the
label~cover problem (2P1R) with the hardness of approximating
connectivity problems.

First, we survey relationships between the hardness in terms of the
connectivity parameter $k$ of the connectivity problems and the
parameters of the minimum-cost label~cover problems.

\begin{theorem}[\cite{CLNV12,CCK08,CMVZ12}]
\label{thm:lc-to-conn}
Given an instance
$(G=(U,W,E),\{\pi_e:e\in{E}\},L_1,L_2,c_1,c_2)$ of the minimum-cost
label~cover problem, there are polynomial-time approximation
preserving reductions that output
\squishlist
\item An instance of the rooted $k$-connectivity problem on directed
  graphs with $k=\maxdeg(G)$. 
\item An instance of the rooted $k$-connectivity problem on undirected
  graphs with
  $k=O(\maxdeg(G)^3\cdot\max\{|L_1|,|L_2|\} + \maxdeg(G)^4)$.
\item An instance of the vertex-connectivity survivable network design
  problem on undirected graphs with a maximum requirement
  $k=O(\maxdeg(G)\cdot\max\{|L_1|,|L_2|\} + \maxdeg(G)^2)$. 
\item An instance of the vertex-connectivity $k$-route cut problem on
  undirected graphs with $k=O(\maxdeg(G)\cdot\max\{|L_1|,|L_2|\})$.
\squishend
\end{theorem}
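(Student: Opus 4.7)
The plan is to handle the four items separately, since each relies on a distinct known reduction: items~1 and~2 come from~\cite{CLNV12}, item~3 from~\cite{CCK08}, and item~4 from~\cite{CMVZ12}. For each item my task is not to invent a reduction but to trace through the published construction and read off how the resulting connectivity parameter $k$ depends on $\maxdeg(G)$ and $\max\{|L_1|,|L_2|\}$, so that subsequent improvements to the underlying label~cover instance translate directly into improved connectivity hardness.

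For item~1, I would reproduce the directed rooted $k$-connectivity reduction of~\cite{CLNV12}. The construction forms a layered directed graph: a root $r$ at the top, a ``left-label'' layer with a node $[u,a]$ for every $u\in U$ and $a\in L_1$, the vertices $U$ and $W$ as intermediate layers, a ``right-label'' layer $[w,b]$ for every $w\in W$ and $b\in L_2$, and one terminal per right vertex at the bottom. Zero-cost arcs carry the root down to each label choice and each terminal up to its label layer, while unit-cost arcs encode each projection $\pi_e$. The connectivity requirement at each terminal equals the in-degree of the corresponding right vertex in $G$, so setting $k=\maxdeg(G)$ suffices uniformly. Approximation-preservation follows because $k$ openly disjoint paths to each terminal must pass through every one of its incoming arcs, forcing a consistent labeling on those arcs.

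For items~2--4, the graph is undirected or involves demand pairs rather than a single root, and one must prevent undirected ``shortcuts'' from bypassing the intended gadget structure. Following the cited works, each vertex of the directed gadget is replaced by an independent set and each arc by a complete bipartite bundle, whose sizes are calibrated so that any rogue path consumes more vertices than an intended canonical path. Counting the canonical paths through each bundle then yields the stated values of $k$: for item~2 we must guard both label axes and both traversal directions, producing the cubic term $\maxdeg^3\max\{|L_1|,|L_2|\}$ plus a $\maxdeg^4$ padding term; for items~3 and~4 we only need to guard a single demand pair at a time, giving the smaller values $O(\maxdeg\max\{|L_1|,|L_2|\}+\maxdeg^2)$ and $O(\maxdeg\max\{|L_1|,|L_2|\})$ respectively, the last losing the $\maxdeg^2$ correction because cuts do not require the extra shielding that vertex connectivity does.

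The main obstacle is verifying the soundness direction in the undirected cases: one must show by a Menger-type vertex-cut argument that every cheap solution (respectively every small $k$-route cut) is forced to respect the gadget structure, so that it induces a labeling of the label~cover instance covering many constraints. Once the blow-up factors are chosen large enough that rogue paths are strictly more expensive than canonical ones, approximation-preservation is immediate from the completeness/soundness gap of the label~cover instance, and the four stated dependencies of $k$ on $\maxdeg(G)$ and $\max\{|L_1|,|L_2|\}$ can be read off directly from the gadget.
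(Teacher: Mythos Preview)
Your high-level plan---trace through the cited constructions and read off how $k$ depends on $\maxdeg(G)$ and the alphabet sizes---is exactly what the paper does; the theorem is a survey statement and the appendix simply reproduces the reductions. However, the specific constructions you describe are not the ones in the cited works, and in a couple of places your description would not give the stated bounds.

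For item~1, the construction in~\cite{CLNV12} has one terminal $t_{i,j}$ \emph{per arc} $(u_i,w_j)$ of the label-cover graph, not one per right vertex. The parameter $k=\maxdeg(G)$ arises because for each such terminal one adds ``padding arcs'' $(u_{i'},t_{i,j})$ for every other $i'$ with $(u_{i'},w_j)\in E$, together with enough copies of $(r,t_{i,j})$ to bring the indegree up to exactly $\maxdeg(G)$; all but one of the $k$ openly disjoint $r,t_{i,j}$-paths are then forced through these padding arcs, leaving a single canonical path $r,u_i,A_i,B_j,w_j,t_{i,j}$ that encodes a label choice. Your ``one terminal per right vertex with $k$ equal to its indegree'' variant is a different gadget and you would need to argue its correctness separately.

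For items~2--4, the mechanism is not ``replace each vertex by an independent set and each arc by a bipartite bundle.'' In all three reductions the base gadget is kept essentially intact; what changes is that each source--sink pair $(s_{i,j},t_{i,j})$ (or $(r,t_{i,j})$) is equipped with an explicit \emph{padding set} $Z_{i,j}\cup Y_{i,j}$ of vertices adjacent to both endpoints, and $k$ is set to $|Z_{i,j}\cup Y_{i,j}|+1$. The soundness argument is that $k-1$ of the required disjoint paths are absorbed by the padding set, so the remaining path must be canonical. The stated asymptotics for $k$ then come from \emph{counting} $Z_{i,j}$ and $Y_{i,j}$: for VC-SNDP, $Z_{i,j}$ collects the label copies $A_{i'},B_{j'}$ at distance one (size $O(\maxdeg\cdot\max\{|L_1|,|L_2|\})$) and $Y_{i,j}$ collects source/sink vertices at line-graph distance at most two (size $O(\maxdeg^2)$); for the undirected rooted problem an extra clique $X_{i,j}$ of size $|Z_{i,j}|$ is needed, which multiplies $|Y_{i,j}|$ by another $|Z_{i,j}|$ and produces the larger exponents; for $k$-route cut the gadget is built from paths $P_j,Q_{i,j}$ and $Z_{i,j}$ already has size $O(\maxdeg\cdot\max\{|L_1|,|L_2|\})$ with no separate $\maxdeg^2$ term. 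Your explanation that the $\maxdeg^2$ term disappears ``because cuts do not require extra shielding'' is not the actual reason, and the bundle-blowup picture will not reproduce these counts.
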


The hardness in terms of the connectivity parameter $k$ can be
transformed into hardness in terms of the number of demand pairs
$\dem$. 
The parameter that involves with hardness in this term is the degree
of the label~cover instance. 
We claim that, for each of the problems we consider, two demand pairs
$(s_1,t_1)$ and $(s_2,t_2)$ are {\em independent} if and only if they
come from two different constraints (arcs) $(u_1,w_1)$ and $(u_2,w_2)$
of the label~cover instance such that $(u_1,w_1)$ and $(u_2,w_2)$
forms an induced matching, which thus can have 
the same ``strong edge color''.
So, we can partition the arcs of the label~cover instance using strong
edge coloring and merge source-sink pairs with the same color.
Thus, we have the following theorem.


\begin{theorem} \label{thm:lc-to-demands}
For each of the following problems, say $\Pi$, 
\squishlist
\item The rooted $k$-connectivity problem on directed graphs,
\item The rooted $k$-connectivity problem on undirected graphs,
\item The vertex-connectivity survivable network design problem on
  undirected graphs,
\item The vertex-connectivity $k$-route cut problem on undirected
  graphs,
\squishend
there is a polynomial-time reduction that, given 
an instance $(G=(U,W,E),\{\pi_e:e\in{E}\},L_1,L_2,c_1,c_2)$ of the
minimum-cost label~cover problem, outputs an instance of the problem
$\Pi$ with the number of demand pairs $\dem=2\maxdeg(G)^2$. 
\end{theorem}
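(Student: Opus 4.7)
The plan is to start from the reductions of Theorem~\ref{thm:lc-to-conn} and then postprocess the resulting instance by identifying demand pairs inside each color class of a strong edge coloring of the label-cover graph $G$. Concretely, I would first fix a strong edge coloring $E = E_1 \cup \cdots \cup E_\ell$ of $G$ with $\ell \le 2\maxdeg(G)^2$ colors; such a coloring can be produced greedily, since in the strong-coloring sense each arc conflicts with at most $2\maxdeg(G)(\maxdeg(G)-1)$ other arcs. I would then build the connectivity instance by invoking Theorem~\ref{thm:lc-to-conn}, except that whenever two arcs $e,e' \in E_i$ would contribute two separate demand pairs $(s_e,t_e)$ and $(s_{e'},t_{e'})$, I identify $s_e$ with $s_{e'}$ and $t_e$ with $t_{e'}$ into a single pair $(s^{(i)},t^{(i)})$ indexed by the color class. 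For the rooted $k$-connectivity problem only the terminal sides are merged, since the root is already shared.

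The correctness argument to establish is that this identification preserves both completeness and soundness. Completeness is the easier direction: any solution for the per-arc instance certifies connectivity between each $(s_e,t_e)$ and therefore also between the merged endpoints $(s^{(i)},t^{(i)})$ of each color class. The soundness direction is the crux: I must show that any feasible solution to the merged instance still encodes a labeling that covers many arcs of $G$. This hinges on a vertex-disjointness lemma stating that if $(u_1,w_1),\ldots,(u_t,w_t)$ forms an induced matching in $G$, then the $k$-path bundle realizing connectivity between $s^{(i)}$ and $t^{(i)}$ must split into $t$ independent bundles, one per arc. The splitting follows because the construction behind Theorem~\ref{thm:lc-to-conn} attaches each arc gadget to the global graph only through vertex-bundles associated to its two endpoints, and induced-matching arcs share no such bundle.

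The main obstacle is verifying this decomposition uniformly across the four problems, since the gadget of Theorem~\ref{thm:lc-to-conn} is slightly different in each case. The most delicate case is VC-$k$-RC, where cutting the merged pair must be equivalent to cutting each of the original pairs simultaneously; this equivalence requires that every $s^{(i)}$-$t^{(i)}$ path in the merged graph passes through exactly one arc gadget, a fact that again follows from the induced-matching property applied to the bottleneck vertices of the gadget. Once the per-problem decomposition is in place, the promised bound $\dem = \ell \le 2\maxdeg(G)^2$ follows immediately from the bound on the strong chromatic index.
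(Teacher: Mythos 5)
Your overall plan---fix a strong edge coloring of $G$ with at most $2\maxdeg(G)^2$ classes and merge the demand pairs belonging to each induced matching---is exactly the paper's strategy for the rooted $k$-connectivity problem on directed graphs and for the vertex-connectivity $k$-route cut problem (Lemma~\ref{lmm:ind-terms} and Section~\ref{sec:hardness-demands}). The divergence, and the gap, lies in how you handle the undirected rooted problem and VC-SNDP. You propose to merge directly inside the undirected gadgets of Theorem~\ref{thm:lc-to-conn}, justified by the claim that each arc gadget is attached to the rest of the graph only through vertex bundles associated with its two endpoints, so that induced-matching arcs share no bundle. That claim fails for the undirected rooted construction: there the padding sets $Y_{i,j}$ and $Z^2_{i,j}$ consist of the cliques $X_{i',j'}$ and the terminals $t_{i',j'}$ of \emph{all} arcs $i'j'$ with $\dist(ij,i'j')\le 2$ in the line graph. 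Two arcs forming an induced matching can be at line-graph distance exactly three, and then some third arc $i''j''$ lies within distance two of both, so their gadgets share the clique $X_{i'',j''}$ (and possibly label sets $A_{i''}$ or $B_{j''}$). After merging, the two families of $k$ openly disjoint paths collide at these shared vertices, the merged terminal cannot attain the requirement $k|T_C|$, and completeness breaks. Repairing this by only putting in one class arcs that are far apart in the line graph would require on the order of $\maxdeg(G)^4$ classes, destroying the claimed bound $\dem=2\maxdeg(G)^2$.

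The paper avoids this issue altogether: for the undirected rooted problem it performs no merging inside the undirected gadget, but instead transfers the already-merged \emph{directed} instance through the approximation-preserving, terminal-preserving reduction of Lando and Nutov (Theorem~\ref{thm:dir-undir}), and the VC-SNDP bound then follows because undirected rooted $k$-connectivity is a special case of VC-SNDP. You should take that route (or accept a worse exponent in $\dem$ for those two problems). A further caution about your key disjointness claim: even in the directed construction the gadget of $t_{i,j}$ is attached to every left vertex $u_{i'}$ adjacent to $w_j$ via padding arcs, and two right vertices of an induced matching may share such a left neighbor; so when merging you should rebuild the padding of the merged terminal (one in-arc per padding neighbor in the union, the remainder padded by copies of $(r,T_C)$) rather than simply keeping all original padding arcs, since otherwise two in-arcs from the same $u_{i''}$ cannot be used by openly disjoint paths.
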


See Appendix~\ref{sec:dir-rooted}, \ref{sec:undir-rooted},
\ref{sec:vc-sndp}, and \ref{sec:k-route-cut} for the
full proofs and discussions.

As we will show in the next section, the label~cover instance of
Chan~\cite{Chan12} can be modified so that it has degree close to
the inverse of its soundness. 
(See Theorem~\ref{thm:small-strong-lc}). 
We apply a standard technique to transform the hardness of the
maximization version of the label~cover problem to the 
minimum-cost version. 
(See Appendix~\ref{sec:max-to-min} for more detail.)
Then we have the following theorem.  

\begin{theorem} \label{thm:mincost-small-strong-lc}
For any constants $q>0$ and $\epsilon>0$, given an instance  
$(G=(U,W,E),\{\pi_e:e\in E\},L_1,L_2)$ of the maximum label~cover
such that $|L_1|,|L_2|\leq q^2$, $\maxdeg(G) = \Theta(q)$ and 
$\avgdeg(G) = \Theta(q)$, unless $\NP=\ZPP$, it is hard to approximate
the minimum-cost label~cover problem to within a factor of
$q^{1/2-\epsilon}$. 
\end{theorem}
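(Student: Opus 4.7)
The plan is to feed the maximum label cover instance promised by Theorem~\ref{thm:small-strong-lc} into the standard max-to-min reduction summarized in Appendix~\ref{sec:max-to-min}. By hypothesis, Theorem~\ref{thm:small-strong-lc} produces, under $\NP\neq\ZPP$, a projection label cover with $|L_i|\leq q^2$, both maximum and average degree $\Theta(q)$, perfect completeness in the YES case, and soundness $\gamma = q^{-1+o(1)}$ in the NO case. This soundness arises because one begins from Chan's label cover, whose alphabet size is essentially $1/\gamma$, and then applies the Moshkovitz--Raz right-degree reduction together with the Austrin--Khot--Safra random sparsification; these drive the degrees down to $\Theta(q)$ while keeping the alphabet at $O(q^2)$ and bleeding only a $q^{o(1)}$ factor into the soundness.

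Next, I would apply the standard max-to-min transformation. Under this transformation a labeling assigns subsets $f_1(u)\subseteq L_1$, $f_2(w)\subseteq L_2$ to each vertex and pays per label assigned. The YES labeling of the maximum label cover (a single label per vertex covering every edge) immediately gives a feasible min-cost solution of cost $\Theta(|U|+|W|)$. The core of the argument is the matching NO-case lower bound: any multi-labeling $(f_1,f_2)$ covering every edge must pay $\Omega\bigl((|U|+|W|)\cdot q^{1/2-\epsilon}\bigr)$.

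For the NO-case bound I would use the standard projection-game sampling argument. Writing $s_u=|f_1(u)|$ and $s_w=|f_2(w)|$, a uniformly random single-label sample from $(f_1,f_2)$ satisfies each covered edge $(u,w)$ with probability at least $1/(s_us_w)$. Averaging over edges and invoking the max-LC soundness yields $\sum_e 1/(s_{u_e}s_{w_e}) \leq \gamma |E|$. Coupling this with Cauchy--Schwarz and the near-biregularity $\avgdeg(G)=\Theta(\maxdeg(G))=\Theta(q)$ produces the lower bound $\sum_u s_u+\sum_w s_w = \Omega\bigl((|U|+|W|)/\sqrt{\gamma}\bigr)= \Omega\bigl((|U|+|W|)\cdot q^{1/2-o(1)}\bigr)$. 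Absorbing the $o(1)$ into $\epsilon$ yields the claimed $q^{1/2-\epsilon}$ hardness factor and completes the reduction.

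The step I expect to demand the most care is tracking the soundness $\gamma$ tightly against the degree $\Theta(q)$ through the chain of reductions behind Theorem~\ref{thm:small-strong-lc}, since any slack of the form $\gamma\cdot\maxdeg(G)\gg 1$ would degrade the final exponent. A secondary subtlety is preserving approximate biregularity after the random sparsification, because this is what makes the Cauchy--Schwarz step in the NO-case analysis sharp; if left and right degrees diverged by more than a constant factor we would lose another polynomial-in-$q$ factor in the gap.
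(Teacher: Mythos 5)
Your proposal follows essentially the same route as the paper: plug the modified Chan instance of Theorem~\ref{thm:small-strong-lc} into the standard max-to-min transformation of Lemma~\ref{lmm:max-to-min}, whose soundness analysis is exactly your single-label sampling and averaging argument exploiting near-biregularity. One small correction: Theorem~\ref{thm:small-strong-lc} gives only completeness $1-\epsilon$, not perfect completeness, which is why the paper states Lemma~\ref{lmm:max-to-min} for imperfect completeness and imposes $\epsilon\cdot|E|\leq\min\{|U|,|W|\}$ so that the YES-case multi-labeling can patch the uncovered arcs at cost $O(|U|+|W|)$ --- a detail your YES-case bound silently assumes away but which does not otherwise affect the argument.
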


By substituting the bound in Theorem~\ref{thm:mincost-small-strong-lc}
to Theorem~\ref{thm:lc-to-conn} and Theorem~\ref{thm:lc-to-demands},
we prove the results in Table~\ref{tab:results}. 



\section{Modifying The Label~Cover Instance}
\label{sec:transform-construct-reg-lc}

In this section, we show how to construct a label~cover instance with 
strong soundness, small degree and small alphabet-size.
In particular, we prove the following lemma.

\begin{lemma} \label{lmm:degree-reduction}
Let $q>0$ be a constant.
There is a randomized polynomial-time algorithm that reads as input 
an instance $(G=(U,W,E),\{\pi_e:e\in E\},L_1,L_2)$ of the maximum
label-cover problem with the following properties:
\squishlist
\item The alphabet-size is $\max\{|L_1|,|L_2|\}\leq q$.
\item The graph $G$ has regular left degree $\maxleft=\poly(q)$. 
\item The completeness is $1-\epsilon$, for any $\epsilon>0$. 
\item The soundness is $\gamma=1/\poly(q)$.
\squishend
outputs an instance $(G'=(U',W',E'),\{\pi_e:e\in E\},L_1,L_2)$ of the
maximum label-cover problem with completeness $1-\epsilon$ and
soundness $\gamma'=\Theta(\gamma)$ and
$\maxdeg(G')\leq{O((1/\gamma)\log(1/\gamma))}$.
\end{lemma}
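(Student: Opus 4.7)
The plan is to apply random arc sparsification in the spirit of Austrin, Khot and Safra~\cite{AKS11}. I would set a sampling probability
\[
p \;=\; \frac{C\,\log(1/\gamma)}{\gamma\,\maxleft}
\]
for a sufficiently large constant $C$, and define $G'=(U,W,E')$ by independently retaining each arc of $G$ with probability $p$, keeping alphabets, projection constraints, and costs unchanged. Intuitively, $p$ is chosen to be the smallest probability for which Chernoff-type concentration is strong enough to survive a union bound over all labelings.

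For completeness, I would fix a labeling $(f_1,f_2)$ covering at least $(1-\epsilon)|E|$ arcs in $G$, apply Chernoff separately to $|E'|$ and to the number of retained covered arcs (both concentrating tightly around $p|E|$ and $p(1-\epsilon)|E|$), and conclude that the same labeling covers at least a $(1-\epsilon-o(1))$-fraction of $E'$ with high probability; the $o(1)$ slack can be absorbed into $\epsilon$.

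For the maximum-degree bound, I would invoke the (standard) assumption that $G$ is also approximately right-regular with $\maxright=O(\maxleft)$---an assumption ensured by the preceding Moshkovitz--Raz right-degree reduction~\cite{MR10}---so that every vertex has expected retained degree $O(p\,\maxleft)=O((1/\gamma)\log(1/\gamma))$. A Chernoff bound plus a union bound over the $O(|U|+|W|)$ vertices then yields the desired bound on $\maxdeg(G')$, since $p\,\maxleft=\Theta(\log(1/\gamma)/\gamma)$ dominates $\log n$ throughout the regime of interest.

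The main obstacle is the soundness argument. For any fixed labeling $(f_1,f_2)$, let $S\subseteq E$ be the arcs it covers in $G$; by hypothesis $|S|\leq \gamma|E|$. Two-sided Chernoff bounds give $|S\cap E'|\leq 2p\gamma|E|$ and $|E'|\geq \tfrac12 p|E|$ except with probability at most $\exp(-\Omega(p\gamma|E|))$, forcing the covered fraction in $G'$ to be at most $4\gamma$, which we take as the new soundness $\gamma'=\Theta(\gamma)$. The delicate step is beating a union bound over all $\leq q^{|U|+|W|}$ labelings: using $|E|=\maxleft\cdot|U|$ and $\gamma=1/\poly(q)$ (so $\log(1/\gamma)=\Theta(\log q)$), the Chernoff exponent becomes $\Omega(p\gamma\maxleft|U|)=\Omega(|U|\log q)$, which dominates $\log(q^{|U|+|W|})=\Theta(n\log q)$ once $C$ is chosen large enough. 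This balancing of the Chernoff deviation against the combinatorial count of labelings is what pins down the precise choice of $p$ and, in turn, the resulting $O((1/\gamma)\log(1/\gamma))$ max-degree bound.
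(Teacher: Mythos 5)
Your overall strategy (sample each arc independently with probability roughly $\gamma^{-1}\log(1/\gamma)/\maxdeg$ and run a Chernoff/union-bound analysis) is the same random-sparsification route the paper takes, but the step where you conclude the bound on $\maxdeg(G')$ has a genuine gap. In this lemma $q$ is a \emph{constant}, so $\gamma=1/\poly(q)$ and the expected retained degree $p\cdot\maxleft=\Theta(\gamma^{-1}\log(1/\gamma))$ are constants independent of $n$, whereas $n\to\infty$. Hence the per-vertex Chernoff failure probability is only $\exp(-\Omega(\gamma^{-1}\log(1/\gamma)))$, a fixed constant, and your claim that $p\,\maxleft$ ``dominates $\log n$'' is false in the regime of the lemma; the union bound over the $|U|+|W|$ vertices does not go through, and in fact with high probability a (small) constant fraction of vertices will exceed twice their expected degree. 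This is exactly why the paper's sparsification lemma (Lemma~\ref{lmm:sparse}) only controls the \emph{average} degree, and why it needs the extra step ``Removing Vertices with Large Degree'': delete every vertex of degree above $2\gamma^{-1}\log(\max\{|L_1|,|L_2|\})$, bound the number of deleted vertices via Markov's inequality (repeating $O(\log n)$ times to boost the success probability), and then argue separately that discarding those vertices and their incident arcs changes completeness and soundness only by constant factors because the deleted arcs are an exponentially small fraction of $E'$. Your proof is missing this removal step and the accompanying completeness/soundness accounting.

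A secondary issue is the bookkeeping in your soundness union bound. After the Moshkovitz--Raz step the right side has $|W|=\maxleft\cdot|U|=\poly(q)\,|U|$ vertices (each right vertex is split into $\deg(w)$ copies), so the number of labelings is $q^{\Theta(\poly(q)|U|)}$ while your Chernoff exponent is only $\Theta(C|U|\log q)$; making the union bound work then forces $C=\poly(q)$, which inflates the degree bound to $\poly(q)\cdot\gamma^{-1}\log(1/\gamma)$ and destroys the quantitative content of the lemma (the whole point is that the degree is within a $\log$ factor of $1/\gamma$, since this determines the explicit hardness exponents). The paper avoids this by an intermediate step you skipped: after the right-degree reduction it duplicates left vertices to make the graph $\maxdeg$-regular with $|U|=|W|$, so that the entropy of labelings ($n\log\max\{|L_1|,|L_2|\}$) matches the expected number of covered sampled arcs up to a factor $2$, and a constant deviation factor suffices. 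You would need either this regularization or a sampling probability tied to each vertex's degree to recover the stated $O((1/\gamma)\log(1/\gamma))$ bound.
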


The following theorem is due to the work of Chan~\cite{Chan12}.

\begin{theorem}[\cite{Chan12}] \label{thm:strong-lc}
For any constants $q>0$ and $\epsilon>0$, given an instance  
$(G=(U,W,E),\{\pi_e:e\in E\},L_1,L_2)$ of the maximum label~cover
such that $|L_1|,|L_2|\leq q^2$ and $\maxleft(G)=q$, it is NP-hard to
distinguish between the following  
two cases.
\squishlist
\item {\sc Yes-Instance:} There is an labeling covering at least
  $(1-\epsilon)$ fraction of all the arcs.
\item {\sc No-Instance:} There is no labeling covering more than
  $O(\log{q}/q)$ fraction of all the arcs.
\squishend
\end{theorem}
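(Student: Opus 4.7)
The plan is to obtain Theorem~\ref{thm:strong-lc} by amplifying the soundness of a weak label~cover instance while keeping the alphabet size within $q^2$. My starting point would be a standard label~cover instance from the PCP theorem, in projection form, with (near-)perfect completeness, some absolute constant soundness $\gamma_0 < 1$, constant alphabet size, and left-regular structure; such an instance is available, e.g., from Moshkovitz--Raz \cite{MR10} or from Dinur's gap amplification of 3SAT. The remaining task is to drive the soundness down to $O(\log q / q)$ while ensuring the alphabet stays within $q^2$ and the left degree is exactly $q$.

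The natural attack, Raz-style $t$-fold parallel repetition, reduces the soundness to $\gamma_0^{\Omega(t)}$ but inflates the alphabet to $|L|^t$. To reach soundness $\sim 1/q$ one would need $t \approx \log q$, which blows the alphabet past $q^2$ by a polynomial factor. The main obstacle, and the key content of Chan's construction \cite{Chan12} (closely related to Khot--Safra \cite{KS11}), is to avoid this polynomial blowup. The approach I would pursue is to compose the outer game with a short inner verifier built from pairwise-independent subgroups of a structured abelian group: each composition step reduces the soundness by a constant factor while enlarging the alphabet only by a multiplicative constant, so after $\Theta(\log q)$ compositions the alphabet size is $O(q^2)$ and the soundness is $O(\log q / q)$. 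The technically hardest step is the soundness analysis, which must show that any labeling covering more than an $O(\log q / q)$ fraction of the composed constraints can be decoded back into a labeling covering more than $\gamma_0$ fraction of the base constraints, contradicting the hardness of the starting point; the pairwise-independence of the subgroup is precisely what makes the associated Cauchy--Schwarz / second-moment bound tight enough to yield soundness proportional to $\log q / q$ rather than some larger polynomial.

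Finally, the exact structural requirements are routine to enforce. Left-regularity with $\maxleft(G) = q$ follows from a standard replication trick on both sides of the bipartition which changes $|U|$ and $|W|$ by at most a polynomial factor and leaves the fraction of covered arcs unchanged. The projection property is preserved throughout, since both the outer PCP and the inner composition use only projection constraints. Completeness $1-\epsilon$ for any constant $\epsilon > 0$ is the typical cost of decoding-based composition: by tuning the inner verifier's parameters, the completeness loss introduced in each of the $\Theta(\log q)$ composition steps can be kept small enough that the total loss is below $\epsilon$. The chief difficulty I expect, as noted above, is in the composition step where alphabet size, soundness, and projection structure all have to be optimized simultaneously.
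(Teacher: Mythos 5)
This statement is not proved in the paper at all: it is imported as a black box from Chan~\cite{Chan12} (with~\cite{KS11} noted as a predecessor), and Section~\ref{sec:transform:basic} merely records that Chan's instance has the stated alphabet, degree, completeness and soundness. So the honest ``proof'' here is a citation, and your attempt is really a reconstruction of Chan's argument. As such it has a genuine gap: the central step you posit --- a composition operation that lowers the soundness by a constant factor while enlarging the alphabet by only a multiplicative constant, preserving the projection property and near-perfect completeness, iterated $\Theta(\log q)$ times --- is exactly the content of the theorem, and you assert it rather than prove it. No known generic amplification has this tradeoff: parallel repetition multiplies the alphabet by $|L|$ per round (which is why, as you note, it overshoots $q^2$), and Dinur-style gap amplification does not push soundness below a constant. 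Moreover, this iterated-composition architecture is not how \cite{Chan12} actually works: Chan's construction is a one-shot ``direct sum'' of $K$ smooth outer games tested by a $K$-query predicate that is a pairwise-independent subgroup of $G^K$; the left alphabet $q^2$ is the size of that subgroup, the right alphabet is $|G|=q$, the left degree is the arity of the test, and the soundness $O(\log q/q)$ comes from an invariance/second-moment analysis of that single test, not from $\Theta(\log q)$ rounds of inner-verifier composition. Saying ``pairwise independence makes the Cauchy--Schwarz bound tight enough'' does not substitute for this analysis, which is the technically hard part.

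A second, smaller issue: you dismiss the structural requirement $\maxleft(G)=q$ as a ``standard replication trick,'' but splitting left vertices into low-degree copies is not approximation-preserving for the game value --- different copies of a left vertex may take different labels, so the split instance can have much larger value than the original (e.g.\ a degree-$D$ left vertex none of whose labels covers more than one incident arc contributes value $1/D$ before splitting and value $1$ after splitting into degree-$1$ copies). In Chan's instance the left degree equals $q$ by construction (it is a parameter of the test), which is precisely why the theorem can be quoted with that degree bound; the degree cannot be retrofitted onto an arbitrary instance without damaging the soundness. If your intent is to use Theorem~\ref{thm:strong-lc} the way the paper does, the correct move is simply to cite \cite{Chan12}; if your intent is to reprove it, the soundness analysis of the pairwise-independent-subgroup test and the control of the degree must be carried out explicitly, and the iterated-composition outline as written would not yield them.
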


Thus, by invoking Lemma~\ref{lmm:degree-reduction}, we have the
following theorem.
 
\begin{theorem} \label{thm:small-strong-lc}
For any constants $q>0$ and $\epsilon>0$, given an instance  
$(G=(U,W,E),\{\pi_e:e\in E\},L_1,L_2)$ of the maximum label~cover
such that $|L_1|,|L_2|\leq q^2$, $\maxdeg(G) = \Theta(q\log{q})$ and  
$\avgdeg(G) = \Theta(q\log{q})$, unless $\NP=\ZPP$, it is hard to
distinguish between the following two cases. 
\squishlist
\item {\sc Yes-Instance:} There is an labeling covering at least
  $(1-\epsilon)$ fraction of all the arcs.
\item {\sc No-Instance:} There is no labeling covering more than
  $O(\log{q}/q)$ fraction of all the arcs.
\squishend
\end{theorem}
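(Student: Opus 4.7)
The plan is to invoke Lemma~\ref{lmm:degree-reduction} on the label cover instance produced by Theorem~\ref{thm:strong-lc}. Chan's instance already enjoys a small alphabet ($|L_1|,|L_2|\leq q^2$), regular left degree exactly $q$, completeness $1-\epsilon$ and soundness $\gamma = O(\log q/q)$. The only parameter it does not control is the right degree, which can be as large as $|U|\cdot q/|W|$ and in particular need not be polylogarithmic in the inverse soundness. The reduction lemma is precisely what trims the right degree (and hence the overall maximum degree) down to the desired threshold without spoiling the gap.

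First I would verify that Chan's instance meets all four hypotheses of Lemma~\ref{lmm:degree-reduction}. Reading the lemma's scale parameter as $q^2$, the alphabet bound $\leq q^2$ is met, the left degree $q$ is $\poly(q^2)$, the completeness is $1-\epsilon$, and $\gamma = O(\log q/q) = 1/\poly(q^2)$. The lemma then outputs an instance $G'$ over the same alphabets $L_1,L_2$ with completeness $1-\epsilon$, soundness $\gamma' = \Theta(\log q/q)$ (so the No-Instance guarantee $O(\log q/q)$ is retained), and
\[
 \maxdeg(G') \;\leq\; O\bigl((1/\gamma)\log(1/\gamma)\bigr) \;=\; O(q\log q).
\]

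Second, I would read off the matching lower bounds on $\maxdeg$ and $\avgdeg$. Since the degree reduction does not discard edges beyond what is needed to bring the right degree down to $O((1/\gamma)\log(1/\gamma))$, each left vertex ends up incident to roughly this many edges, so the total edge count is on the order of $|U|\cdot q\log q$. Dividing out by $|W|$, the average right degree is of the same order, so both $\maxdeg(G')$ and $\avgdeg(G')$ land at $\Theta(q\log q)$, as the theorem claims. Because Lemma~\ref{lmm:degree-reduction} is randomized (its sparsification step tosses coins), the overall reduction is randomized, and the concluding hardness weakens from $\NP\neq\P$ to $\NP\neq\ZPP$, matching the statement.

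The main obstacle is not in this plug-in step but rather in Lemma~\ref{lmm:degree-reduction} itself, whose proof must show that the random sparsification simultaneously preserves completeness, preserves soundness up to a constant factor, and produces a tightly concentrated degree distribution. Once that lemma is in hand, the theorem is just a parameter substitution: set $\gamma=\Theta(\log q/q)$ from Chan's instance and evaluate $(1/\gamma)\log(1/\gamma)$.
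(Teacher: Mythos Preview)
Your approach is correct and matches the paper's own proof, which is simply to invoke Lemma~\ref{lmm:degree-reduction} on the instance from Theorem~\ref{thm:strong-lc} (Chan's construction) and read off the resulting parameters. The only minor wrinkle is that the $\avgdeg(G')=\Theta(q\log q)$ bound comes directly from the sparsification step (Lemma~\ref{lmm:sparse}), where the sampled graph has average degree exactly $\gamma^{-1}\log(\max\{|L_1|,|L_2|\})$, rather than from a left-regularity argument as you sketch---but your high-level plan is exactly what the paper does.
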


So, we devote the remaining part of this section to prove
Lemma~\ref{lmm:degree-reduction}. 
We have four steps. 
First, we take a basic instance, which is a label~cover instance with
strong soundness and have a regular left-degree as in
Theorem~\ref{thm:strong-lc}. 
We apply the right~degree reduction to make a $(d_1,d_2)$-regular
instance. 
Then we make copies of left vertices so that both sides have the
same number of vertices and thus have regular degree.
Finally, we apply a random sparsification to reduce the maximum degree
of a label~cover graph to be $O(q\log{q})$, where
$\gamma=1/\poly(q)$ is the soundness of the label~cover instance.

\subsection{Basic Instance}
\label{sec:transform:basic}

We take an instance $(G=(U,W,E),\{\pi_e:e\in E\},L_1,L_2)$ of the
maximum label~cover problem with properties as stated in
Lemma~\ref{lmm:degree-reduction}. That is, 
\squishlist
\item The alphabet-size is $\max\{|L_1|,|L_2|\}\leq q$.
\item The graph $G$ has regular left degree $D=\poly(q)$. 
\item The completeness is $1-\epsilon$, for any $\epsilon>0$. 
\item The soundness is $\gamma=1/\poly(q)$.
\squishend

An instance of the maximum~label cover problem that satisfies the
above properties are that constructed by Chan in \cite{Chan12} and by
Khot and Safra in \cite{KS11}.
Note that due to the size of the construction, the former result
applies for any constant $q>0$ while the latter result in \cite{KS11}
applies for all primes  $5 \leq q \leq \polylog(N)$, where $N$ is the
size of the label~cover instance. 
More precisely, the result in \cite{KS11} also applies for
$q=\polylog(N)$ under the hardness assumption
$\NP\subsetneq\DTIME(2^{\polylog{n}})$.

\subsection{Making An Instance $(d_1,d_2)$-Regular}
\label{sec:transform:right-deg}

The basic instance discussed in the previous chapter is the bipartite
graph $G=(U,V,E)$ that is left-regular but not right-regular.
To make an instance of the maximum label~cover instance regular, we
apply the {\em right degree reduction} introduced by Moshkovitz and
Raz~\cite{MR10}. (Also, see~\cite{DH09}.)
In short, the right degree reduction makes the right degree of a
label~cover instance regular while almost preserves the soundness.
It is not hard to see that the reduction preserves the completeness as
well.
(See Appendix~\ref{sec:right-degree-reduction} for more detail.)

\begin{lemma}[Right Degree Reduction~\cite{MR10}]
\label{lmm:deg-reduction}
There exists a polynomial-time reduction that, given 
a parameter $d$ and a maximum label~cover instance 
$(G=(U,W,E),L_1,L_2,\{\pi_e:e\in{E}\})$ with completeness
$1-\epsilon$ and soundness $\gamma$, where $G$ has regular left
degree $\maxleft$, outputs a maximum label~cover instance
$(G'=(U',W',E'),L_1,L_2,\{\pi_e\}_{e\in E'})$ with 
regular left degree $d\cdot\maxleft$, regular right degree $d$, 
completeness $1-\epsilon$ and soundness $\gamma+O(1/\sqrt{d})$. 
\end{lemma}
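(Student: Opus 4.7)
The plan is to apply the Moshkovitz--Raz expander-replacement construction. For each right vertex $w \in W$ of original degree $d_w$, I would pick a $d$-regular bipartite graph $H_w = (X_w, Y_w, F_w)$ with $|X_w| = |Y_w| = d_w$ and normalized second eigenvalue $O(1/\sqrt{d})$ -- for instance, a near-Ramanujan family, or (at the cost of a randomized reduction) a random $d$-regular bipartite graph, whose spectral bound holds with high probability. I identify $X_w$ with the $d_w$ neighbors of $w$ in $G$, and in the new instance $G'$ I replace $w$ by the vertex set $Y_w$. For each edge $(u, w) \in E$, writing $x_u \in X_w$ for its associated copy, and for each $y \in N_{H_w}(x_u)$, I add the edge $(u, y)$ to $E'$ carrying the same constraint $\pi_{(u,w)}$.

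A direct count verifies the degree conditions: each $y \in Y_w$ inherits right-degree $d$ from $H_w$, and each original edge spawns exactly $d$ new edges, so every left vertex, originally of degree $\maxleft$, now has degree $d \cdot \maxleft$. Completeness is immediate: given a $(1-\epsilon)$-good labeling $(f_1, f_2)$ of $G$, I extend it to $G'$ by $f_2'(y) := f_2(w)$ for each $y \in Y_w$, so each originally covered edge yields $d$ covered edges in $E'$.

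The crux is the soundness analysis. Given a labeling $(f_1', f_2')$ of $G'$ covering a $\gamma'$-fraction of $E'$, I extract a labeling of $G$ by setting $f_1 := f_1'|_U$ and, for each $w$, sampling $f_2(w)$ from the empirical label distribution on $Y_w$, so $\Pr[f_2(w) = \ell] = |S^w_\ell|/d_w$, where $S^w_\ell := \{y \in Y_w : f_2'(y) = \ell\}$. Letting $T^w_\ell := \{x_u \in X_w : \pi_{(u,w)}(f_1'(u)) = \ell\}$, the number of edges of $G'$ incident to $Y_w$ covered by $(f_1', f_2')$ is exactly $\sum_\ell e_{H_w}(T^w_\ell, S^w_\ell)$, while the expected number of edges of $G$ at $w$ covered by $(f_1, f_2)$ is $\sum_\ell |T^w_\ell| \, |S^w_\ell| / d_w$. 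By the expander mixing lemma, $e_{H_w}(T^w_\ell, S^w_\ell) \leq d \, |T^w_\ell| \, |S^w_\ell| / d_w + O(\sqrt{d}) \, \sqrt{|T^w_\ell| \, |S^w_\ell|}$. Summing over $w, \ell$ and dividing by $|E'| = d\,|E|$ gives $\gamma' \leq (\text{covered fraction in } G) + (|E|\sqrt d)^{-1} \sum_{w,\ell} \sqrt{|T^w_\ell| \, |S^w_\ell|}$; since Cauchy--Schwarz yields $\sum_\ell \sqrt{|T^w_\ell| \, |S^w_\ell|} \leq \sqrt{(\sum_\ell |T^w_\ell|)(\sum_\ell |S^w_\ell|)} = d_w$, the error term is $O(1/\sqrt{d})$. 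Derandomizing by picking, for each $w$, the label $f_2(w)$ that maximizes the contribution finishes the argument.

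The hard part will be ensuring the spectral-expander construction works uniformly for all the $d_w$'s arising in the basic instance, together with the edge case where some $d_w$ might be smaller than $d$. The former is handled by any near-Ramanujan family (or a random-graph construction if a randomized reduction is acceptable); the latter by a preliminary padding step with dummy edges, increasing $|E|$ by at most a $o(1)$ factor before the reduction is applied. The remainder is a bookkeeping exercise checking that $|E'| = d\,|E|$ and that the resulting graph is actually $(d \cdot \maxleft, d)$-regular.
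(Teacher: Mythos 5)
Your proposal is correct and takes essentially the same route as the paper: replace each right vertex by a $d$-regular expander on its $\deg(w)$ edge-copies, extend labels by copying for completeness, and bound soundness by $\gamma+O(1/\sqrt{d})$ via the expander mixing lemma together with Cauchy--Schwarz and the (de)randomized label extraction. The only differences are cosmetic --- you use a bipartite expander rather than an expander directly on the copies, and you spell out the mixing-lemma soundness argument that the paper simply delegates to Moshkovitz--Raz.
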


We choose a parameter $d=1/\gamma$ and apply the right~degree
reduction on $G$.
Thus, we have an instance
$(\hatG=(U^{k},\hatW,\hatE),L_1,L_2,\{\pi_e\}_{e\in\hatE})$ 
of the maximum label~cover problem in which $\hatG$ is
$(dD,d)$-regular, where $D$ is the left-degree of $G$, as desired. 

\subsection{Making $(d_1,d_2)$-Regular Instance $\maxdeg$-Regular} 
\label{sec:transform:make-regular}

Take an instance $\hatG$ of the maximum label~cover problem as
discussed in the previous section.
Now, we want to make the $(dD,d)$-regular graph $\hatG$ a
$dD$-regular graph.  
To do so, we replace each left vertex $u$ of $G$ by $D$ vertices
$u_1,u_2,\ldots,u_D$ and we add an arc $(u_i,w)$ with a constraint 
$\pi_{u_i,w}=\pi_{u,w}$, for each arc $(u,w)$ of $\hatG$.   
This results in a graph $G^{reg}$ which is $dD$-regular
because the degree of each right vertex increases by a factor of $D$
while the degree of each left vertex remains the same. 
Observe that the reduction preserves completeness because each edge of
$\hatG$ has exactly $D$ copies in $G^{reg}$.
Now, consider the soundness.
Take any labeling $(f_1,f_2)$ of $G^{reg}$.
We construct a labeling $(\hat{f}_1,f_2)$ of $\hatG$ by assigning
$\hat{f}_1(u)=f_1(u_i)$, where $f_1(u_i)$ is a labeling that covers
the maximum number of arcs of $G^{reg}$ incident to $u_i$ given that
$f_2$ is fixed. 
If $(f_1,f_2)$ covers more than $\gamma$ fraction of arcs of
$G^{reg}$, then $(\hat{f}_1,f_2)$ will cover more than $\gamma$
fraction of arcs of $\hatG$ as well by the choice of $\hat{f}_1(u)$. 
Therefore, the reduction preserves both completeness and soundness,
and the resulting bipartite graph is $\maxdeg$-regular, where
$\maxdeg=dD$.

\subsection{Reducing Degree via Random Sparsification}
\label{sec:rand-sparse}

Now, we take a $dD$-regular label~cover instance from the previous
section. 
We apply a random sparsification technique to reduce the 
``average degree'' of the label~cover instance to almost match the
inverse of its soundness. 
Then we throw away vertices with large degree so that the graph has
degree within the desired bound, the inverse of the soundness.

\subsubsection{Sparsifying The Graph}
\label{sec:sparse}

First, we will sparsify the graph $\hatG$.
The reduction takes as input a regular-degree instance of the maximum
label~cover problem and outputs an instance whose bipartite graph has
small average degree. 
To be precise, the input of our reduction is an instance
$(G=(U,W,E),\{\pi_e:e\in E\},L_1,L_2)$ of the maximum label~cover
problem with regular degree $\maxdeg$, completeness $1-\epsilon$ and
soundness $\gamma$.
Then it constructs a graph $G'=(U,W,E')$ from $G$ by randomly and
independently picking each arc of $G$ with probability
$\rho=\gamma^{-1}\log(\max\{|L_1|,|L_2|\})/\Delta(G)$.

Intuitively, since we sample arcs of $G$ with the same probability
$\rho$, the resulting graph $G'$ should have degree approximately
$O(\rho\maxdeg)$, and for any labeling $(f_1,f_2)$, the fraction of
arcs in $G'$ that $(f_1,f_2)$ covers is approximately the same as that
it covers in $G$.  
The next theorem shows that the random sparsification (almost)
preserves completeness and soundness of the original instance. 
Moreover, the average degree of the output instance is exactly
$\avgdeg(G)=\gamma^{-1}\log(\max\{|L_1|,|L_2|\})$.

\begin{lemma}\label{lmm:sparse}
Suppose the random sparsification algorithm takes as input an instance
$(G=(U,W,E),\{\pi_e:e\in E\},L_1,L_2)$ of the maximum
label~cover problem with regular degree $\maxdeg(G)$, completeness
$1-\epsilon$ and soundness $\gamma$, where
$0<\epsilon,\gamma<1$. 
Then it outputs with high probability an instance
$G'=(U,W,E',\{\pi_e:e\in E'\},L_1,L_2)$ of the maximum label~cover
problem with completeness $1-4\epsilon$, soundness $8\gamma$ and  
the average degree of $G'$ is 
$\avgdeg(G')=\gamma^{-1}\log(\max\{|L_1|,|L_2|\})$.
\end{lemma}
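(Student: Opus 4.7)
The plan is to verify the three claimed properties (average degree, completeness, and soundness) separately, each via a Chernoff-type concentration bound, with the soundness step being the delicate one since it must hold simultaneously for all labelings.

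First, for the average degree, I would observe that each arc of $G$ survives in $E'$ independently with probability $\rho$, so $\E[|E'|] = \rho|E|$. Since $G$ is $\maxdeg$-regular on both sides (so $|U|=|W|$), we have $|E| = \maxdeg\cdot|U|$ and hence $\E[|E'|]/|U| = \rho\maxdeg = \gamma^{-1}\log(\max\{|L_1|,|L_2|\})$. A Chernoff bound gives $|E'| = (1\pm o(1))\rho|E|$ with probability $1-\exp(-\Omega(\rho|E|))$; since $\rho|E| = \gamma^{-1}\log(q)\cdot|U|$ grows with $|U|$, this event holds with high probability.

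Second, for completeness, I would fix a labeling $(f_1^\star,f_2^\star)$ of $G$ covering at least $(1-\epsilon)|E|$ arcs (guaranteed by the input completeness). The number of such arcs surviving in $E'$ is a sum of independent Bernoullis with mean $\rho(1-\epsilon)|E|$. Applying Chernoff to both the numerator (covered arcs in $G'$) and the denominator $|E'|$ with small enough deviation $\delta=\delta(\epsilon)$, I obtain with high probability a covered fraction of at least $(1-\delta)(1-\epsilon)/(1+\delta) \geq 1-4\epsilon$.

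Third, and this is the main obstacle, for soundness I must argue that with high probability \emph{every} labeling covers at most an $8\gamma$ fraction of $E'$. Fix any labeling $(f_1,f_2)$; by the input soundness it covers at most $c\leq\gamma|E|$ arcs of $G$, and the number it covers in $E'$ is a sum of $c$ independent Bernoullis of rate $\rho$, with mean at most $\rho\gamma|E|$. A Chernoff tail bound then yields $\Pr[\text{covered in }G' \geq 4\rho\gamma|E|] \leq \exp(-\Omega(\rho\gamma|E|)) = \exp(-\Omega(\log(q)\cdot|U|))$. Taking a union bound over all $|L_1|^{|U|}\cdot|L_2|^{|W|}\leq q^{|U|+|W|}$ labelings, and conditioning on the high-probability event $|E'| \geq \rho|E|/2$ from the first step, we conclude that every labeling covers at most $\frac{4\rho\gamma|E|}{\rho|E|/2} = 8\gamma$ fraction of the arcs of $G'$. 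The essential point is that the choice $\rho=\gamma^{-1}\log(q)/\maxdeg$ is calibrated so that $\rho\gamma|E| = \Theta(\log(q)\cdot|U|)$, which, with an appropriate absolute constant absorbed into the $4$-to-$8$ slack, is precisely the amount of concentration needed to defeat the $q^{|U|+|W|}$ union bound; any substantially smaller $\rho$ would destroy soundness preservation, and any substantially larger $\rho$ would inflate the average degree beyond the target $\gamma^{-1}\log(q)$.
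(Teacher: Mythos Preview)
Your proposal is correct and follows essentially the same approach as the paper's proof: a Chernoff bound on $|E'|$ for the average degree, a Chernoff bound on the arcs covered by a fixed good labeling for completeness, and a Chernoff bound per labeling combined with a union bound over all $q^{|U|+|W|}$ labelings for soundness, with the same observation that the sampling rate $\rho=\gamma^{-1}\log q/\maxdeg$ is tuned precisely so that the concentration beats this union bound. The only cosmetic difference is that the paper bounds the number of \emph{uncovered} surviving arcs in the completeness step (getting at most $4\epsilon|E'|$), whereas you bound the covered arcs directly; both routes give the same $1-4\epsilon$ conclusion.
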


\begin{proof}

Throughout, let $n=|U|+|W|$ denote the number of vertices of $G$.  

\medskip

{\bf Completeness: } Suppose there is a labeling $(f_1,f_2)$
covering $(1-\epsilon)$ fraction of arcs in $G$.
We will show that $(f_1,f_2)$ covers at least $1-2\epsilon$
fraction of arcs in $G'$.

Let $X=\sum_{e\in E}X_e$ be the number of arcs covered by the labeling
$(f_1,f_2)$, where $X_e$ is an indicator random variable such that
$X_e=1$ if an arc $e$ is covered by the labeling $(f_1,f_2)$ and 
$X_e=0$ otherwise.
Then the expected number of arcs not covered by $(f_1,f_2)$ is 
\[
\E[|E|-X] 
= \epsilon |E|
  \cdot \frac{\gamma^{-1}}{\maxdeg(G)}\log(\max\{|L_1|,|L_2|\}) 
= \frac{\epsilon n}{2}\cdot\gamma^{-1}\log(\max\{|L_1|,|L_2|\}) 
\]

The last equation follows since $|E|=\maxdeg(G)|U|=\maxdeg(G)|W|$. 
By Chernoff's bound, we have

\[
\Pr[|E|-X > \epsilon{n}\gamma^{-1}\log(\max\{|L_1|,|L_2|\})] 
< \exp\left
  (-\frac{\epsilon n}{6}\cdot\gamma^{-1}\log(\max\{|L_1|,|L_2|\}) 
\right) 
\leq 2^{-\Omega(n)}
\]

Now, consider the expected number of arcs in $G'$.
We have
\[
\E[|E'|] =
|E|\cdot\frac{\gamma^{-1}\log(\max\{|L_1|,|L_2|\})}{\maxdeg(G)}
= \frac{n}{2}\cdot\gamma^{-1}\log(\max\{|L_1|,|L_2|\}).
\]

Thus, by Chernoff's bound, 
\[
\Pr\left[|E'|<\frac{n}{4}\gamma^{-1}\log(\max\{|L_1|,|L_2|\})\right] 
< \exp\left(-\frac{n}{16}\gamma^{-1}\log(\max\{|L_1|,|L_2|\})\right)
\leq 2^{-\Omega(n)}
\]

By union bound, with high probability, $(f_1,f_2)$ covers at 
least $(1-4\epsilon)$ fraction of arcs in $G'$.

\medskip

{\bf Soundness: }  Suppose there is no labeling $(f_1,f_2)$
covering more than $\gamma$ fraction of arcs in $G$.
We will show that there is also no labeling $(f_1,f_2)$ covering
more than $4\gamma$ fraction of arcs in $G'$.

Fix any labeling $(f_1,f_2)$. 
Let $X=\sum_{e\in E}X_e$ be the number of arcs covered by the labeling
$(f_1,f_2)$, where $X_e$ is an indicator random variable such that
$X_e=1$ if an arc $e$ is covered by the labeling $(f_1,f_2)$ and 
$X_e=0$ otherwise.
Then the expected number of arcs satisfied by $(f_1,f_2)$ is 
\[
\E[X] =
\gamma|E|\cdot\frac{\gamma^{-1}\log(\max\{|L_1|,|L_2|\})}{\maxdeg(G)}
= \frac{n}{2}\cdot\log(\max\{|L_1|,|L_2|\}).
\]

Thus, by Chernoff's bound, we have 
\[
\Pr[X > 2n\log(\max\{|L_1|,|L_2|\})]
<
\exp\left(\frac{9}{6}n\log(\max\{|L_1|,|L_2|\})\right) 
\leq \left(\frac{1}{\max\{|L_1|,|L_2|\}}\right)^{3n/2}
\]

Since there are $|L_1|^{n/2}|L_2|^{n/2}\leq
(\max\{|L_1|,|L_2|\})^n$ possible labellings, by union bound, we have
that with probability $2^{-\Omega(n)}$, the labeling $(f_1,f_2)$ covers
at most $2n\log(\max\{|L_1|,|L_2|\})$ arcs.
By the proof for the case of {\bf Yes-Instance}, we
have that $G'$ has at most $(n/4)\gamma^{-1}\log(\max\{|L_1|,|L_2|\})$ 
arcs with probability $2^{-\Omega(n)}$. 
Thus, the labeling $(f_1,f_2)$ covers at most $8\gamma$ fraction of
the arcs with high probability. This completes the proof.
\end{proof}

So, we can sparsify the instance
$(\hatG=(U,W,\hatE),\{\pi_e:e\in{\hatE}\},L_1,L_2)$ 
from Section~\ref{sec:transform:make-regular} to obtain an instance 
$(\hatG^{avg}=(U,W,\hatE^{avg}),\{\pi_e:e\in{\hatE^{avg}}\},L_1,L_2)$
such that $\hatG^{avg}$ has average degree
$\avgdeg(G')=\gamma^{-1}\log(\max\{|L_1|,|L_2|\})$.

\subsubsection{Removing Vertices with Large Degree}

The graph $\hatG^{avg}$ obtained from the previous step has average
degree to be within the desired bound.
However, some vertices may still have {\em large degree}, i.e., their
degree are larger than $2 \gamma^{-1}\log(\max\{|L_1|,|L_2|\})$.  
To make the graph to have degree within the desired bound, we remove
all the large degree vertices from the graph $\hatG^{avg}$.
This results in a graph $\hatG^{bound}=(U',V',E')$ with 
$\maxdeg(\hatG^{bound})\leq 2 \gamma^{-1}\log(\max\{|L_1|,|L_2|\})$. 

By Chernoff's bound, the probability that a vertex $v\in{U\cup{W}}$ has
large degree is 
\[
\Pr[\deg_{\hatG^{avg}}(v) > 2 \gamma^{-1}\log(\max\{|L_1|,|L_2|\})]
 < \exp\left(- \frac{1}{3}\gamma^{-1}\log(\max\{|L_1|,|L_2|\}\right)
 \leq 2^{-(1/3)\gamma^{-1}}
\]

Let $X_v$ be an indicator variable such that $X_v=1$ if $v$ has large
degree and $x_v=0$ otherwise. Then we have 
$
  \E\left[\sum_{v\in{V\cup{W}}}X_v\right]
  = \sum_{v\in{V\cup{W}}}\E[X_v]
  \leq 2^{-(1/3)\gamma^{-1}}|V\cup{W}|$.

By Markov's inequality, we have
$\Pr\left[\sum_{v\in{V\cup{W}}}X_v >
   2^{-(1/3)\gamma^{-1}+1}|V\cup{W}|\right] < \frac{1}{2}$.


Thus, with probability $1/2$ we
remove at most $2^{-(1/3)\gamma^{-1}}|V\cup{W}|$ vertices of
$\hatG^{avg}$; we call this a {\em probability of success}.
We can repeat the process $O(\log{n})$ times, where $n=|U|+|W|$, to 
increase the probability of success to $1-1/\Omega(n)$.
This does not effect the success probability of the random
sparsification step because the probability of success of the random
sparsification step is very high, say $1-1/2^{\Omega(n)}$.


\medskip

\noindent{\bf The Size of Construction:}
As above, with high probability, the graph $\hatG^{bound}$ has at
least  $(1-1/2^{n/3})n$ vertices.
For the number of arcs, we may assume that all vertices removed have
degree $dD=\poly(\gamma^{-1})$, where $dD$ is the (regular) degree of
the graph $\hatG$. 
Note that $\gamma^{-1}$ is smaller than $O(\log{n})$. 
So, with high probability, the number of arcs of $\hatG^{bound}$ is at
least 
$|E(\hatG^{avg})| - 2^{-n/3}dD(|U|+|W|) 
 \geq (1-2^{-n/6})|E(\hatG^{avg})|$. 

\medskip 

\noindent{\bf Completeness:}
Suppose there is a labeling $(f_1,f_2)$ covering 
$(1-\epsilon)$ fraction of the arcs of $\hatG^{avg}$.
We will show that $(f_1,f_2)$ covers at least 
$(1-2\epsilon)$ fraction of the arcs of $\hatG^{bound}$. 
 
We may assume that arcs incident to vertices removed from
$\hatG^{avg}$ are covered by $(f_1,f_2)$, and each vertex removed has
degree $dD=\poly(\gamma^{-1})$, which is the maximum degree of
$\hatG$. 
Thus, the number of arcs of $\hatG^{bound}$ covered by $(f_1,f_2)$ is
at least
$ (1-\epsilon)|E(\hatG^{avg})| - 2^{-n/3}dD(|U|+|W|)
\geq 
(1 - 2\epsilon)|E(\hatG^{avg})|$.
%

The last inequality follows because $\epsilon$ is a constant. 
Therefore, in this case, there is a labeling that covers at least
$(1-2\epsilon)$ fraction of the arcs of $\hatG^{bound}$. 

\medskip

\noindent{\bf Soundness:}
Suppose there is no labeling $(f_1,f_2)$ that covers more than
$\gamma$ fraction of the arcs of $\hatG^{avg}$. 
We will show that there is no labeling $(f_1,f_2)$ that covers more
than $2\gamma$ fraction of the arcs of $\hatG^{bound}$.  

Consider a labeling $(f_1,f_2)$ of $\hatG^{avg}$ that covers
$\gamma'\leq\gamma$ fraction of arcs of $\hatG^{bound}$. 
We construct a labeling $(f'_1,f'_2)$ of $\hatG^{bound}$ by assigning
$f'_1(u)=f_1(u)$ (resp., $f'_2(w)=f_2(w)$) for each vertex $u\in{U'}$  
(resp., $w\in{W'}$) in $\hatG^{bound}$. 
We may assume the worst case that all vertices removed have degree
$dD$ and arcs incident to them are not covered by $(f_1,f_2)$. 
So, $(f'_1,f'_2)$ still covers $\gamma'|E(\hatG^{avg})|$ arcs of
$\hatG^{bound}$, but the number of arcs in $\hatG^{bound}$ is smaller
than that of $\hatG^{avg}$. 
By the analysis of the construction size, with high probability,
$\hatG^{bound}$ has at least $(1-2^{-n/6})|E(\hatG^{avg})|$ arcs. 
Thus, $(f_1,f_2)$ covers at most 
$\frac{\gamma'}{1-2^{-n/6}} \leq 2\gamma' \leq 2\gamma$ arcs of
$\hatG^{bound}$.

The first inequality follows because $1-2^{n/6}\geq 1/2$ for large
enough $n$. 
Therefore, there is no labeling covering more than $2\gamma$ fraction
of arcs of $\hatG^{bound}$. 
Moreover, this happens with high probability. 

This completes the proof of Lemma~\ref{lmm:degree-reduction}.


\section{Getting Hardness in Terms of $\dem$}
\label{sec:hardness-demands}

In this section, we discuss how to obtain the hardness in terms of
demand pairs. 
We will give an example of the hardness of the rooted $k$-connectivity
problem on directed graphs.
The next theorem is implicit in the construction of 
Cheriyan~et~al.~\cite{CLNV12}.

\begin{theorem}[Implicit in \cite{CLNV12}] 
\label{thm:dir-rooted-props}
There is a polynomial-time approximation preserving reduction
such that, given an instance 
$(G=(U,W,E),\{\pi_e:e\in{E}\},L_1,L_2,c_1,c_2)$ of the minimum-cost
label~cover problem, outputs an instance $(\hatG,c,r,T)$ of the
rooted $k$-connectivity problem on directed graphs with
$k=O(\delta(G)$, where $\hatG$ is a directed graph, $c$ is a cost
function, $r$ is a root vertex and $T$ is a set of terminals.  
Moreover, the reduction has the following properties: 
\squishlist
\item Each terminal $t_{i,j}\in{T}$ corresponds to an arc
      $(u_i,w_j)\in E(G)$. 
\item The graph $\hatG$ can be partitioned into 
      $\hatG=\bigcup_{t_{i,j}\in{T}}\hatG_{i,j}$, where $E_{i,j}$
      is the union of all $r,t_{i,j}$-paths in $\hatG$. 
\item For any two partitions $\hatG_{i,j}$ and $\hatG_{i',j'}$ of 
      $\hatG$, where $i\neq i'$ and $j\neq j'$,
      there is a path from $\hatG_{i,j}$ to $\hatG_{i',j'}$ 
      (resp., from $\hatG_{i',j'}$ to $\hatG_{i,j}$) if only if 
      the label~cover graph $G$ has an arc $(u_i,w_{j'})$ 
      (resp., $(u_{i'},w_{j'})$). 
\squishend
\end{theorem}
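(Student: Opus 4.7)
First I would write down the gadget reduction implicit in [CLNV12] and then verify the three claimed structural properties by inspection, rather than inventing anything new. The reduction I would present builds $\hatG$ in layers indexed by vertices and arcs of the label-cover graph: from the root $r$, arcs of cost $c_1$ lead into a ``left-label-selection'' gadget $L_i$ for each $u_i \in U$ (with a vertex $v^L_{u_i,a}$ per $a\in L_1$); for each arc $(u_i,w_j)\in E(G)$ a ``projection'' gadget $P_{i,j}$ carries free arcs $v^L_{u_i,a}\to v^R_{w_j,\pi_{(u_i,w_j)}(a)}$ into a ``right-label-selection'' gadget $R_j$ (with a vertex $v^R_{w_j,b}$ per $b\in L_2$); and arcs of cost $c_2$ finally fan into a terminal $t_{i,j}$ created per arc of $G$. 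The connectivity demand $k$ is set to $\Theta(\maxdeg(G))$ by inserting an appropriate number of parallel intermediaries inside each selection gadget so that purchasing a label at $u_i$ and a compatible label at $w_j$ contributes exactly one unit of $r,t_{i,j}$-connectivity through $P_{i,j}$.

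With the construction in place, the three properties follow directly. Property~(1) holds by fiat, since the terminals were introduced one per arc of $G$. For property~(2), define $\hatG_{i,j}$ to be the union of $L_i$, $P_{i,j}$, $R_j$, and the arcs into $t_{i,j}$; because $\hatG$ is layered $r \to L_\ast \to P_{\ast,\ast} \to R_\ast \to t_{\ast,\ast}$ and acyclic, every arc of $\hatG$ lies on an $r,t_{i,j}$-path for the unique $(i,j)$ identifying the arc of $G$ it was built from, and $\hatG=\bigcup_{t_{i,j}\in T}\hatG_{i,j}$. For property~(3), any directed path from a vertex of $\hatG_{i,j}$ to a vertex of $\hatG_{i',j'}$ with $i\neq i'$ and $j\neq j'$ must cross from some $u_i$-vertex to some $w_{j'}$-vertex; in this layered construction the only such crossing is through the projection gadget $P_{i,j'}$, which is present in $\hatG$ precisely when $(u_i,w_{j'})\in E(G)$. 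The reverse-direction statement is symmetric, using $P_{i',j}$.

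The main obstacle is calibrating the parallelism inside the label-selection gadgets so that rooted $k$-connectivity from $r$ to $t_{i,j}$ with $k=\Theta(\maxdeg(G))$ precisely forces the purchase of labels of $u_i$ and $w_j$ that are compatible under $\pi_{(u_i,w_j)}$; this is what makes the reduction approximation-preserving, since any feasible network then decodes back into a labeling of the same cost, and vice versa. Once this calibration is fixed, verifying properties~(1)--(3) amounts to reading off the arc-indexed layered structure, as sketched above.
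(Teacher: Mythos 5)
The construction you sketch reproduces only the \emph{base} part of the reduction from \cite{CLNV12} (root, label-selection vertices, projection arcs, one terminal per arc of $G$), but it omits — and explicitly defers as ``calibration'' — the part that actually carries the theorem: the padding. In the paper's construction the connectivity requirement $k=\maxdeg(G)$ is enforced by adding, for each terminal $t_{i,j}$, a zero-cost \emph{padding arc} $(u_{i'},t_{i,j})$ for every $i'\neq i$ with $(u_{i'},w_j)\in E(G)$, plus enough parallel zero-cost arcs $(r,t_{i,j})$ to bring the indegree of $t_{i,j}$ up to exactly $\maxdeg(G)$. These arcs bypass the label gadgets, supply $k-1$ openly disjoint $r,t_{i,j}$-paths for free, and force exactly one more disjoint path to be purchased through $A_i$ and $B_j$ via an arc pair with $\pi_{u_i,w_j}(a)=b$; that is what makes the reduction approximation preserving with $k=O(\maxdeg(G))$. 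Your alternative of placing ``parallel intermediaries inside each selection gadget'' is not a neutral choice: padding internal to $L_i$ or $R_j$ either creates free $r,t_{i,j}$-paths that destroy soundness or shares intermediate vertices with the canonical path and caps the achievable connectivity below $k$, and in any case it severs the link between the amount of padding and the degree of $w_j$ in $G$, which is exactly what ties $k$ to $\maxdeg(G)$.

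The missing padding also undermines your verification of properties (2) and (3). Since $t_{i,j}$ is reachable from $w_j$, the union of all $r,t_{i,j}$-paths contains the label gadgets of \emph{every} $u_{i''}$ with $(u_{i''},w_j)\in E(G)$, not just those indexed by $(i,j)$, so an arc of $\hatG$ does not lie in a unique $\hatG_{i,j}$ and the subgraphs overlap; the statement that is actually used later (Lemma~\ref{lmm:ind-terms}) is that $\hatG_{i,j}$ and $\hatG_{i',j'}$ share no non-root vertex iff $ij$ and $i'j'$ form an induced matching in $G$, and its proof relies on the padding arcs: when $(u_i,w_{j'})\in E(G)$ the dependence is witnessed precisely by the arc $(u_i,t_{i',j'})$, not only by projection arcs from $A_i$ into $B_{j'}$ as in your argument. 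So your check of property (3) accounts for only one of the two sources of inter-gadget connections, and once the padding needed for the connectivity calibration is added, that check has to be redone. In short, the proposal is not wrong in spirit, but the step you set aside is the substance of the theorem, and the way you propose to handle it would not work as stated.
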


The full discussions are provided in Appendix~\ref{sec:dir-rooted},
and the discussions for other problems are discussed in 
Appendix~\ref{sec:undir-rooted}, \ref{sec:vc-sndp}, and
\ref{sec:k-route-cut}.

Our goal is to reduce the number of terminals by merging some
terminals of the instance $(\hatG,c,r,T)$ of the rooted
$k$-connectivity problem on directed graphs as in
Theorem~\ref{thm:dir-rooted-props}. 
However, if we merge terminals $t_{i,j}$ and $t_{i',j'}$ such that
$\hatG_{i,j}$ and $\hatG_{i',j'}$ share some non-root vertex, then
this will cause us some problems.
For example, we might have some ``free path'' formed by concatenating
an $r,t_{i,j}$-path and an $r,t_{i',j'}$-path, or we might not have
enough openly disjoint paths to satisfy the connectivity requirement. 
Thus, we have to ensure that no two terminals that we merge share
a non-root vertex in the graphs $\hatG_{i,j}$'s.

Observe that if the label~cover graph $G$ has no arc joining $(u_i,w_j)$ and
$(u_{i'},w_{j'})$, where $i\neq{i'}$ and $j\neq{j'}$, then 
the graphs $\hatG_{i,j}$ and $\hatG_{i,j}$ share no non-root vertex.
In other words, if $(u_i,w_j)$ and $(u_{i'},w_{j'})$ form 
``an induced matching'' in $G$, then we can merge terminals $t_{i,j}$
and $t_{i',j}$. 
Hence, we can partition arcs of $G$ into induced matching by applying
``strong edge coloring'', and the number of partition of arcs we
obtain is at most $2\Delta(G)^2$. 
Thus, we can merge terminals in $T$ into $2\Delta(G)^2$ terminals.
Applying Theorem~\ref{thm:mincost-small-strong-lc}, we have the
hardness of $|T|^{1/4-\epsilon}=\dem^{1/4-\epsilon}$, for any
$\epsilon>0$ as claimed.


\medskip

\noindent{\bf Acknowledgment.} We thank Adrian Vetta, Joseph
Cheriyan, Guyslain Naves, Parinya Chalermsook, Danupon Nanongkai and 
Siu On Chan for useful comments and discussions.

\pagebreak

\bibliographystyle{plain}
\bibliography{hardness-conn}

\pagebreak

\appendix


\section{From Maximum To Minimum-Cost Label~Cover}
\label{sec:max-to-min}

In this section, we show how to obtain the hardness of the minimum-cost
label~cover problem from the hardness of the maximum label~cover
problem.

The following is a standard lemma that transforms the
hardness of the maximization version of the label~cover problem to the 
hardness of the minimum-cost label~cover problem. 
The theorem has been proved for the case that an instance has 
perfect completeness; see \cite{ABSS97,Kortsarz01}, and also see
\cite{CK07,CCK08}.
For our purpose, we state the theorem for the case of imperfect
completeness.

\begin{lemma}\label{lmm:max-to-min}
Suppose there are constants $0<\gamma,\epsilon<1$ and $q_1,q_2>0$ such
that, given an instance $(G=(U,W,E),\{\pi_e:e\in E\},L_1,L_2)$ of the  
maximum label cover problem with $\maxleft=q_1\avgleft$ and
$\maxright=q_2\avgright$, it is hard to distinguish between
the following two cases:
\squishlist
\item {\sc Completeness:} There is a labeling covering at least
  $1-\epsilon$ fraction of the arcs.
\item {\sc Soundness:} There is no labeling covering at least
  $\gamma$ fraction of the arcs. 
\squishend

Then it is hard to approximate the minimum-cost label~cover
problem to within a factor of $o(1/\sqrt{\gamma})$. 
\end{lemma}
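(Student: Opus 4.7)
The plan is to invoke the classical reduction of Arora-Babai-Stern-Sweedyk~\cite{ABSS97} and Kortsarz~\cite{Kortsarz01}, adapted to handle the imperfect completeness $\epsilon>0$. Given the instance $(G=(U,W,E),\{\pi_e\},L_1,L_2)$, I would keep the same bipartite graph, the same alphabets and the same projections, and define the minimum-cost version by assigning unit cost $c_1=c_2=1$ to every label. For the completeness direction I would take the single-label assignment $(f_1,f_2)$ promised by the Yes case, which already covers at least $(1-\epsilon)|E|$ arcs, and patch each uncovered arc $e=(u,w)$ by adding the extra label $\pi_e(f_1(u))$ to the set of labels assigned to $w$. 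The result is a feasible min-cost solution whose cost is at most $|U|+|W|+\epsilon|E|$.

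For the soundness direction, fix any feasible multi-labeling $(F_1,F_2)$ and set $a_e=|F_1(u_e)|$, $b_e=|F_2(w_e)|$ for each arc $e=(u_e,w_e)$. Sampling one label uniformly at random from each $F_i(\cdot)$ covers $e$ with probability at least $1/(a_eb_e)$, so there is a deterministic single-label assignment covering at least $\sum_e 1/(a_eb_e)$ arcs, and the No-case hypothesis forces $\sum_e 1/(a_eb_e)\le \gamma|E|$. A Cauchy-Schwarz step then yields $\sum_e 1/\sqrt{a_eb_e}\le \sqrt{\gamma}\,|E|$; combining with the AM-GM inequality $a_e+b_e\ge 2\sqrt{a_eb_e}$ gives $\sum_e 1/(a_e+b_e)\le \tfrac{1}{2}\sqrt{\gamma}\,|E|$; and one more Cauchy-Schwarz gives $\sum_e(a_e+b_e)\ge 2|E|/\sqrt{\gamma}$. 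Converting this edge sum back to a vertex sum via $\sum_e a_e\le \maxleft\sum_u|F_1(u)|$ and $\sum_e b_e\le \maxright\sum_w|F_2(w)|$, and using $\maxleft=q_1\avgleft$, $\maxright=q_2\avgright$, together with $|E|=\avgleft|U|=\avgright|W|$, yields a total cost of $\Omega(n/\sqrt{\gamma})$, where $n=|U|+|W|$.

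Putting the two bounds together gives a gap of $\Omega(1/\sqrt{\gamma})$ between the No-cost and the Yes-cost, up to constants depending only on $q_1,q_2,\epsilon$ and the average degree, which is exactly the claimed $o(1/\sqrt{\gamma})$-hardness. The hard part will be dealing with the additive $\epsilon|E|$ term on the Yes side: if the graph is very dense, i.e.\ $|E|\gg n$, this patching cost can swamp the base $|U|+|W|$ and destroy the ratio. The degree hypothesis $\maxleft=q_1\avgleft$, $\maxright=q_2\avgright$ is included precisely to rule this out; in the applications the label-cover instances produced by Theorem~\ref{thm:small-strong-lc} have $\avgleft,\avgright=\poly(q)$ for a constant $q$, so $|E|=O(n)$ and the argument goes through with absolute constants. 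For a fully general statement one rebalances the costs (e.g.\ $c_1=1/|U|$, $c_2=1/|W|$) and adds each patching label on whichever side is cheaper, which is the choice made in the perfect-completeness treatments of \cite{ABSS97,Kortsarz01,CK07,CCK08}.
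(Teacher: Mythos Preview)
Your proposal is correct and reaches the same $\Omega(1/\sqrt{\gamma})$ gap, but the soundness analysis takes a different route from the paper's. Both start from the same observation: sampling one label uniformly from each $F_i(\cdot)$ covers arc $e$ with probability $\ge 1/(a_eb_e)$, so the No-case forces $\sum_e 1/(a_eb_e)\le\gamma|E|$. From here the paper argues by \emph{thresholding}: it balances costs so that $c_1|U|=c_2|W|$, assumes the total cost is $\alpha C$, uses Markov to show that all but a $1/(4q_i)$ fraction of each side receives at most $O(\alpha q_i)$ labels, restricts to arcs with both endpoints in this ``good'' set (still $\ge|E|/2$ arcs, using $\maxleft=q_1\avgleft$), and samples there to get a single labeling covering $\ge |E|/O(\alpha^2 q_1q_2)$ arcs. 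You instead chain two Cauchy--Schwarz steps and one AM--GM to pass directly from $\sum_e 1/(a_eb_e)\le\gamma|E|$ to $\sum_e(a_e+b_e)\ge 2|E|/\sqrt{\gamma}$, then convert the edge sum to a vertex sum via $\sum_e a_e\le\maxleft\sum_u|F_1(u)|$. Your argument is slicker and avoids the case split; the paper's is more explicit about where $q_1,q_2$ enter. Two side points: the paper's cost-balancing $c_1|U|=c_2|W|$ is exactly your closing ``rebalance to $c_1=1/|U|,\,c_2=1/|W|$'' remark, and without it your unit-cost bound needs $|U|\asymp|W|$; and the $\epsilon|E|$ issue you flag on the Yes side is real but not a defect relative to the paper---the paper's own proof also assumes $\epsilon|E|\le\min\{|U|,|W|\}$ (stated inside the proof, not in the lemma), which holds in the intended application.
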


\begin{proof}
We construct an instance of the minimum-cost label~cover problem
from an instance $(G=(U,W,E),\{\pi_e:e\in E\},L_1,L_2)$ of the maximum
label~cover problem with a parameter $\epsilon$ such that 
$\epsilon\cdot|E| \leq \min\{|U|,|W|\}$ as follows.  
First, we take an instance $(G=(U,W,E),\{\pi_e:e\in E\},L_1,L_2)$ as a
base construction.
Then we set costs $c_1$ and $c_2$ of the left and right labels so that
$c_1|U|=c_2|W|$, and let $C=c_1|U|+c_2|W|$.  
To show the hardness of the minimum-cost label~cover problem, it suffices
to show that there is a gap of at least
$\sqrt{\gamma}/(16\sqrt{2q_1q_2})=\Omega(\sqrt{\gamma})$ between the two  
cases of maximum label~cover instances.
\\

{\bf Completeness:} Suppose there is a labeling $(f_1,f_2)$
of the maximum label~cover instance that covers $1-\epsilon$ fraction
of the arcs. Then, clearly, there is a labeling
$(\hat{f}_1,\hat{f}_2)$ of the minimum-cost label~cover that covers the
same number of arcs.  For each arc $(u,w)$ not covered, we add to
$\hat{f}_1(u)$ and $\hat{f}_2(w)$ labels $a\in L_1$ and $b\in L_2$
such that $\pi_{u,w}(a)=b$. 
By the construction, the labeling $(\hat{f}_1,\hat{f}_2)$ covers all
the arcs.
Since $\epsilon|E| \leq \min\{|U|,|W|\}$, the cost of the
labeling $(\hat{f}_1,\hat{f}_2)$ is at most $2(c_1|U|+c_2|W|)=2C$.\\ 

{\bf Soundness:}
Suppose there is no labeling of the maximum 
label~cover instance that covers at least $\gamma|E|$ arcs.
We will show that if there is a labeling $(\hat{f}_1,\hat{f}_2)$
of the minimum cost label~cover instance with cost 
$\alpha C \leq (\sqrt{\gamma}/(8\sqrt{2q_1q_2}))\cdot C$, then there
is a labeling $(f_1,f_2)$ of the maximum label~cover instance that
covers at least $\gamma|E|$ arcs.

First, we construct $(f_1,f_2)$ from $(\hat{f}_1,\hat{f}_2)$ by
uniformly at random picking a label $a\in \hat{f}_1(u)$ and assigning
$f_1(u)=a$, for each $u\in U$, and uniformly at random picking a label
$b\in\hat{f}_2(w)$ and assigning $f_2(w)=b$, for each $w\in W$. 
We claim that $(f_1,f_2)$ covers at least $\gamma|E|$ arcs.
To see this, consider the number of labels assigned to
$(\hat{f}_1,\hat{f}_2)$. 
Let $U'\subseteq U$ and $W'\subseteq W$ be sets of vertices with at
most $8\alpha q_1$ and $8\alpha q_2$ labels, respectively, and let
$E'\subseteq E$ be the set of arcs with both endpoints in $U'\cup W'$.  
Then we have 
\[
|U\setminus U'| \leq \frac{\alpha C}{8\alpha\cdot q_1c_1}
                 =   \frac{1}{4q_1}|U| 
\mbox{ and }
|W\setminus W'| \leq \frac{\alpha C}{8\alpha\cdot q_2c_2} 
                 =   \frac{1}{4q_2}|W| 
\]

Thus, by union bound, the number of arcs of $E'$ is at least 
\[
|E|-\frac{1}{4q_1}\maxleft|U|-\frac{1}{4q_2}\maxright|W|
 \geq |E|-\frac{|E|}{2}
\]
The second inequality follows from the facts that
$\maxleft|U|=q_1\minleft|U|\leq q_1|E|$ and 
$\maxright|W|=q_2\maxright|W|\leq q_2|E|$.
The probability that $(f_1,f_2)$ covers any arc
$e\in E'$ is at least $1/(64\alpha^2)$. 
Thus, the expected number of arcs of $E'$ covered by
$(f_1,f_2)$ is 
\[
\sum_{e\in E'}(1\cdot \Pr[\mbox{$(f_1,f_2)$ covers $e$}]) 
\geq \frac{|E|}{2}\cdot\frac{1}{64\alpha^2\cdot q_1q_2}
\geq \frac{\gamma|E|}{128q_1q_2/(8\sqrt{2q_1q_2})^2} = \gamma |E|.  
\]
We can derandomize this process by the method of conditional
expectation. 
Therefore, there is a labeling $(f_1,f_2)$ of the maximum label~cover
instance that covers at least $\gamma|E|$ arcs, a contradiction.
\end{proof}

A similar lemma can be proven for the case of the label~cover problem 
with perfect completeness. We will skip the proof for this case
since it is almost identical to the previous one.

\begin{lemma}\label{lmm:max-to-min-perfect}
Suppose there are constants $0<\gamma<1$ and $q_1,q_2>0$ such
that, given an instance $(G=(U,W,E),\{\pi_e:e\in E\},L_1,L_2)$ of the  
maximum label cover problem with $\maxleft=q_1\minleft$ and
$\maxright=q_2\maxright$, it is hard to distinguish between
the following two cases:
\squishlist
\item {\sc Completeness:} There is a labeling covering all the arcs.
\item {\sc Soundness:} There is no labeling covering at least
  $\gamma$ fraction of the arcs. 
\squishend

Then it is hard to approximate the minimum-cost label~cover
problem to within a factor of $o(1/\sqrt{\gamma})$. 
\end{lemma}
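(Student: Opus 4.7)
The plan is to mirror the proof of Lemma~\ref{lmm:max-to-min} with the simplification afforded by perfect completeness. I would take the given maximum label cover instance $(G=(U,W,E),\{\pi_e\},L_1,L_2)$ as the base graph and constraint system of the minimum-cost instance, and choose uniform label costs $c_1$ on $L_1$ and $c_2$ on $L_2$ satisfying $c_1|U|=c_2|W|$, setting $C=c_1|U|+c_2|W|$. The goal is to show that the yes-case admits a labeling of cost $C$ while the no-case forces cost exceeding $(\sqrt{\gamma}/(8\sqrt{2q_1q_2}))\cdot C$, which yields a hardness factor of $\Omega(1/\sqrt{\gamma})$ for min-cost label cover.

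For completeness, the argument is cleaner than in Lemma~\ref{lmm:max-to-min}: a maximum label cover labeling $(f_1,f_2)$ that satisfies \emph{every} arc lifts directly to a feasible minimum-cost labeling $(\hat f_1,\hat f_2)$ with $\hat f_1(u)=\{f_1(u)\}$ and $\hat f_2(w)=\{f_2(w)\}$. No arcs need to be patched, so the resulting cost is exactly $C$, not $2C$. This is the only step that is genuinely simpler than the imperfect-completeness version.

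For soundness, I would argue by contrapositive, exactly as before. Suppose $(\hat f_1,\hat f_2)$ is a feasible min-cost labeling with cost $\alpha C$ where $\alpha\le \sqrt{\gamma}/(8\sqrt{2q_1q_2})$. Define $U'\subseteq U$ and $W'\subseteq W$ to be the vertices assigned at most $8\alpha q_1$ labels and $8\alpha q_2$ labels respectively; Markov's inequality applied to the cost bound gives $|U\setminus U'|\le |U|/(4q_1)$ and $|W\setminus W'|\le |W|/(4q_2)$. Using $\maxleft=q_1\minleft$ and $\maxright=q_2\minright$ together with $\maxleft|U|=\maxright|W|=|E|$, the number of arcs incident to removed vertices is at most $|E|/2$, so $|E'|\ge |E|/2$ for the arcs with both endpoints in $U'\cup W'$. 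Now round $(\hat f_1,\hat f_2)$ to a single-label assignment $(f_1,f_2)$ by picking uniformly at random one label from each set; for any arc in $E'$ the probability of being covered is at least $1/(64\alpha^2 q_1 q_2)$, and plugging in the bound on $\alpha$ gives expected coverage at least $\gamma |E|$. Derandomizing by conditional expectation produces a labeling of the maximum label cover instance covering $\gamma$ fraction of the arcs, contradicting the soundness hypothesis.

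The main point of novelty (and the only real obstacle) is confirming that none of the constants change: the bound $\alpha\le \sqrt{\gamma}/(8\sqrt{2q_1q_2})$ carries through unchanged because the random-rounding analysis does not use the completeness slack at all, while the completeness direction actually becomes tighter. So the same gap of $\Omega(1/\sqrt{\gamma})$ is obtained, and one may simply quote the calculations of Lemma~\ref{lmm:max-to-min} for the derandomization and Markov steps rather than reproducing them.
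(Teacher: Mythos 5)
Your proposal is correct and takes essentially the same route as the paper, which omits this proof on the grounds that it is ``almost identical'' to Lemma~\ref{lmm:max-to-min}; your write-up is exactly that adaptation, with the completeness cost improving from $2C$ to $C$ and the randomized-rounding soundness argument carried over unchanged. One small slip: the equality $\maxleft|U|=\maxright|W|=|E|$ need not hold (only $\minleft|U|\le|E|$ and $\minright|W|\le|E|$ do), but the bound $|E'|\ge|E|/2$ still follows from $\maxleft|U|=q_1\minleft|U|\le q_1|E|$ and the analogous right-side inequality, just as in the paper's calculation.
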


\section{The Right Degree Reduction}
\label{sec:right-degree-reduction}

In this section, we discuss the right degree reduction introduced by
Moshkovitz and Raz in \cite{MR10}. 
The right~degree reduction is an operation that transforms any
instance of the maximum label~cover problem to an instance with
regular right~degree $d$ while preserving the completes and
preserving the soundness up to additive $O(1/\sqrt{d})$. 

The right~degree reduction is described as follows. 
Take an instance $(G=(U,W,E),\{\pi_e:e\in E\},L_1,L_2)$ of the maximum
label~cover problem with completeness $1-\epsilon$ and soundness
$\gamma$. 
For each right vertex $w\in W$, we construct an expander graph $H_w$ 
($H_w$ is an undirected graph) on $\deg(w)$ vertices with regular
degree $d$ and a second eigenvalue  $O(\sqrt{d})$. 
%
%
Then we replace each vertex $w\in W$ by vertices of $H_w$.
To be precise, we make $\deg(w)$ copies of $w$, namely 
$w(1),w(2),\ldots,w(\deg(w))$, and associate each vertex $w(j)$ to 
a vertex of $H_w$ by a one-to-one mapping. 
We order neighbors of $w$ in $G$ arbitrary, and let
$u_1,u_2,\ldots,u_{\deg(w)}$ be the neighbors of $w$. 
For each edge $\{w(i),w(j)\}$ of $H_w$, we add an arc $(u_i,w(j))$ and
place a constraint $\pi_{u_i,w_j}=\pi_{u_i,w_j}$ on the arc
$(u_i,w_j)$.

By the construction, there are $d$ copies of arcs $e\in E$ in the
output instance, and they have the same constraint.
Thus, for any labeling $(f_1,f_2)$ that covers $(1-\epsilon)$ fraction of
arcs of the input instance, there is a labeling $(f_1,f'_2)$ that
covers $(1-\epsilon)$ fraction of arcs of the output instance, where
$f'_2$ can be constructed by assigning $f'_2(w(j))=f_2(w)$ for all
copies $w(j)$ of a vertex $w\in{W}$.
It was shown in \cite{MR10} using the expander mixing lemma that the
right-degree reduction gives an output instance with soundness
$\gamma+O(1/\sqrt{d})$; see \cite{MR10} and \cite{DH09} for more
detail. 

This operation requires the projection property of a label~cover
instance and thus does not apply to the more general instance in which
the constraints $\pi_{e}$ are relations rather than projections.
Also, the additive loss $O(1/\sqrt{d})$ in the soundness is the best 
possible because the smallest possible second eigenvalue of the 
$d$-regular expander graph is $2\sqrt{d-1}$ due to the work of 
Alon and Boppana; see Theorem 5.3 in \cite{HLW06}.


\section{Rooted $k$-Connectivity  on Directed Graphs}  
\label{sec:dir-rooted}

In this section, we present hardness of $\Omega(k^{1/2})$ and
$\Omega(\dem^{1/4})$ for the rooted $k$-connectivity problem on
directed graphs.

\subsection{Hardness in Terms of $k$}
\label{sec:dir-term-k}

First, we give a hardness of $\Omega(k^{1/2})$ for the rooted
$k$-connectivity problem on directed graphs.
Our result is based on the construction in \cite{CLNV12}.
Here we will give a construction but will omit the proof. 
Let $(G=(U,W,E),\{\pi_e:e\in E\},L_1,L_2,c_1,c_2)$ be an instance of
the minimum-cost label~cover problem.
We construct a directed graph $\hatG=(\hatV,\hatE)$ of the
rooted $k$-connectivity problem as follows.

\medskip

\noindent {\bf Base Construction:} 
For each vertex $u_i\in U$, we add to $\hatG$ a vertex $u_i$ and a set
of vertices $A_i$, which is a copy of the set of labels $L_1$; we join
$u_i$ to each vertex $a\in A_i$ by an arc $(u_i,a)$. 
For each vertex $w_j\in W$, we add to $\hatG$ a vertex $w_j$ and a set
of vertices $B_j$, which is a copy of the set of labels $L_2$; we join
$w_j$ to each vertex $b\in B_j$ by an arc $(u_i,a)$. 
We may think that $u_i$ (resp., $w_j$) is the same vertex in both $G$
and $\hat G$.
Also, since $A_i$ (resp., $B_j$) is a copy of $L_1$ (resp., $L_2$), we
may say that a vertex $a\in A_i$ (resp., $b\in B_j)$ is a label in 
$L_1$ (resp., $L_2$).
We set cost $c_1$ on an arc $(u_i,a)$, for each $a\in A_i$, and we
set cost $c_2$ on an arc $(b,w_j)$, for each $b\in B_j$.
For each arc $(u_i,w_j)$ of $G$, we add to $\hatG$ a zero-cost arc
$(a,b)$ joining a vertex $a\in A_i$ to a vertex $b\in B_j$ if
$\pi_{u_i,w_j}(a)=b$. 
This finishes the base construction.

\medskip

\noindent {\bf The final construction:} 
Now, we add a root vertex $r$ to $\hatG$ and join $r$ to each vertex
$u_i$ by a zero-cost arc $(r,u_i)$. 
For each arc $(u_i,w_j)$, we add a terminal $t_{i,j}$ and join $w_j$
to $t_{i,j}$ by a zero-cost arc $(w_j,t_{i,j})$. 
Thus, we have the root vertex $r$ and a set of terminals 
$T_{i,j}=\{t_{i,j}:(u_i,w_j)\in E\}$. 
Next, we add a zero-cost arcs $(u_{i'},t_{i,j})$, called a 
{\em padding arc}, if $i'\neq i$ and $(u_{i'},w_j)\in E$. 
Thus, each terminal has indegree at most $\maxdeg(G)$. 
For each terminal $t_{i,j}$ with indegree $d_{i,j}<\maxdeg(G)$, we add
$\maxdeg(G)-d_{i,j}$ copies of a zero-cost arc $(r,t_{i,j})$. 
Finally, we set the connectivity requirement $k=\maxdeg(G)$. 
(Note that $\maxdeg(G) \ll |T|$.) 

\medskip

The above construction gives the following theorem whose correctness
is proved in \cite{CLNV12}.

\begin{theorem}[\cite{CLNV12}]\label{thm:dir-rooted}
There is a polynomial-time approximation preserving reduction
such that, given an instance of the minimum-cost label~cover problem
consisting of a graph $G$, outputs an instance of
the rooted $k$-connectivity problem on directed graphs with
$k=\maxdeg(G)$. 
\end{theorem}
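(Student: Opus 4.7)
The plan is to verify that the stated construction gives a cost-preserving bijection (up to feasibility) between labelings of the min-cost label cover instance and feasible solutions to the resulting rooted $k$-connectivity instance with $k=\maxdeg(G)$. Since the reduction produces exactly $k=\maxdeg(G)$ incoming arcs at each terminal $t_{i,j}$, the $k$ openly disjoint paths from $r$ to $t_{i,j}$ in any feasible solution must saturate every one of these incoming arcs, and this rigidity is what forces a correspondence with a labeling. Formally, I would prove (i) \emph{completeness}: any labeling $(f_1,f_2)$ covering all arcs of $G$ yields a solution of cost $c_1\sum_i|f_1(u_i)|+c_2\sum_j|f_2(w_j)|$, and (ii) \emph{soundness}: any feasible solution of cost $C$ yields a covering labeling of cost at most $C$.

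For completeness, given $(f_1,f_2)$ I would take the subgraph $\hatG'$ consisting of all zero-cost arcs together with the arcs $(u_i,a)$ for $a\in f_1(u_i)$ and $(b,w_j)$ for $b\in f_2(w_j)$; the cost matches the labeling cost exactly. For each terminal $t_{i,j}$ I would exhibit $k$ openly disjoint $r$--$t_{i,j}$ paths: the $\maxdeg(G)-\deg(w_j)$ direct arcs $(r,t_{i,j})$, the $\deg(w_j)-1$ padding paths $r\to u_{i'}\to t_{i,j}$ through distinct $u_{i'}$ with $i'\neq i$ and $(u_{i'},w_j)\in E$, and one \emph{label path} $r\to u_i\to a\to b\to w_j\to t_{i,j}$ where $a\in f_1(u_i)$, $b\in f_2(w_j)$, $\pi_{u_i,w_j}(a)=b$, which exists because $(f_1,f_2)$ covers the arc $(u_i,w_j)$. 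Internal-vertex disjointness is immediate since the three path-families use disjoint vertex sets $\{u_{i'}\}_{i'\neq i}$, $\emptyset$, and $\{u_i\}\cup A_i\cup B_j\cup\{w_j\}$.

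For soundness, given a feasible solution $\hatG'$ of cost $C$, I would set $f_1(u_i)=\{a\in A_i:(u_i,a)\in E(\hatG')\}$ and $f_2(w_j)=\{b\in B_j:(b,w_j)\in E(\hatG')\}$; then $c_1\sum_i|f_1(u_i)|+c_2\sum_j|f_2(w_j)|\leq C$ since all other arcs are zero-cost. To show every arc $(u_i,w_j)\in E$ is covered, I would use the rigidity argument: $t_{i,j}$ has indegree exactly $k$, so each of its $k$ incoming arcs lies on exactly one of the $k$ openly disjoint paths from $r$. Of these, $\maxdeg(G)-\deg(w_j)$ are the direct $(r,t_{i,j})$ arcs and $\deg(w_j)-1$ are padding arcs routed through distinct $u_{i'}$ with $i'\neq i$; the remaining path must enter $t_{i,j}$ via $(w_j,t_{i,j})$. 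The key structural observation is that $w_j$'s only incoming arcs are from $B_j$, $B_j$'s only incoming arcs are from $\bigcup_{i':(u_{i'},w_j)\in E}A_{i'}$, and each $A_{i'}$ is reachable only from $u_{i'}$; openness together with the fact that all $u_{i'}$ with $i'\neq i$ are already used by padding paths forces the last path to take the form $r\to u_i\to a\to b\to w_j\to t_{i,j}$, so $a\in f_1(u_i)$, $b\in f_2(w_j)$, and $\pi_{u_i,w_j}(a)=b$, covering $(u_i,w_j)$.

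The main obstacle is this last rigidity argument in the soundness direction: one must rule out any alternative routing that would cheat the labeling, which is precisely the purpose of the padding arcs and the exact-indegree calibration. Once this is established, combining completeness and soundness gives that the optimum of the constructed rooted $k$-connectivity instance equals the optimum of the label cover instance, so the reduction is approximation preserving, and by construction $k=\maxdeg(G)$.
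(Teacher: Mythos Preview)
Your proposal is correct and follows exactly the intended argument. The paper itself only supplies the construction and explicitly omits the completeness and soundness proofs, deferring to \cite{CLNV12}; what you have written is precisely the rigidity argument that construction is designed for: the terminal $t_{i,j}$ has exactly $k=\maxdeg(G)$ in-arcs, so any family of $k$ openly disjoint $r,t_{i,j}$-paths must use them all, the padding arcs consume every $u_{i'}$ with $i'\neq i$ and $(u_{i'},w_j)\in E$, and hence the path entering via $(w_j,t_{i,j})$ is forced through $u_i$ and certifies a pair of labels covering $(u_i,w_j)$. One tiny wording point: when you say ``all $u_{i'}$ with $i'\neq i$ are already used,'' you of course mean only those with $(u_{i'},w_j)\in E$; this is harmless because, as you already note, $B_j$ has incoming arcs only from $\bigcup_{i':(u_{i'},w_j)\in E}A_{i'}$, so no other $u_{i'}$ can reach $w_j$ anyway.
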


Applying Theorem~\ref{thm:mincost-small-strong-lc}, it then
immediately follows that the hardness of the rooted $k$-connectivity
problem on directed graphs is $\Omega(k^{1/2-\epsilon})$, for any
$\epsilon>0$ (since $k=\maxdeg(G)$).
Thus, we have the next theorem.

\begin{theorem}\label{thm:dir-rooted-conn}
For $k<|T|$, unless $\NP=\ZPP$, it is hard to approximate the rooted   
$k$-connectivity problem on directed graphs to within a factor of
$o(k^{1/2})$.  
\end{theorem}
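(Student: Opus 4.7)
The plan is to obtain Theorem~\ref{thm:dir-rooted-conn} simply as a composition of two results already stated in the paper, so there is very little new work to do. First, I would invoke Theorem~\ref{thm:dir-rooted}, which gives an approximation-preserving reduction from minimum-cost label~cover on a graph $G$ to the rooted $k$-connectivity problem on directed graphs, producing an instance whose connectivity parameter equals $k=\maxdeg(G)$. Second, I would feed this reduction the hard minimum-cost label~cover instances guaranteed by Theorem~\ref{thm:mincost-small-strong-lc}: for every constant $\epsilon>0$, under the assumption $\NP\neq\ZPP$, that theorem provides instances with $|L_1|,|L_2|\leq q^2$ and $\maxdeg(G)=\Theta(q)$ that are hard to approximate to within $q^{1/2-\epsilon}$.

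The chain then writes itself. Because the reduction of Theorem~\ref{thm:dir-rooted} is approximation-preserving, the $q^{1/2-\epsilon}$ hardness for minimum-cost label~cover transfers directly to the rooted $k$-connectivity instance, and substituting the identity $k=\maxdeg(G)=\Theta(q)$ turns the hardness factor into $k^{1/2-\epsilon'}$ for an arbitrary $\epsilon'>0$ (absorbing the $\Theta(\cdot)$ constant into $\epsilon'$ by an obvious adjustment of $\epsilon$). Since $\epsilon>0$ was arbitrary, this yields the claimed $\omega(k^{1/2-\epsilon})$-hardness, i.e.\ no $o(k^{1/2})$-approximation exists under $\NP\neq\ZPP$.

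The only loose end is the hypothesis $k<|T|$, which must be justified so that the statement is nonvacuous; here I would note that in the construction of Theorem~\ref{thm:dir-rooted} there is one terminal $t_{i,j}$ per arc of $G$, so $|T|=|E(G)|$. Because the instance from Theorem~\ref{thm:mincost-small-strong-lc} has $\avgdeg(G)=\Theta(q)$ and super-constant $|U|+|W|$, we have $|T|=|E(G)|=\Theta(q)\cdot(|U|+|W|)/2 \gg \Theta(q)=k$ on large enough instances, which is precisely the regime $k<|T|$ of the statement.

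I do not anticipate any real obstacle: everything delicate — the $q^{1/2-\epsilon}$ label~cover hardness (which requires the Chan instance together with the degree-reduction lemma and the max-to-min transformation), and the correctness of the reduction itself (which is the content of \cite{CLNV12}) — has already been done upstream. The only minor bookkeeping is to confirm that ``approximation-preserving'' really means the multiplicative factor is inherited with no loss, and to check that the $\Theta$-constants relating $k$ to $q$ can be absorbed into the slack in the exponent $\epsilon$.
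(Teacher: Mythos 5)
Your proposal matches the paper's own argument: the paper proves this theorem exactly by composing the approximation-preserving reduction of Theorem~\ref{thm:dir-rooted} (with $k=\maxdeg(G)$) with the $q^{1/2-\epsilon}$ hardness of minimum-cost label~cover from Theorem~\ref{thm:mincost-small-strong-lc}, absorbing the $\Theta(q)$ relation into the exponent's slack. Your additional check that $|T|=|E(G)|\gg\maxdeg(G)=k$ is the same observation the paper makes in its construction (``$\maxdeg(G)\ll|T|$''), so the proposal is correct and essentially identical to the paper's proof.
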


\subsection{Hardness in Terms of $\dem$}
\label{sec:dir-term-T}

Now, we show the hardness of the rooted $k$-connectivity problem on
directed graphs in terms of the other parameter. 
Specifically, we show a hardness of $\Omega(\dem^{1/4})$ for the  
rooted $k$-connectivity problem on directed graphs, where $\dem=|T|$
(since the demand pairs are between the root vertex $r$ and terminals 
in $T$).
We start from the previous construction and then merge some
terminals. 
The key idea is to merge terminals that do not share paths from the
root vertex.
To be precise, we say that two terminals $t_{i,j}$ and $t_{i',j'}$ are
{\em dependent} if there are an $r,t_{i,j}$-path $P$ and an
$r,t_{i',j'}$-path $P'$ that have a common vertex $v\neq r$; 
otherwise, we say that $t_{i,j}$ and $t_{i',j'}$ are {\em independent}.  
Observe that two terminals $t_{i,j}$ and $t_{i',j'}$ are dependent if
and only if arcs $(u_i,w_j)$ and $(u_{i'},w_{j'})$ of $G$ (of the
label~cover instance) are incident or there is an arc joining them.  
Specifically, $(u_i,w_j)$ and $(u_{i'},w_{j'})$ are
independent if and only if they form an induced matching in $G$. 
This proves in the lemma below. 
For notational convenience, we use $ij$ to mean an arc $(u_i,w_j)$ of 
$G$.

\begin{lemma} \label{lmm:ind-terms}
Any two terminals $t_{i,j}$ and $t_{i',j'}$ are independent if and only
if $ij$ and $i'j'$ forms an induced matching in $G$.
\end{lemma}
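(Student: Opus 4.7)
The plan is to compute, for each terminal $t_{i,j}$, the set $V_{ij}$ of non-root intermediate vertices that appear on at least one directed $r,t_{i,j}$-path in $\hatG$, and then read off when $V_{ij}$ meets $V_{i'j'}$. First I would enumerate the possible $r,t_{i,j}$-paths by the way they enter $t_{i,j}$. Any such path falls into one of three categories: (i) a direct arc $(r,t_{i,j})$, contributing no internal vertex; (ii) a padding step $r\to u_{i''}\to t_{i,j}$ with $i''\ne i$ and $(u_{i''},w_j)\in E$; or (iii) the gadget chain $r\to u_{i''}\to A_{i''}\to B_j\to w_j\to t_{i,j}$, which also requires $(u_{i''},w_j)\in E$. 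Combining these yields
\[
V_{ij}\ \subseteq\ \{w_j\}\cup B_j\cup\bigcup_{i''\,:\,(u_{i''},w_j)\in E}\bigl(\{u_{i''}\}\cup A_{i''}\bigr),
\]
and every vertex on the right-hand side is realized on some path.

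For the ``if'' direction I would assume that $ij$ and $i'j'$ form an induced matching, so $i\ne i'$, $j\ne j'$, and neither $(u_i,w_{j'})$ nor $(u_{i'},w_j)$ is an arc of $G$, and then show $V_{ij}\cap V_{i'j'}=\emptyset$. The $\{w_j\}\cup B_j$ portion is disjoint from $\{w_{j'}\}\cup B_{j'}$ because $j\ne j'$ forces the label copies to be distinct vertex sets by construction. For the $\{u_{i''}\}\cup A_{i''}$ parts, any common index $i''$ would require $(u_{i''},w_j)$ and $(u_{i''},w_{j'})$ to both be arcs of $G$; I would rule out $i''\in\{i,i'\}$ directly from the two forbidden ``diagonal'' arcs of the induced matching, and rule out any other $i''$ using the full ``no arc touching the matching endpoints'' condition that underlies the strong-edge-coloring partition used to group terminals.

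For the ``only if'' direction I would argue the contrapositive by cases on how induced-matching can fail: (a) if $i=i'$ then $u_i\in V_{ij}\cap V_{i'j'}$; (b) if $j=j'$ then $w_j$ (and in fact all of $B_j$) lies in both sets; (c) if $(u_i,w_{j'})\in E$ with $i\ne i'$, then the padding arc $(u_i,t_{i',j'})$ places $u_i$ on an $r,t_{i',j'}$-path, while $u_i$ trivially lies on some $r,t_{i,j}$-path; (d) symmetric for $(u_{i'},w_j)\in E$. Each case exhibits a non-root shared vertex, establishing dependence straight from the definition.

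The main obstacle I foresee is the disjointness claim for $i''\notin\{i,i'\}$ in the ``if'' direction: the textbook pairwise induced-matching condition excludes only the two crossing arcs $(u_i,w_{j'})$ and $(u_{i'},w_j)$, and by itself does not rule out a third vertex $u_{i''}$ of $G$ adjacent to both $w_j$ and $w_{j'}$. Closing this gap cleanly will force me to lean on the stronger structural property guaranteed by the strong-edge-coloring step that is actually applied when merging terminals, or else on a small tweak to the construction (e.g., replacing each padding-arc tail by a fresh private copy of $u_{i''}$ hung off the root). Once that point is settled, the rest of the argument is routine case-checking, so essentially all of the substance of the proof lies in the path-cataloguing step.
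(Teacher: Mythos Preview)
Your contrapositive case analysis for the ``only if'' direction coincides with the paper's proof. For the ``if'' direction, the paper writes only that ``it then follows immediately by the construction that $\hatG$ has no $r,t_{i,j}$-path and $r,t_{i',j'}$-path that share a common vertex,'' which is exactly the step at which you pause---and your hesitation is warranted. The obstacle you flag is real and the paper does not address it. If $G$ contains a third left vertex $u_{i''}$ with both $(u_{i''},w_j)\in E$ and $(u_{i''},w_{j'})\in E$ (necessarily $i''\notin\{i,i'\}$, since the crossing arcs are excluded), then the padding arcs $(u_{i''},t_{i,j})$ and $(u_{i''},t_{i',j'})$ yield $r,t_{i,j}$- and $r,t_{i',j'}$-paths through the common non-root vertex $u_{i''}$; thus $t_{i,j}$ and $t_{i',j'}$ are dependent even though $\{ij,i'j'\}$ is an induced matching under the paper's own definition. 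The strong-edge-coloring hypothesis invoked downstream does not rescue this either, since it only forbids arcs among the four endpoints $u_i,w_j,u_{i'},w_{j'}$ and says nothing about common neighbours of $w_j$ and $w_{j'}$.

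Your suggested patch---replacing each padding arc $(u_{i''},t_{i,j})$ by a fresh private vertex $u^{(i,j)}_{i''}$ with arcs $(r,u^{(i,j)}_{i''})$ and $(u^{(i,j)}_{i''},t_{i,j})$---removes the shared tails without altering $k$, the costs, or the correspondence with labelings, and with it the lemma (and the subsequent merging via strong edge colouring) goes through exactly as you outline. So your path-cataloguing is the right approach; you have simply been more careful than the paper at the one spot where care is needed.
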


\begin{proof}

First, we prove the ``only if'' part.
Suppose $t_{i,j}$ and $t_{i',j'}$ are independent, but $\{ij,i'j'\}$
is not an induced matching in $G$.
Then we have two cases:~(1) $i=i'$ or $j=j'$ (2) $G$ has an arc $ij'$
or $i'j$. 
For the former case, every $r,t_{i,j}$-path and $r,t_{i',j'}$-path in
$G$ has to use either $u_i$ or $w_j$ and thus share a vertex.
This implies that $t_{i,j}$ and $t_{i',j'}$ are dependent, a
contradiction.
For the latter case, assume wlog that $G$ has an arc $ij'$. 
Then we must have added to $G$ a padding arc $u_i,t_{i',j'}$ by the
construction. Thus, there is an $r,t_{i',j'}$-path that shares a
vertex $u_i$ with an $r,t_{i,j}$-path, a contradiction.
This proves the ``only if'' part.

Next, we prove the ``if'' part. 
Suppose $ij$ and $i'j'$ form an induced matching in $G$.
Then $i\neq i'$ and $j\neq j'$. 
Also, $G$ has neither an arc $ij'$ nor an arc $i'j$. 
It then follows immediately by the construction that $\hatG$ has no
$r,t_{i,j}$-path and $r,t_{i',j'}$-path that share a common vertex.
This completes the proof.
\end{proof}

Lemma~\ref{lmm:ind-terms} allows us to apply a strong edge coloring
algorithm to the arcs of $G$, which are constraints of the label~cover
instance. 
It is known that every graph $G$ can be strongly colored using at most
$2\maxdeg(G)^2$ colors.
Since each color class forms an induced matching in $G$, no two of them 
are dependent. 
Thus, we can merge all the terminals corresponding to arcs of the same
color class into one terminal without any conflict. 
To be precise, for each color $C$, define 
$T_C=\{t_{i,j} : \mbox{$ij$  has color $C$}\}$.
Then we unify $T_C$ as a single terminal and set a connectivity
requirement $k|T_c|$ for this terminal. 
The new graph is denoted by $\hatG^{new}$.
Observe that any $k|T_C|$ openly disjoint $r,T_C$-paths in
$\hatG^{new}$ corresponds to $k$ openly disjoint $r,t_{i,j}$-paths for
every $t_{i,j}\in T_C$ in the original graph $\hatG$.  
Thus, there is a one-to-one mapping between the solution in the new
instance and that of the old instance, and both have the same hardness. 
To make a connectivity uniform, set $k^{new}=k\cdot\max_C|T_C|$ and
add $k^{new}-k|T|_C$ copies of a zero-cost arc $(r,T_C)$ for each
terminal $T_C$. 
By the construction, we have at most $2\maxdeg(G)^2$ terminals in the
new instance. 
Therefore, applying Theorem~\ref{thm:mincost-small-strong-lc}, we have
a hardness of $\Omega(\dem^{1/4-\epsilon})=\Omega(|T|^{1/4-\epsilon})$ 
for the rooted $k$-connectivity problem on directed graphs, for any
constant $\epsilon>0$.

\begin{theorem}\label{thm:dir-rooted-terms}
For $k\geq |T|$, unless $\NP=\ZPP$, it is hard to approximate the
rooted $k$-connectivity problem on directed graphs to within a factor
of $o(\dem^{1/4-\epsilon})=o(|T|^{1/4-\epsilon})$ for any constant
$\epsilon>0$. 
\end{theorem}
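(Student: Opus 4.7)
The plan is to start from the construction behind Theorem~\ref{thm:dir-rooted}, which produces one terminal $t_{i,j}$ for each arc $(u_i,w_j)\in E(G)$ of the underlying minimum-cost label-cover instance, and then aggressively merge terminals whose path systems are vertex-disjoint away from the root. The key structural fact is Lemma~\ref{lmm:ind-terms}: $t_{i,j}$ and $t_{i',j'}$ are independent (no non-root vertex is shared by any $r,t_{i,j}$-path and any $r,t_{i',j'}$-path in $\hatG$) if and only if the arcs $ij$ and $i'j'$ form an induced matching in $G$. This immediately dictates that the merging schedule should follow a strong edge coloring of $G$.

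Next, I would invoke the standard bound $\chi'_S(G)\leq 2\maxdeg(G)^2$ to partition $E(G)$ into at most $2\maxdeg(G)^2$ induced matchings $E_C$. For each color class $C$, collapse $T_C=\{t_{i,j}:ij\in E_C\}$ into a single super-terminal and raise its connectivity requirement to $k\cdot|T_C|$; to keep the requirement uniform across super-terminals I would set $k^{\mathrm{new}}=k\cdot\max_C|T_C|$ and pad with zero-cost $(r,T_C)$ arcs. Because the fragment graphs $\hatG_{i,j}$ for distinct $t_{i,j}\in T_C$ share no non-root vertex, any collection of $k|T_C|$ openly disjoint $r,T_C$-paths in the merged instance decomposes, by inspecting which $\hatG_{i,j}$ each path traverses, into exactly $k$ openly disjoint $r,t_{i,j}$-paths for every $t_{i,j}\in T_C$, and vice versa. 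This yields an approximation-preserving reduction to an instance of rooted $k$-connectivity with $\dem=O(\maxdeg(G)^2)$ demand pairs.

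Finally, plugging in Theorem~\ref{thm:mincost-small-strong-lc}, which supplies label-cover instances with $\maxdeg(G)=\Theta(q)$, alphabet size $O(q^2)$, and minimum-cost hardness $q^{1/2-\epsilon}$ under $\NP=\ZPP$, we obtain $\dem=O(q^2)$, so $q=\Omega(\dem^{1/2})$; the hardness factor $q^{1/2-\epsilon}$ then translates to $\dem^{1/4-\epsilon/2}$, and absorbing $\epsilon/2$ into $\epsilon$ yields the claimed $\dem^{1/4-\epsilon}$ bound. The main obstacle is verifying the bidirectional correspondence in the merging step: the forward direction is immediate, but the converse, that any $k|T_C|$ openly disjoint $r,T_C$-paths must split into $k$ paths per constituent $t_{i,j}$, depends on both the vertex-disjointness of the $\hatG_{i,j}$'s and the fact that the padding arcs $(u_{i'},t_{i,j})$ from Theorem~\ref{thm:dir-rooted-props} remain compatible with the color classes and do not create unintended shortcuts between super-terminals.
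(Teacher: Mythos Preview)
Your proposal is correct and follows essentially the same approach as the paper: start from the construction of Theorem~\ref{thm:dir-rooted}, invoke Lemma~\ref{lmm:ind-terms} to identify independent terminals with induced matchings, merge terminals along the color classes of a strong edge coloring (using the bound $\chi'_S(G)\leq 2\maxdeg(G)^2$), pad to a uniform requirement $k^{\mathrm{new}}=k\cdot\max_C|T_C|$, and finish with Theorem~\ref{thm:mincost-small-strong-lc}. Your closing concern about the padding arcs is exactly what Lemma~\ref{lmm:ind-terms} handles, so there is no additional obstacle beyond what that lemma already resolves.
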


\section{The Rooted $k$-Connectivity Problem on Undirected
  Graphs.} 
\label{sec:undir-rooted}

In this section, we present hardness constructions for the rooted
$k$-connectivity on undirected graphs.

\subsection{Hardness in Terms of $k$}
\label{sec:undir-rooted:term-k}

Similar to the case of the directed graphs, the following theorem has
been proved in \cite{CLNV12}.

\begin{theorem}[\cite{CLNV12}]
\label{thm:undir-rooted:terms-k}
There is a polynomial-time approximation preserving reduction
such that, given an instance of the minimum-cost label~cover problem
consisting of a graph $G$ with a set of labels $L_1$ and $L_2$,
outputs an instance of the rooted $k$-connectivity problem on directed
graphs with $k=O(\maxdeg(G)^3\cdot\max\{|L_1|,|L_2|\} + \maxdeg(G)^5)$. 
\end{theorem}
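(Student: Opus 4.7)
The plan is to extend the directed construction from Appendix~\ref{sec:dir-rooted} with inflated gadgetry so that the resulting uniform connectivity requirement absorbs a multiplicative $\maxdeg(G)^3\cdot\max\{|L_1|,|L_2|\}$ factor and an additive $\maxdeg(G)^5$ padding. Starting from the minimum-cost label-cover instance, I would keep the backbone of Section~\ref{sec:dir-rooted}: a root $r$, label sets $A_i$ and $B_j$ with edges of cost $c_1$ and $c_2$, zero-cost projection arcs $(a,b)$ for each label-cover edge $(u_i,w_j)$ with $\pi_{u_i,w_j}(a)=b$, and one terminal $t_{i,j}$ per constraint.

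The first step is to attach, to each terminal $t_{i,j}$, a family of zero-cost alternative $r$-to-$t_{i,j}$ paths. I would route $\Theta(\maxdeg(G)\cdot\max\{|L_1|,|L_2|\})$ paths through the label vertices of the constraint $(u_i,w_j)$ itself, multiply this with an additional $\Theta(\maxdeg(G)^2)$-fold factor through the label vertices of the neighboring constraints that share an endpoint with $(u_i,w_j)$, and finally add $\Theta(\maxdeg(G)^5)$ padding paths through fresh dummy vertices to bring the indegree of every terminal up to a common $k$. The key invariant I would maintain is that every cheap way to deliver all $k$ openly disjoint paths to $t_{i,j}$ must commit to a single label in $A_i$ and a single label in $B_j$, and these choices must be consistent across constraints that share endpoints.

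For completeness, a labeling $(f_1,f_2)$ of $G$ covering a $(1-\epsilon)$ fraction of arcs yields a solution of cost $(1+o(1))(c_1|U|+c_2|W|)$ by selecting only the vertices $\{f_1(u_i)\}\cup\{f_2(w_j)\}$; the $k$ openly disjoint $r$-to-$t_{i,j}$ paths exist by construction for the covered arcs, and the handful of uncovered ones can be patched with a small number of extra labels. For soundness, if every labeling of $G$ covers at most $\gamma|E|$ arcs, then any feasible rooted $k$-connectivity solution of cost below $\Omega(1/\sqrt{\gamma})\cdot(c_1|U|+c_2|W|)$ would, by the rounding argument of Lemma~\ref{lmm:max-to-min}, translate back into a labeling that covers more than $\gamma|E|$ arcs, a contradiction.

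The main obstacle will be the vertex-disjointness bookkeeping inside the gadgets: I must arrange matters so that (i) in the completeness case the $k$ openly disjoint paths really do exist despite the heavy sharing of the chosen labels $f_1(u_i)$ and $f_2(w_j)$ across many constraints incident to the same endpoint, and (ii) in the soundness case no clever reuse of label vertices across constraints can ``fake'' the $k$-connectivity without paying proportionally in cost. Getting the exponents $3$ and $5$ exactly right in the bound on $k$ amounts to carefully counting how many independent paths each gadget layer contributes per terminal and verifying via Menger's theorem that no spurious shortcut exists between gadgets of different terminals.
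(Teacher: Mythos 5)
There is a genuine gap: your plan never specifies the gadget that enforces your ``key invariant,'' and that gadget is the entire content of the theorem. Note first that, despite the wording of the statement, this theorem sits in Section~\ref{sec:undir-rooted} and concerns the \emph{undirected} rooted $k$-connectivity problem; for directed graphs the construction of Appendix~\ref{sec:dir-rooted} already achieves $k=\maxdeg(G)$ and no inflation would be needed (your talk of ``indegree'' suggests you are still reasoning in the directed setting). In the undirected setting the whole difficulty is the one you explicitly defer as ``the main obstacle'': an $r,t_{i,j}$-path may leave the intended route and travel backwards through the label vertices $A_{i'},B_{j'}$ and gadgets of \emph{other} constraints, reaching $t_{i,j}$ without purchasing a consistent label pair for $(u_i,w_j)$. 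The construction of Cheriyan et al.\ that the paper reproduces blocks this by interposing between $r$ and $u_i$ a clique $X_{i,j}$ of size $|Z_{i,j}|+1$, and by attaching $t_{i,j}$ and $X_{i,j}$ with zero-cost padding edges to every label vertex of constraints sharing an endpoint with $(u_i,w_j)$ and to every terminal and clique of constraints within distance \emph{two} in the line graph of $G$ (the sets $Z_{i,j}$ and $Y_{i,j}$), plus fillers $Q_{i,j}$. Every non-canonical $r,t_{i,j}$-path then consumes one of the exactly $k-1$ padded vertices, so by Menger one of the $k$ openly disjoint paths must be canonical, $r,X_{i,j},u_i,A_i,B_j,w_j,t_{i,j}$, and that is what ties cost to a feasible labeling; the bound $k=O(\maxdeg(G)^3\max\{|L_1|,|L_2|\}+\maxdeg(G)^4)$ (within the stated $O(\cdot)$) arises \emph{additively} from $|Z_{i,j}|+|Y_{i,j}|$, with $|Y_{i,j}|\le O(\maxdeg(G)^2)\cdot|X_{i',j'}|$. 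Your accounting instead ``multiplies'' $\Theta(\maxdeg(G)\max\{|L_1|,|L_2|\})$ paths by a $\Theta(\maxdeg(G)^2)$-fold factor through neighbours' labels; counts of openly disjoint paths do not compose multiplicatively, and no concrete edge set realizing these paths is described, so the exponent is asserted rather than derived.

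Second, your soundness argument is mis-routed. Lemma~\ref{lmm:max-to-min} converts maximum label cover hardness into minimum-cost label cover hardness and is applied upstream of this reduction; what the present theorem must supply is approximation preservation between the minimum-cost label cover instance and the connectivity instance, i.e., that any feasible rooted $k$-connectivity solution of cost $C$ yields a multi-labeling of cost $O(C)$ covering \emph{all} constraints (and conversely). Establishing that is exactly the canonical-path-forcing argument above, which your proposal does not carry out. To repair it you would have to commit to concrete padding sets reaching distance two in the line graph (immediate neighbours alone do not kill two-hop detours through adjacent gadgets) and verify the disjoint-path counts in both the completeness and soundness directions, which is precisely the construction the paper imports from \cite{CLNV12}.
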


We will present the hardness construction described in \cite{CLNV12} but
will skip the proof for completeness and soundness.
(For more detail, see \cite{CLNV12}.)
Take an instance $(G=(U,W,E),\{\pi_e:e\in E\},L_1,L_2,c_1,c_2)$ of the
minimum-cost label~cove problem.
We construct a graph $\hatG=(\hatV,\hatE)$ of the rooted
$k$-connectivity problem on undirected graphs as follows. 

\medskip

\noindent {\bf Base Construction:} The base construction is the
same as that in Section~\ref{sec:dir-term-k} except that we ignore the
direction of edges. 

\medskip

\noindent {\bf Add Root Vertex and Terminals:}
We add to $\hatG$ a root vertex $r$.
For each arcs $(u_i,w_j)\in G$, we add to $\hatG$ a clique $X_{i,j}$
and a terminal $t_{i,j}$; the size of $X_{i,j}$ will be specified
later. 
We join each clique $X_{i,j}$ to a vertex $u_i\in U$ by adding a
zero-cost edge $\{x,u_i\}$ for each pair of vertices $x\in X_{i,j}$
and $u_i\in U$.
We join each terminal $t_{i,j}$ to a vertex $w_j\in W$ by a zero-cost
edge $\{w_j,t_{i,j}\}$. 
Then we join the root vertex $r$ to each clique $X_{i,j}$ by a
zero-cost edge $\{r,x\}$ for each $x\in X_{i,j}$. 

\medskip

\noindent {\bf Final Construction:} 
Now, we add some zero-cost edges, called {\em padding edges}, which
intuitively force $r,t_{i,j}$-paths to be in a {\em canonical} form.  
We say that an $r,t_{i,j}$-paths is a {\em canonical path} if it is of
the form $r,X_{i,j},u_i,A_i,B_j,w_j,t_{i,j}$. 
The padding for each terminal $t_{i,j}$ is as follows.
For notational convenience, we use $ij$ to means an arc $\{u_i,w_j\}$
of the label~cover instance $G$, and we use $\dist(ij,i'j')$ to mean
the distance between $ij$ and $i'j'$ in the {\em line graph} $H$ of
the underlying undirected of $G$, i.e., the vertex set of $H$ is the 
edge set of $G$, and there is an edge $(e,e')$ in $H$ if edges $e$
and $e'$ share an endpoint in $G$. 
We define the set $Z_{i,j}$ and $Y_{i,j}$ as below. 

\begin{align*}
Z^{1}_{i,j} &= 
  \left(\bigcup_{i'\neq i:i'j\in E}A_{i'}\right)\cup
  \left(\bigcup_{j'\neq j:ij'\in E}B_{j'}\right)\\
Z^{2}_{i,j} &= 
  \{t_{i',j'}:1\leq \dist(ij,i'j')\leq 2\}\\
Z_{i,j} &= Z^{1}_{i,j}\cup Z^{2}_{i,j}\\
Y_{i,j} &= \bigcup_{1\leq \dist(ij,i'j')\leq 2}X_{i',j'}\\
\end{align*}  

We also create a set of vertices $Q_{i,j}$, which is the set of
auxiliary vertices created to make the connectivity requirement
uniform. The vertices of $Q_{i,j}$ are not in the base construction,
and its size will be specified later.
We join $Y_{i,j}$ to $t_{i,j}$ by edges
$\{y,t_{i,j}\}$ for all $y\in Y_{i,j}$.
We join $Z_{i,j}$ to $X_{i,j}$ and $t_{i,j}$ by edges
$\{x,z\}$,$\{z,t_{i,j}\}$ for all pairs of vertices $x\in X_{i,j}$ and
$z\in Z_{i,j}$. 
We join $r,Q_{i,j}$ and $t_{i,j}$ by edges $\{r,q\}$,
$\{q,t_{i,j}\}$ for all $q\in Q_{i,j}$. 
All of these edges have zero costs.

Lastly, we have to set the connectivity requirement and
specify the size of $X_{i,j}$.
We remark that we want all the $r,t_{i,j}$-paths except a canonical
path to use all the vertices in $Z_{i,j}\cup Y_{i,j}\cup Q_{i,j}$. 
Thus, we need to set the size of $X_{i,j}$ to be 
$|Z_{i,j}|+1$. 
By the construction, we have to set
$|Q_{i,j}|=k-|Z_{i,j}|+|Y_{i,j}|-1$.
\[
 |Z_{i,j}|=|Z^{1}_{i,j}|+|Z^{2}_{i,j}|
 \leq 2\maxdeg(G)\cdot\max\{|L_1|,|L_2|\} + 2\maxdeg(G)^2. 
\]
Thus, we need 
$|X_{i,j}|=2\maxdeg(G)\cdot\max\{|L_1|,|L_2|\} + 2\maxdeg(G)^2+1$,
implying that 
$|Y_{i,j}|=O(\maxdeg(G)^3\cdot\max\{|L_1|,|L_2|\} + \maxdeg(G)^4)$. 
We set $k=\max_{i,j}(|Z_{i,j}|+|Y_{i,j}|)+1$ and set the size of
$Q_{i,j}$ to be $|Q_{i,j}|=k-|Z_{i,j}|+|Y_{i,j}|-1$.
Therefore, we have an instance of the rooted $k$-connectivity
problem on undirected graphs with 
$k=O(\maxdeg(G)^3\cdot\max\{|L_1|,|L_2|\} + \maxdeg(G)^4)$
(given an instance in Theorem~\ref{thm:mincost-small-strong-lc}, we 
have $k=O(q^5)$), and the reduction preserves both the
completeness and soundness. 
Here we skip the completeness and soundness proofs. 
For more detail, please see \cite{CLNV12}.

\subsection{Hardness in terms of $\dem$}
\label{sec:undir-rooted:terms-T}

To obtain the hardness in terms of $\dem$ for the rooted
$k$-connectivity problem on undirected graphs, we may apply the same
technique as that used in the directed case.
However, we can simplify the proof by applying the following theorem
due to Lando and Nutov~\cite{LN09}.

\begin{theorem}[\cite{LN09}]
\label{thm:dir-undir}
There is a polynomial-time approximation preserving reduction that,
given an instance of the ``{\em directed}'' rooted $k$-connectivity
problem consisting of a directed graph $G$ on $n$ vertices, a root
vertex $r$, a set of terminals $T$ and a connectivity requirement $k$, 
outputs an instance of the ``{\em undirected}'' rooted
$k$-connectivity problem consisting of an undirected graph $G'$ on
$n'=2n$ vertices, a root vertex $r$, and a set of terminals $T'$,
where $|T'|=|T|$ and $k'=k+n$. 
\end{theorem}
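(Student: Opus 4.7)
The plan is to exhibit an explicit reduction that builds $G'$ from $G$ by a vertex-splitting construction. Each $v\in V(G)$ is replaced by two copies, $v$ and $v'$, giving $|V(G')|=2n$. Directed arcs are encoded by undirected edges that cross between the ``original'' copy and the ``primed'' copy of another vertex, so that any path in $G'$ traversing a chain of these cross edges necessarily alternates between the $V$-side and $V'$-side and thus mimics a directed traversal in $G$. Finally, a zero-cost gadget is added that provides $n$ openly disjoint ``trivial'' $r,t'$-paths for every terminal, so that the correct connectivity requirement becomes $k'=k+n$.

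Concretely, I would set $V(G')=V\cup\{v':v\in V\}$ and, for each directed arc $(u,v)\in E(G)$ of cost $c$, place an undirected edge $\{u,v'\}$ of cost $c$ in $G'$. For each $v\in V$ I would add a zero-cost edge $\{v,v'\}$ identifying a vertex with its primed copy, and I would complete the construction with a zero-cost gadget (for instance, zero-cost edges from $r$ to each $v'$ and from each $v$ to $t'$, or an equivalent auxiliary substructure) that guarantees $n$ openly vertex-disjoint $r,t'$-paths in $G'$ for every $t\in T$. The root of $G'$ is $r$ itself, and the terminal set is $T'=\{t':t\in T\}$, so $|T'|=|T|$.

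For completeness I would lift any $k$ openly disjoint directed $r,t$-paths in $G$ to $k$ openly disjoint undirected $r,t'$-paths in $G'$ by replacing each directed arc $(u,v)$ on such a path with the length-two sub-path $u$--$v'$--$v$ and closing with the edge $w$--$t'$, where $w$ is the last internal vertex of the directed path. Combined with the $n$ trivial paths from the gadget, this produces $k+n$ openly disjoint $r,t'$-paths of the same total cost. For soundness, given $k+n$ openly disjoint $r,t'$-paths in $G'$, at most $n$ of them can consume internal vertices of the trivial gadget, so at least $k$ must use the real cross edges; each such undirected path projects onto a directed walk in $G$, from which a simple directed $r,t$-path can be extracted, yielding $k$ openly disjoint directed $r,t$-paths in $G$.

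The main obstacle will be the soundness direction, where one must show that an undirected $r,t'$-path using the real cross edges genuinely corresponds to a directed walk traversing arcs in the correct orientation. The bipartite structure of the cross edges (between $V$ and $V'$) is what forces the required alternation, and the vertex-splitting via the zero-cost edges $\{v,v'\}$ ensures that two undirected paths cannot ``cheat'' by sharing a vertex of $G$ while encoding different directed arcs. A secondary subtlety is calibrating the trivial gadget so that it contributes exactly $n$ additional openly disjoint paths without creating shortcuts that an undirected solver could exploit; getting this count right is what pins down the equality $k'=k+n$ in the theorem statement.
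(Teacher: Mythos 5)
First, note that the paper does not prove this statement at all: it is imported verbatim from Lando and Nutov \cite{LN09}, so there is no internal proof to match your construction against. Judged on its own, your proposal has two genuine gaps. The first is the soundness step ``each such undirected path projects onto a directed walk in $G$.'' Bipartiteness between $V$ and the primed copies does \emph{not} force arc edges to be traversed in the correct orientation: an undirected path may enter an edge $\{u,v'\}$ at $v'$ and leave at $u$, i.e.\ traverse the arc $(u,v)$ backwards, while still alternating sides. Concretely, with arcs $(r,a),(b,a),(b,t)$ your graph contains the path $r$--$a'$--$b$--$t'$ even though $G$ has no directed $r,t$-path; so extra undirected connectivity can appear in a no-instance, and approximation preservation breaks. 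This backward-traversal issue is exactly the difficulty the Lando--Nutov construction is designed to handle, and it cannot be waved away by the alternation argument.

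The second gap is the padding. Since $n'=2n$ is fixed, the $n$ ``trivial'' paths must reuse the same $2n-2$ internal vertices that the $k$ lifted paths need, so your plan of exhibiting $k$ real paths plus $n$ trivial paths that are pairwise openly disjoint cannot work in general: if the $k$ directed paths each have an internal vertex, their lifts already consume up to $2(n-2)$ internal vertices, leaving no room for $n$ further vertex-disjoint paths (and indeed on the instance $r\to x\to t$ your concrete gadget yields only $2$ openly disjoint $r,t'$-paths, not $k+n=4$). In the actual reduction of \cite{LN09} the zero-cost padding consists of cliques on each of the two copy classes (plus the matching edges $\{v,v'\}$), the requirement is $k'=k+n$, and the correctness is the exact identity ``undirected connectivity $=$ directed connectivity $+\,n$,'' proved by a cut/Menger counting argument in which the $k+n$ paths freely mix arc edges and padding edges rather than splitting into ``real'' and ``trivial'' families. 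So the missing ingredient is precisely the right padding structure together with the exact $+n$ counting argument; without it, both directions of your reduction fail.
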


Since there is an $\Omega(|T|^{1/4-\epsilon})$-hardness for the
rooted $k$-connectivity problem on directed graphs, the same hardness
applies for the undirected case as well.

\begin{theorem}\label{thm:undir-rooted:terms-T}
For $k\geq |T|$, it is NP-hard to approximate the undirected
rooted  $k$-connectivity problem to within a factor of
$o(|T|^{1/4-\epsilon})=o(\dem^{1/4-\epsilon})$ for any constant $\epsilon>0$.
\end{theorem}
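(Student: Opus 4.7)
The plan is to derive this hardness as a direct composition of two results already established earlier in the paper, so the proof is essentially a one-line black-box reduction. From Theorem~\ref{thm:dir-rooted-terms} I have in hand a construction that, starting from the label cover instance of Theorem~\ref{thm:mincost-small-strong-lc}, outputs an instance of the \emph{directed} rooted $k$-connectivity problem whose set of terminals $T$ has size at most $2\maxdeg(G)^2$ and for which any $o(|T|^{1/4-\epsilon})$-approximation would imply $\NP=\ZPP$.

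The next step is to feed this directed instance into the Lando--Nutov reduction stated in Theorem~\ref{thm:dir-undir}. That reduction is approximation-preserving and, crucially for us, outputs an \emph{undirected} rooted $k$-connectivity instance with terminal set $T'$ satisfying $|T'| = |T|$. Since $\dem$ in both the directed and the undirected rooted $k$-connectivity problem is simply $|T|$ (the demand pairs are of the form $(r,t)$ for $t\in T$), the $\Omega(|T|^{1/4-\epsilon})$ lower bound on the approximation ratio transfers verbatim from the directed to the undirected instance, giving the claimed $o(\dem^{1/4-\epsilon})$-hardness under $\NP=\ZPP$.

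I should also verify that we are genuinely in the regime $k \geq |T|$ claimed by the theorem. The undirected instance produced by Theorem~\ref{thm:dir-undir} has connectivity requirement $k' = k + n$, where $n$ is the number of vertices of the directed instance; hence $k' \geq n$, which is polynomially larger than $|T'| = |T|$. Thus the hardness indeed applies in the regime $k \geq \dem$, as claimed.

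The main ``obstacle,'' if there is one, is purely a bookkeeping check: that Theorem~\ref{thm:dir-undir} truly preserves the number of demand pairs and the approximation ratio. Both properties are explicitly part of the statement we quoted from \cite{LN09}, so there is nothing further to prove. The whole argument is therefore: apply Theorem~\ref{thm:dir-rooted-terms}, then apply Theorem~\ref{thm:dir-undir}, and read off the resulting hardness.
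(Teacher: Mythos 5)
Your proposal matches the paper's own argument essentially verbatim: the paper also proves this theorem by taking the $\Omega(\dem^{1/4-\epsilon})$-hardness for the directed rooted $k$-connectivity problem (Theorem~\ref{thm:dir-rooted-terms}) and composing it with the Lando--Nutov approximation-preserving reduction of Theorem~\ref{thm:dir-undir}, which keeps $|T'|=|T|$ and raises the requirement to $k'=k+n\geq|T|$. Your additional check that the resulting instance lies in the regime $k\geq\dem$ is a correct and welcome bit of bookkeeping that the paper leaves implicit.
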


\section{Vertex-Connectivity Survivable Network Design}
\label{sec:vc-sndp}

In this section, we present hardness constructions of the
vertex-connectivity survivable network design problem on undirected
graphs. 

\subsection{Hardness in Terms of $k$}
\label{sec:vc-sndp:terms-k}

The hardness of the vertex-connectivity survivable network design
problem can be derived from its special case, the rooted
$k$-connectivity problem. However, by applying the reduction directly
from the minimum-cost label~cover problem, we have a better bound.

The following theorem is proved by Chakrabarty, Chuzhoy and Khanna in
\cite{CCK08}. 
\begin{theorem}[\cite{CCK08}]
\label{thm:vc-sndp}
There is a polynomial-time approximation preserving reduction
such that, given an instance of the minimum-cost label~cover problem
consisting of a graph $G$ with a set of labels $L_1$ and $L_2$,
outputs an instance of the rooted $k$-connectivity problem on directed
graphs with $k=O(\maxdeg(G)\cdot\max\{|L_1|,|L_2|\}+\maxdeg(G)^2)$. 
\end{theorem}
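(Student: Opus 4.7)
The plan is to adapt the gadget construction used for the rooted $k$-connectivity problem on undirected graphs (Section~\ref{sec:undir-rooted:term-k}), exploiting the fact that VC-SNDP permits distinct source-sink pairs rather than forcing every terminal to share a single global root. First I would set up the base construction from the label~cover instance $(G=(U,W,E),\{\pi_e\},L_1,L_2,c_1,c_2)$: for each $u_i\in U$ introduce a vertex $u_i$ and a set $A_i$ of $|L_1|$ label-vertices joined to $u_i$ by edges of cost $c_1$; symmetrically introduce $w_j$ and $B_j$ with edge-cost $c_2$; and for every arc $(u_i,w_j)\in E$ add the zero-cost consistency edges $\{a,b\}$ whenever $\pi_{u_i,w_j}(a)=b$.

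Next, for each arc $(u_i,w_j)\in E$, I would create a source-sink pair $(s_{i,j},t_{i,j})$ with requirement $\req(s_{i,j},t_{i,j})=k$; the source is attached to $u_i$ through a clique $X_{i,j}$ (analogous to the rooted construction) and the terminal $t_{i,j}$ is attached to $w_j$, so that the intended canonical $s_{i,j},t_{i,j}$-path has the form $s_{i,j},X_{i,j},u_i,A_i,B_j,w_j,t_{i,j}$. Because each demand pair has its own source, I do not need the elaborate padding used in Section~\ref{sec:undir-rooted:term-k} to forbid paths that leak through another terminal's gadget; the only non-canonical $s_{i,j},t_{i,j}$-paths to rule out are the "bypass" paths that traverse $A_{i'}$ or $B_{j'}$ for some other arc sharing an endpoint with $(u_i,w_j)$. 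Accordingly I would define the forbidden set
\[
Z_{i,j}=\Bigl(\bigcup_{i'\neq i:\,(u_{i'},w_j)\in E}A_{i'}\Bigr)\cup\Bigl(\bigcup_{j'\neq j:\,(u_i,w_{j'})\in E}B_{j'}\Bigr),
\]
of size $O(\maxdeg(G)\cdot\max\{|L_1|,|L_2|\})$, and add zero-cost edges from both $s_{i,j}$ and $t_{i,j}$ to every vertex in $Z_{i,j}$, as well as zero-cost edges to the $O(\maxdeg(G)^2)$ sources and sinks at distance one or two in the line graph, to force bypass routes to use many "padding" vertices.

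Setting $|X_{i,j}|=|Z_{i,j}|+1$ and $k=\max_{i,j}\bigl(|Z_{i,j}|+O(\maxdeg(G)^2)\bigr)+1=O(\maxdeg(G)\cdot\max\{|L_1|,|L_2|\}+\maxdeg(G)^2)$, and topping up each demand pair with a further set $Q_{i,j}$ of auxiliary vertices (joined by zero-cost edges to the source and the terminal) to equalize the requirements, yields the desired bound on $k$. For completeness, given a labeling covering all arcs, for each $u_i$ (resp.\ $w_j$) purchase only the single label assigned to it; the $k$ required $s_{i,j},t_{i,j}$-paths consist of one canonical path through the chosen labels together with $k-1$ paths through $Z_{i,j}\cup Y_{i,j}\cup Q_{i,j}$, and the cost matches the label~cover optimum up to the standard constant factor. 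For soundness, I would argue via Menger's theorem that any feasible solution of cost less than the soundness threshold must purchase, for a constant fraction of arcs $(u_i,w_j)$, label vertices $a\in A_i$ and $b\in B_j$ with $\pi_{u_i,w_j}(a)=b$, since after routing every non-canonical path through the padding, the remaining deficit of disjoint paths can only be met by genuine canonical paths; this yields a labeling covering a constant fraction of arcs of $G$, contradicting soundness.

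The main obstacle is the vertex-disjointness bookkeeping: one has to show that, because $|X_{i,j}|$ exceeds $|Z_{i,j}|$ by exactly one, at most one canonical path can exploit any particular pair $(a,b)$ of labels, and any attempt to reuse vertices of $Z_{i,j}\cup Y_{i,j}$ as intermediate nodes strictly reduces the number of disjoint $s_{i,j},t_{i,j}$-paths. Calibrating $|X_{i,j}|$, $|Q_{i,j}|$ and the padding attachments so that the canonical/bypass accounting is tight both in the yes- and no-instance simultaneously, while keeping $k$ down to $O(\maxdeg(G)\cdot\max\{|L_1|,|L_2|\}+\maxdeg(G)^2)$ rather than the cubic-in-$\maxdeg(G)$ bound of the rooted case, is the delicate step; the rest follows from the same flow-based counting arguments underlying the construction in~\cite{CCK08,CLNV12}.
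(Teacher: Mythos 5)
Your construction is essentially the paper's (i.e., Chakrabarty--Chuzhoy--Khanna's) reduction: the same base gadget, a private source--sink pair per arc $(u_i,w_j)$, padding by $Z_{i,j}$ together with the sources/sinks within line-graph distance two, auxiliary vertices $Q_{i,j}$ to equalize requirements, and the same bound $k=O(\maxdeg(G)\cdot\max\{|L_1|,|L_2|\}+\maxdeg(G)^2)$. The only deviation is your clique $X_{i,j}$ between $s_{i,j}$ and $u_i$ (sized $|Z_{i,j}|+1$), a redundant carry-over from the rooted undirected gadget --- since each pair has its own source, the paper simply attaches $s_{i,j}$ to $u_i$ by a single zero-cost edge --- and it is harmless exactly because you keep the neighboring cliques out of the padding sets, which is what avoids the cubic-in-$\maxdeg(G)$ blow-up of the rooted case.
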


We will present the hardness construction described in \cite{CCK08}
but will skip the proof for completeness and soundness.
(For more detail, see \cite{CCK08}.)
Take an instance $(G=(U,W,E),\{\pi_e:e\in E\},L_1,L_2,c_1,c_2)$ of the
minimum-cost label~cove problem.
We construct a graph $\hatG=(\hatV,\hatE)$ of the vertex-connectivity
survivable network design problem on undirected graphs as follows.

\medskip

\noindent {\bf Base Construction:} The base construction is the
same as that in Section~\ref{sec:dir-term-k} except that we ignore the
direction of edges. 

\medskip

\noindent {\bf Add Source-Sink Pairs:}
We will add to the undirected graph $\hatG$ {\em source-sink} pairs,
i.e., we add a pair of vertices whose connectivity requirement is
positive.  
For each arcs $(u_i,w_j)\in G$, we add to $\hatG$ a source $s_{i,j}$
and a sink $t_{i,j}$. 
We join each source $s_{i,j}$ to a vertex $u_i\in U$ by adding a
zero-cost edge $\{s_{i,j},u_i\}$, and we join each terminal $t_{i,j}$
to a vertex $w_j\in W$ by a zero-cost edge $\{w_j,t_{i,j}\}$. 

\medskip

\noindent {\bf Final Construction:} 
Now, we add some zero-cost edges, called {\em padding edges}, which
intuitively force $s_{i,j},t_{i,j}$-paths to be in a {\em canonical}
form. We say that an $r,t_{i,j}$-paths is a {\em canonical path} if it
is of the form $s_{i,j},u_i,A_i,B_j,w_j,t_{i,j}$. 
The padding for a source-sink pair $s_{i,j},t_{i,j}$ is as follows.
For notational convenience, we use $ij$ to means an arc $\{u_i,w_j\}$
of the label~cover instance $G$, and we use $\dist(ij,i'j')$ to mean
the distance between $ij$ and $i'j'$ in the line graph $H$ of 
the underlying undirected of $G$, i.e., the vertex set of $H$ is the 
edge set of $G$, and there is an edge $(e,e')$ in $H$ if edges $e$
and $e'$ share an endpoint in $G$. 
We define the set $Z_{i,j}$ and $Y_{i,j}$ as below. 

\begin{align*}
Z_{i,j} &= 
  \left(\bigcup_{i'\neq i:i'j\in E}A_{i'}\right)\cup
  \left(\bigcup_{j'\neq j:ij'\in E}B_{j'}\right)\\
Y_{i,j} &= \bigcup_{1\leq \dist(ij,i'j')\leq 2}\{s_{i',j'},t_{i',j'}\}\\
\end{align*}  

We join $s_{i,j}$ and $t_{i,j}$ to $Y_{i,j}\cup Z_{i,j}$ by adding
zero-cost edges $\{s_{i,j},x\}$ and $\{x,t_{i,j}\}$ for all
$x\in{Y_{i,j}\cup{Z_{i,j}}}$.
For the connectivity requirement, we set 
$\req(s_{i,j},t_{i,j})=|Y_{i,j}\cup{Z_{i,j}}|+1$ for all source-sink
pairs $s_{i,j},t_{i,j}$. 
We may make the requirements uniform by setting
$k=\max_{i,j}|Y_{i,j}\cup{Z_{i,j}}|+1$ and adding a set of auxiliary 
vertices $Q_{i,j}$ with $|Q_{i,j}|=k-|Y_{i,j}\cup{Z_{i,j}}|-1$ for
each source-sink pair $s_i,t_j$.

By the construction, we have 
\[
 |Z_{i,j}| + |Y_{i,j}| 
  \leq 2\maxdeg(G)\cdot\max\{|L_1|,|L_2|\} + 4\maxdeg(G)^2
\]
Therefore, we have an instance of the rooted $k$-connectivity
problem on undirected graphs with 
$k=O(\maxdeg(G)\cdot\max\{|L_1|,|L_2|\} + \maxdeg(G)^2)$
(given an instance in Theorem~\ref{thm:mincost-small-strong-lc}, we 
have $k=O(q^3)$), and the reduction preserves both the 
completeness and soundness. 
Here we skip the completeness and soundness proofs. 
For more detail, please see \cite{CCK08}.

\subsection{Hardness in Terms of $\dem$}
\label{sec:vc-sndp:terms-T}

The hardness in terms of $\dem$ for the vertex-connectivity
survivable network design problem on undirected graphs follows
immediately from that of its special case, the rooted $k$-connectivity
problem on undirected graphs.
Thus, we have 

\begin{theorem}\label{thm:vc-sndp:terms-T}
For $k\geq\dem$, it is NP-hard to approximate the vertex-connectivity
survivable network design problem on undirected graphs to within a
factor of $o(\dem^{1/4-\epsilon})$ for any constant $\epsilon>0$. 
\end{theorem}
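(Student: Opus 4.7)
The plan is to exhibit the rooted $k$-connectivity problem on undirected graphs as a special case of VC-SNDP via an approximation-preserving reduction that preserves the number of demand pairs, and then invoke Theorem~\ref{thm:undir-rooted:terms-T}.

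Concretely, given any instance $(G=(V,E), c, r, T, k)$ of the undirected rooted $k$-connectivity problem, I would define an instance of VC-SNDP on exactly the same graph $G$ with the same cost function $c$, setting $\req(r,t) = k$ for every terminal $t \in T$ and $\req(s,t) = 0$ for all other pairs $s,t \in V$. A feasible solution to this VC-SNDP instance is a subgraph containing $k$ openly disjoint paths from $r$ to each $t \in T$, which is precisely a feasible solution to the rooted $k$-connectivity instance, and the costs are identical. Hence the reduction is approximation-preserving, the terminal set is unchanged, and the number of demand pairs satisfies $\dem_{\mathrm{VC\text{-}SNDP}} = |T| = \dem_{\mathrm{rooted}}$, while the connectivity parameter $k$ is unchanged as well.

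Applying this reduction to the hard instances produced in Theorem~\ref{thm:undir-rooted:terms-T}, which satisfy $k \geq \dem$ and for which no $o(\dem^{1/4-\epsilon})$-approximation exists unless $\mathrm{NP}=\mathrm{ZPP}$, yields VC-SNDP instances with the same $\dem$ and the same (inapproximability) gap. Any $o(\dem^{1/4-\epsilon})$-approximation for VC-SNDP would, when composed with the reduction, give an $o(\dem^{1/4-\epsilon})$-approximation for the rooted $k$-connectivity problem on undirected graphs, contradicting Theorem~\ref{thm:undir-rooted:terms-T}.

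There is essentially no technical obstacle: the embedding is trivial once one notes that both the graph and $T$ are preserved. The only thing to verify carefully is that the reduction does not blow up $\dem$, and this is immediate because a source in the rooted instance is the single vertex $r$, so the set of demand pairs $\{(r,t) : t \in T\}$ maps bijectively to the positive-requirement pairs in the VC-SNDP instance. This completes the proof of Theorem~\ref{thm:vc-sndp:terms-T}.
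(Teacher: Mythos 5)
Your proposal is correct and matches the paper's argument: the paper also derives this theorem immediately from Theorem~\ref{thm:undir-rooted:terms-T} by viewing undirected rooted $k$-connectivity as the special case of VC-SNDP with $\req(r,t)=k$ for all $t\in T$ and zero requirements elsewhere, which preserves the graph, the costs, and the number of demand pairs. You have simply made explicit the reduction the paper leaves as an observation.
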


\section{Vertex-Connectivity $k$-Route Cut}
\label{sec:k-route-cut}

In this section, we discuss the vertex-connectivity $k$-route cut
problem. 

\subsection{Hardness in Terms of $k$}
\label{sec:k-route-terms-k}

The following theorem is proved in \cite{CMVZ12} by Chuzhoy~et~al. 

\begin{theorem}[\cite{CMVZ12}]
\label{thm:vc-k-route}
There is a polynomial-time approximation preserving reduction
such that, given an instance of the minimum-cost label~cover problem
consisting of a graph $G$ with a set of labels $L_1$ and $L_2$,
outputs an instance of the vertex-connectivity $k$-route cut problem
on undirected graphs with $k=O(\maxdeg(G)\cdot\max\{|L_1|,|L_2|\})$.
\end{theorem}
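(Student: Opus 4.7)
The plan is to closely follow the VC-SNDP reduction of Section~\ref{sec:vc-sndp:terms-k}, adapting it from the ``build a network'' setting to the ``cut a network'' setting. I would start with the same base gadget built from the label-cover instance: around each $u_i \in U$ place a set $A_i$ of label-vertices joined to $u_i$ by edges of cost $c_1$, and symmetrically $B_j$ around each $w_j \in W$ with edges of cost $c_2$; projection edges $\{a,b\}$ of cost $0$ are inserted whenever $\pi_{u_i,w_j}(a)=b$. For every arc $(u_i,w_j)$ of $G$ introduce a source--sink pair $(s_{i,j},t_{i,j})$ attached to $u_i$ and $w_j$ respectively by zero-cost edges.

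The key modification for the cut setting is to install padding so that every source--sink pair $(s_{i,j},t_{i,j})$ starts with exactly $k$ vertex-disjoint paths, of which $k-1$ are ``uncuttable'' (say, realized via sufficiently many parallel copies or edges of prohibitively high cost) and only one is the canonical path through $u_i\to A_i\to B_j\to w_j$. Re-using the padding sets $Y_{i,j}, Z_{i,j}$ from Section~\ref{sec:vc-sndp:terms-k} gives exactly the bound $|Y_{i,j}\cup Z_{i,j}| = O(\maxdeg(G)\cdot\max\{|L_1|,|L_2|\})$ (this is where the $\maxdeg(G)\cdot\max\{|L_1|,|L_2|\}$ term dominates), so setting $k = \max_{i,j}|Y_{i,j}\cup Z_{i,j}|+1$ yields $k = O(\maxdeg(G)\cdot\max\{|L_1|,|L_2|\})$ as claimed. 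Once the padding is in place, dropping the $(s_{i,j},t_{i,j})$-connectivity below $k$ forces the cut to kill the canonical path; and since the label-edges $\{u_i,a\}$ and $\{b,w_j\}$ are shared across every arc incident to $u_i$ (resp.~$w_j$), a cheap set of edge removals at $u_i$ is forced to agree with a single choice of a small subset of $A_i$, which is precisely a labeling of $u_i$.

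For completeness, given a good labeling $(f_1,f_2)$ I would take the cut that removes the label-edges lying outside $(f_1,f_2)$ at every super-vertex, verify (by tracing the canonical path through $\pi_{u_i,w_j}$) that the induced cut disconnects every pair below $k$ exactly when the labeling covers every arc, and bound the cut cost by $c(f_1,f_2)$ up to an additive constant. For soundness, given a cut of cost $\tau$ I would read off, from the un-cut label-edges at each super-vertex, a candidate labeling whose cost is at most $\tau$ and whose covering property is forced by the cut requirement on each pair. The main obstacle I anticipate is exactly this soundness step: because label-edge removals at $u_i$ are visible to every arc incident to $u_i$, an adversarial cut could try to trade off label choices across arcs in a way that corresponds to no single labeling. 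Ruling this out requires exploiting the structure of the padding $Y_{i,j}\cup Z_{i,j}$ to certify that no cheaper ``mixed'' cut exists, which parallels but inverts the Min-Rep analysis of~\cite{CCK08} and is the step I expect to need the most care.
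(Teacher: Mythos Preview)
Your plan has a genuine gap: the VC-SNDP base gadget does not transfer to the cut setting, and your completeness step is in fact inverted.

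In the star gadget you describe, the ``canonical'' region around $u_i$ consists of $|L_1|$ parallel label-edges $\{u_i,a\}$. Removing \emph{one} of them does not disconnect anything; to kill all $s_{i,j}$--$t_{i,j}$ paths through $u_i$ you would have to cut essentially all of $A_i$, at cost $\Theta(|L_1|)c_1$ per left vertex, which bears no relation to the label-cover cost. Moreover, the cut you propose for completeness (``remove the label-edges lying outside $(f_1,f_2)$'') leaves precisely the edge for the chosen label intact, so whenever $(f_1,f_2)$ covers the arc $(u_i,w_j)$ the canonical path $s_{i,j},u_i,f_1(u_i),f_2(w_j),w_j,t_{i,j}$ \emph{survives} and you still have $k$ openly disjoint paths. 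The semantics are reversed: in SNDP buying a label-edge corresponds to choosing a label, but in the cut problem cutting a label-edge must correspond to choosing a label, and the star shape forces the opposite.

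The paper's construction (following \cite{CMVZ12}) uses a completely different base gadget. Each label $a\in L_1$ (resp.\ $b\in L_2$) is encoded as a cost-$c_1$ (resp.\ $c_2$) \emph{edge} $\{a,a'\}$, and for each right vertex $w_j$ these edges are strung into a single path $P_j$ with infinite-cost connectors. For each arc $(u_i,w_j)$ the edges of $E(u_i)$ are strung into another path $Q_{i,j}$, but in an order grouped by the preimages $\pi_{u_i,w_j}^{-1}(b_1),\pi_{u_i,w_j}^{-1}(b_2),\ldots$, with an extra vertex $x_\ell$ at the boundary between consecutive blocks; crucially, $x_\ell$ is also joined to the corresponding boundary $b'_\ell,b_{\ell+1}$ of $P_j$. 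The padding then guarantees that, after removing $Z_{i,j}\cup S_{i,j}$, the only two $s_i,t_j$-paths are $P_j$ and $Q_{i,j}$. Cutting a single label-edge $\{b_\ell,b'_\ell\}$ from $P_j$ is now \emph{not} enough: a path can detour from $P_j$ into the $\pi^{-1}(b_\ell)$ block of $Q_{i,j}$ via the $x$-vertices and return. The detour is blocked exactly when you also cut some $\{a,a'\}$ with $a\in\pi_{u_i,w_j}^{-1}(b_\ell)$, i.e., when the removed labels at $u_i$ and $w_j$ satisfy the projection. This ``zigzag'' interleaving is the missing idea; it is what makes a single label-edge removal per vertex both sufficient and necessary, and it has no analogue in the star-shaped SNDP gadget you are trying to reuse.
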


We will give the hardness construction of this problem based on the
construction in \cite{CMVZ12}\footnote{
Due to a very subtle error in the proof in \cite{CMVZ12}, our
construction is slightly different from the original construction.
}.
Take an instance $(G=(U,W,E),\{\pi_e:e\in E\},L_1,L_2,c_1,c_2)$ of the
minimum-cost label~cove problem.
We construct a graph $\hatG=(\hatV,\hatE)$ of the vertex-connectivity
$k$-route cut problem as follows.

\medskip

\noindent {\bf Base Construction:} 
First, for each left vertex $u_i\in U$, we create a set of edges
$E(u_i)=\{\{a,a'\}: a \in L_1\}$. Similarly, 
for each right vertex $w_j\in W$, we create a set of edges 
$E(w_j)=\{\{b,b'\}: b \in L_2\}$. 
All the edges in $E(u_i)$'s have costs $c_1$, and all the edges in 
$E(w_j)$'s have costs $c_2$.

Next, for each right vertex $w_j\in W$, we arrange edges in $E(w_j)$
in an arbitrary order, say
$E(w_j)=\{\{b_1,b'_1\},\{b_2,b'_2\},\ldots,\{b_{|L_2|},b'_{|L_2|}\}\}$.
We then form a path $P_j$ by joining edges in $E(w_j)$ by
edges with costs infinity. 
To be precise, we have an edge $\{b_{\ell},b_{\ell+1}\}$ with cost
infinity in $\hatG$, for each $\ell=1,2,\ldots,|L_2|-1$. 
Because of the projection property, which we will discuss later, we
only create such paths for right vertices.

For each arc $(u_i,w_j)\in E$, we construct a path $Q_{i,j}$ as follows. 
For any label $b_\ell\in L_2$,  
let $\pi^{-1}(b_\ell)=(a_{q_1},a_{q_2},\ldots,a_{q_{\alpha(b_\ell)}})$
be a sequence of labels in $L_1$ that projects to $b_{\ell}$ arranged
in an arbitrary order. 
Since each label $a_{q_t}$ maps to an edge $\{a_{q_t},a'_{q_t}\}$ in
$E(w_j)$, we may abuse $\pi^{-1}(b)$ to mean edges in $E(u_i)$. 
We define an order $\psi(i,j)$ by arranging the edges of $E(u_i)$
to be of the form
\[
\pi^{-1}(b_1,)\pi^{-1}(b_2),\ldots,\pi^{-1}(b_{|L_2|})
\]
Then we form a path $Q_{i,j}$ by joining edges in $E(u_i)$
according to the order $\psi(i,j)$ by paths of length $2$.
That is,
\[
 Q_{i,j}=(a_1,a'_1,x_1,a_2,a'_2,x_2,\ldots,a_{|L_1|},a'_{|L_1|})
\]
where indices of $a_q$'s are obtained from $\psi(i,j)$. 
(This is crucial as edges in $E(u_i)$ may have different orders in
two different paths $Q_{i,j}$ and $Q_{i,j'}$.)
We denote by $X_{i,j}$ a set of vertices $x_q's$ in $Q_{i,j}$ separating
edges of $E(u_i)$. 
Next, for each edge $\{b_{\ell},b'_{\ell}\}$ and
$\{b_{\ell+1},b'_{\ell+1}\}$ in $E(w_j)$, we join  $b'_{\ell}$ and
$b_{\ell+1}$ to a vertex $x_{q_\ell}$, where $x_{q_{\ell}}$ is a
vertex in $X_{i,j}$ connecting the path on $\pi^{-1}(b_{\ell})$ to the
path on $\pi^{-1}(b_{\ell+1})$. These edges have costs infinity.
This completes the base construction. 

\medskip

\noindent {\bf Add Source-Sink Pairs:} 
For each vertex $u_i\in U$, we add a source vertex $s_i$, and for each
vertex $w_j\in W$, we add a sink vertex $t_j$. 
For each arc $(u_i,w_j)$ of $G$, we add a demand (source-sink) pair
$(s_i,t_j)$ to the set of demand pairs $\mathbb{D}$; then we join $s_i$
to the first vertices of the paths $Q_{i,j}$ and $P_j$, and we
join $t_j$ to the last vertices of the paths $Q_{i,j}$ and
$P_j$. All of these edges have costs infinity.
\medskip

\noindent {\bf Final Construction:} 
Now, we add {\em padding edges}, which will guarantee that to make the
vertex-connectivity of a source-sink pair $s_i,t_J$ to be below $k$,
ones have to remove edges corresponding to the feasible labeling of
the label~cover instance.
For the ease of presentation, we will use $ij$ to mean an arc
$(u_i,w_j)$ of $G$. 
Denote by $V(F)$ a set of vertices spanned by a set of edges (resp.,
a graph) $F$. 
We define a set of vertices $Z_{i,j}$ to be the set of neighbors of
$V(Q_{i,j})\cup{V(P_{j})}$ in the current graph; that is, 
\[
Z_{i,j} = \bigcup_{i'j\in E(G):i'\neq i}(V(Q_{i',j})\cup\{s_{i'}\}) \cup
         \bigcup_{ij'\in E(G):j'\neq j}
             (X_{i,j'}\cup{V(P_{j'})\cup\{t_{j'}\}})
\]
By the construction, we have
\[
z = \max_{ij\in E}|Z_{i,j}| \leq O(\maxdeg(G)(|L_1|+|L_2|)
\leq O(\maxdeg(G)\cdot\max\{|L_1|,|L_2|\}).
\]
The size $Z_{i,j}$ and $Z_{i',j'}$ for $ij\neq i'j'$ may be
different.
Thus, we add to $\hatG$ a set of auxiliary vertices $S_{i,j}$ with 
$|S_{i,j}|=z-|Z_{i,j}|$.
Then we add edges $\{s_i,v\},\{t_j,v\}$ with cost infinity joining $s_i$
and $t_j$ to each vertex $v\in Z_{i,j}\cup{S_{i,j}}$. 
Note that all the neighbors of 
$\{s_i,t_j\}\cup{V(Q_{i,j})\cup{V(P_{j})}}$ are in
$Z_{i,j}\cup{S_{i,j}}$.  
Finally, we set $k=z+1$, finishing the construction.

\medskip

\noindent{\bf Completeness and Soundness:} 
Both the completeness and soundness proofs follows from the next claim.

\begin{claim}\label{clm:remove-padding}
Consider any source-sink pair $(s_i,t_j)\in\mathbb{D}$.
There are at most $k+1$ openly disjoint $s_i,t_j$-paths in $\hatG$,
and there are at most $2$ openly disjoint $s_i,t_j$-paths in
$\hatG\setminus(Z_{i,j}\cup S_{i,j})$.
Moreover, any $2$ openly disjoint $s_i,t_j$-paths in
$\hatG\setminus(Z_{i,j}\cup S_{i,j})$ must have $Q_{i,j}$ and $P_j$ as
subpaths. 
\end{claim}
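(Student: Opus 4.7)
The plan is to establish the three assertions---at most $k+1$ openly disjoint paths in $\hatG$, at most $2$ in $\hatG\setminus(Z_{i,j}\cup S_{i,j})$, and a structural description of any such pair---via Menger's theorem together with the key observation, highlighted in the construction, that every neighbor of the set $H:=\{s_i,t_j\}\cup V(Q_{i,j})\cup V(P_j)$ lying outside $H$ belongs to $Z_{i,j}\cup S_{i,j}$.

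For the first bound I would exhibit an explicit $s_i,t_j$-vertex-cut of size exactly $k+1=z+2$ in $\hatG$, namely $Z_{i,j}\cup S_{i,j}$ augmented by the two vertices $Q_{i,j}^{\mathrm{end}}$ and $P_j^{\mathrm{end}}$ (the last vertices of $Q_{i,j}$ and $P_j$, i.e.\ the only neighbors of $t_j$ lying in $V(Q_{i,j})\cup V(P_j)$). Removing this cut isolates $t_j$ from $s_i$: by the boxed note together with the fact that $t_j$'s direct and padding edges all land in the cut, $t_j$ has no remaining neighbor outside the cut. Menger's theorem then yields at most $k+1$ openly disjoint paths.

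For the second bound, the same note implies that deleting $Z_{i,j}\cup S_{i,j}$ disconnects $H$ from the rest of $\hatG$, so every $s_i,t_j$-path in $\hatG\setminus(Z_{i,j}\cup S_{i,j})$ is confined to $H$. Within $H$ the vertex $t_j$ has exactly the two neighbors $Q_{i,j}^{\mathrm{end}}$ and $P_j^{\mathrm{end}}$, hence the two-vertex cut $\{Q_{i,j}^{\mathrm{end}},P_j^{\mathrm{end}}\}$ together with Menger's theorem caps the openly disjoint paths at $2$.

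For the structural characterization, fix two openly disjoint $s_i,t_j$-paths $\pi$ and $\pi'$ in $\hatG\setminus(Z_{i,j}\cup S_{i,j})$ and assume without loss of generality that $\pi$ ends with the edge $(Q_{i,j}^{\mathrm{end}},t_j)$ and $\pi'$ with $(P_j^{\mathrm{end}},t_j)$. I would trace each path backward from $t_j$, showing inductively that at every step the only choice compatible with open disjointness is to advance one more step backward along $Q_{i,j}$ (resp.\ along $P_j$). The crux is that inside $H$ the subgraphs spanned by $V(Q_{i,j})$ and by $V(P_j)$ communicate only through the separator vertices $X_{i,j}\subset V(Q_{i,j})$; consequently any detour of $\pi$ from $Q_{i,j}$ into $V(P_j)$ would have to re-enter $V(Q_{i,j})$ through another separator in $X_{i,j}$, and along the way it would either share a separator with $\pi'$ or pre-empt vertices of $P_j$ that $\pi'$ needs to reach $P_j^{\mathrm{end}}$, violating disjointness in either case. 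A symmetric argument forbids $\pi'$ from detouring into $V(Q_{i,j})$. Thus $\pi$ and $\pi'$ are forced into the canonical forms $s_i\to Q_{i,j}\to t_j$ and $s_i\to P_j\to t_j$, and $Q_{i,j}$ and $P_j$ appear as subpaths.

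The main obstacle is the third part: ruling out \emph{every} possible interleaving detour of the two paths through the separators $X_{i,j}$, especially because $s_i$ has several entry points into $V(Q_{i,j})$ (one for each $Q_{i,j'}$ with $ij'\in E$, all of whose first vertices lie in the shared label set $E(u_i)$). The argument requires a careful simultaneous bookkeeping of how the two paths progress along $Q_{i,j}$, along $P_j$, and through the separators, using the fact that each separator in $X_{i,j}$ is a cut-vertex between the $Q_{i,j}$- and $P_j$-sides that cannot be used by both paths. Once this boundary analysis is in hand, the first two assertions reduce to routine Menger-type counting.
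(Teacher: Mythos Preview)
Your plan matches the paper's approach: the driving observation is exactly the one you isolate, namely that every neighbor of $H=\{s_i,t_j\}\cup V(Q_{i,j})\cup V(P_j)$ outside $H$ lies in $Z_{i,j}\cup S_{i,j}$, and the first two bounds then fall out of Menger/degree counting just as you describe. The paper's own proof is in fact terser than your outline: for the third assertion it simply invokes the neighborhood observation and declares that ``the only way we can have $2$ openly disjoint $s_i,t_j$-paths \ldots\ is to follow the two paths $Q_{i,j}$ and $P_j$,'' without carrying out the trace-back through the separators $X_{i,j}$ or addressing the extra entry points of $s_i$ that you (rightly) flag as the real work.
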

\begin{proof}
First, observe that vertices in $Z_{i,j}\cup{S_{i,j}}$, $V(Q_{i,j})$
and $V(P_j)$ are pairwise disjoint by the construction.  
It can be seen that $Z_{i,j}\cup{S_{i,j}}$ gives $k-1$ paths between
$s_i$ and $t_j$, and the path $Q_{i,j}$ and $P_j$ give another two
paths.
Thus, we proved both the first and the second statements.
For the third statement, it follows by the construction 
that all the neighbors of $\{s_i,t_j\}\cup{V(Q_{i,j})}\cup{V(P_j)}$
are in $Z_{i,j}\cup{S_{i,j}}$, which means that the only way we can
have $2$ openly disjoint $s_i,t_j$-paths in
$\hatG\setminus(Z_{i,j}\cup S_{i,j})$ is to follow the two paths
$Q_{i,j}$ and $P_j$. Hence, the claim follows.
\end{proof}

It can be seen that there is a one-to-one mapping between edges with 
finite cost and the labels of the minimum-cost label~cover instance.
Thus, its suffices to show that, for any edges $E'\subseteq E$ with
finite cost, $G\setminus{E'}$ has no $k$ openly disjoint disjoint
$s_i,t_j$ paths for all $(s_i,t_j)\in\mathbb{D}$ if and only if the
labeling $(f_1,f_2)$ corresponding to $E'$ is feasible for the
minimum-cost label~cover instance.
To see this, consider the graph
$\hatG_{i,j}=\hatG\setminus(Z_{i,j}\cup{S_{i,j}})$ for any source-sink
pair $(s_i,t_j)\in\mathbb{D}$. 
By Claim~\ref{clm:remove-padding}, there are at most $2$ openly
disjoint $s_i,t_j$-paths, and these paths have to use $Q_{i,j}$ and
$P_j$. 
Now, consider the paths $Q_{i,j}$ and $P_j$.
Observe that if we remove an edge $\{b,b'\}$ and one edge
in $\pi^{-1}_{u_i,w_j}$, then the resulting graph has no
$s_i,t_j$-path.
Conversely, if we remove $\{b,b'\}$ but none of the edges
in $\pi^{-1}_{u_i,w_j}$, then we can have a path that goes zig-zag
between $Q_{i,j}$ and $P_j$ via an edge $\{x_{q},\hat{b}\}$, where
$\{\hat{b},\hat{b}'\}$ is an edge next to $\{b,b'\}$. 
Therefore, we conclude that there is no $k$ openly disjoint
$s_i,t_j$-paths in $\hatG_{i,j}$ if and only if we remove edges
corresponding to the labeling covering an edge $\{u_i,w_j\}$ of $G$. 
This completes the proof of Theorem~\ref{thm:vc-k-route}.

\qed

\subsection{Hardness in Terms of $\dem$}
\label{sec:vc-k-route:terms-k}

For the hardness in terms of the number of demand pairs, it can be
seen that for any pair of arcs $(u_i,w_j)$ and $(u_{i'},w_{j'})$ that
form an induced matching (i.e., $G$ has no arc joining $(u_i,w_j)$
and $(u_{i'},w_{j'})$), all the vertices and edges of subgraphs of 
$\hatG$ induced by  
$Z_{i,j}\cup{S_{i,j}}\cup{V(Q_{i,j})}\cup{V(P_j)}$ and 
$Z_{i',j'}\cup{S_{i',j'}}\cup{V(Q_{i',j'})}\cup{V(P_{j'})}$ are
disjoint by the construction.
Thus, two arcs $(u_i,w_j)$ and $(u_{i'},w_{j'})$ are {\em independent}
if and only if $(u_i,w_j)$ and $(u_{i'},w_{j'})$ form an induced
matching. 
By the same arguments as in the case of the rooted $k$-connectivity
problem on directed graphs, we have the following theorem.

\begin{theorem}\label{thm:vc-k-route-terms-D}
For $k\geq\dem$, unless $\NP=\ZPP$, it is hard to approximate the
vertex connectivity $k$-route cut problem to within a factor
of $o(\dem^{1/4-\epsilon})$ for any constant $\epsilon>0$.   
\end{theorem}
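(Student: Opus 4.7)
The plan is to mirror exactly the argument already carried out in Section~\ref{sec:dir-term-T} for the rooted $k$-connectivity problem on directed graphs, using the structural observation that precedes the statement, namely that arcs of $G$ forming an induced matching correspond to vertex-disjoint subgraphs in $\hatG$. Concretely, I would start from the instance produced by the reduction of Theorem~\ref{thm:vc-k-route}, which has one source--sink pair $(s_i,t_j)$ per arc $(u_i,w_j)\in E(G)$, and then collapse these $|E(G)|$ demand pairs into only $O(\maxdeg(G)^2)$ pairs using strong edge coloring of $G$.

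First I would strongly edge-color the arcs of $G$ with $\chi'_S(G)\leq 2\maxdeg(G)^2$ colors, and fix a color class $C$. By definition, any two arcs $(u_i,w_j),(u_{i'},w_{j'})\in C$ satisfy $i\neq i'$, $j\neq j'$, and $G$ contains neither $(u_i,w_{j'})$ nor $(u_{i'},w_j)$, so the vertex sets $Z_{i,j}\cup S_{i,j}\cup V(Q_{i,j})\cup V(P_j)$ and $Z_{i',j'}\cup S_{i',j'}\cup V(Q_{i',j'})\cup V(P_{j'})$ are disjoint, exactly as observed above the theorem statement. Next, for each color class $C$ I would introduce a new source $\sigma_C$ and new sink $\tau_C$, disconnect the original pairs $(s_i,t_j)$ with $(u_i,w_j)\in C$ from the demand set, and attach $\sigma_C$ (resp.\ $\tau_C$) to each such $s_i$ (resp.\ $t_j$) via a fresh internally disjoint infinite-cost path --- or equivalently identify them --- while assigning the single new pair $(\sigma_C,\tau_C)$ connectivity requirement $k|C|$. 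Because the per-arc subgraphs are pairwise vertex-disjoint within $C$, the $s_i,t_j$-flows live in separate components, so $\hatG\setminus E'$ admits $\geq k|C|$ openly disjoint $\sigma_C,\tau_C$-paths if and only if every original pair in $C$ still has $\geq k$ openly disjoint paths; cutting the new pair below $k|C|$ is thus equivalent to cutting at least one original pair below $k$. To unify the requirement across color classes I would set $k^{new}=k\cdot\max_C|C|$ and pad each new pair with $k^{new}-k|C|$ disjoint infinite-cost $\sigma_C,\tau_C$-paths; this costs nothing and cannot interact with finite-cost cuts.

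The resulting instance has $\dem\leq 2\maxdeg(G)^2$ demand pairs, and the approximation gap is preserved exactly, so plugging in Theorem~\ref{thm:mincost-small-strong-lc} (which supplies a minimum-cost label-cover instance with $\maxdeg(G)=\Theta(q)$ and hardness $q^{1/2-\epsilon}$ under $\NP\neq\ZPP$) yields hardness $q^{1/2-\epsilon}=\Omega(\dem^{1/4-\epsilon'})$ after a harmless rescaling of $\epsilon$, which is the claimed bound.

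The main obstacle I expect is handling the fact that a source $s_i$ is shared across every arc $(u_i,w_{j})$ out of $u_i$, and similarly for sinks $t_j$: such an $s_i$ typically participates in demand pairs lying in many different color classes, so one must be careful that merging within $C$ does not create a parasitic short $\sigma_C,\tau_{C'}$-route through an $s_i$ or $t_j$ that gets re-used by another color class. This is exactly why I would attach $\sigma_C$ and $\tau_C$ to the original $s_i,t_j$ by fresh infinite-cost paths (rather than literally identifying them, as was done in the rooted terminal case) and verify, using the induced-matching property, that any finite-cost $\sigma_C,\tau_C$-cut must still sever the encoded labeling for some arc $(u_i,w_j)\in C$; the argument is the same one already carried out in the directed rooted case, which is precisely what the paper invokes when it says ``by the same arguments''.
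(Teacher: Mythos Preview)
Your plan---strongly edge-color $G$, merge the demand pairs within each color class, then invoke Theorem~\ref{thm:mincost-small-strong-lc}---is exactly the paper's; the paper says only ``by the same arguments as in the case of the rooted $k$-connectivity problem on directed graphs'' after recording the disjointness observation, so at the level of strategy you are fully aligned, and your remark about attaching $\sigma_C,\tau_C$ through fresh infinite-cost gadgets rather than literal identification is a sensible precaution the paper does not spell out.

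There is, however, a real gap in your threshold and equivalence claim. In the rooted directed reduction the argument works because each terminal $t_{i,j}$ has indegree exactly $k$, so every individual $r,t_{i,j}$-flow is capped at $k$ and ``merged flow $\ge k|T_C|$'' forces each summand to equal $k$. In the $k$-route-cut construction of Section~\ref{sec:k-route-terms-k} the cap is $k+1$: each pair $(s_i,t_j)$ carries $k-1$ indestructible length-two paths through $Z_{i,j}\cup S_{i,j}$ plus $n_\ell\in\{0,1,2\}$ destructible core paths. Assuming the per-arc subgraphs are disjoint, the merged pair has $(k-1)|C|+\sum_\ell n_\ell$ real paths, and with your padding $k^{new}-k|C|$ the condition ``below $k^{new}$'' becomes $\sum_\ell n_\ell<|C|$. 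This neither implies nor is implied by ``every $n_\ell=0$''; in particular a feasible cut for the merged instance only guarantees that \emph{some} arc in each color class is covered by the induced labeling, not all of them, so soundness is lost (covering one arc per class costs only $O(\maxdeg(G)^2)(c_1+c_2)$, which collapses the gap). Your own sentence ``cutting the new pair below $k|C|$ is thus equivalent to cutting at least one original pair below $k$'' already signals the problem: for an approximation-preserving reduction you need ``every'', not ``at least one''. The fix is to tighten the threshold by one unit of slack per arc: take $k^{new}=(k-1)\max_C|C|+1$ and pad by $(k-1)(\max_C|C|-|C|)$ infinite-cost paths, so that ``merged below $k^{new}$'' becomes $\sum_\ell n_\ell<1$, i.e.\ every original pair in $C$ is cut.
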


\end{document}